\documentclass[11pt]{article}
\usepackage{amsthm,amsgen,amsmath,amstext,amsbsy,amsopn,amssymb,latexsym,mathrsfs}
\usepackage{array}
\usepackage{cite}
\usepackage{multirow}
\usepackage{graphicx}
\usepackage{amssymb,bm,bbm}
\usepackage{titling}
\usepackage{url,bm}
\usepackage{algpseudocode, float, algorithm}
\usepackage[round]{natbib}
\usepackage{bbm}
\usepackage{tabularx}
\usepackage{authblk}
\usepackage{hyperref}
\usepackage{enumerate}
\usepackage{subcaption}
\hypersetup{
    colorlinks=true,
    linkcolor=blue,
    filecolor=blue,
    urlcolor=blue,
    citecolor=blue,
}

 \usepackage[dvipsnames,table,dvipsnames*, svgnames*]{xcolor}

\allowdisplaybreaks

\DeclareMathOperator{\Cov}{\mathrm{Cov}}

\newcommand{\red }{\color{red}}

\newcommand{\beq}{\begin{equation}}
\newcommand{\eeq}{\end{equation}}
\newcommand{\beas}{\begin{eqnarray*}}
\newcommand{\eeas}{\end{eqnarray*}}
\newcommand{\bea}{\begin{eqnarray}}
\newcommand{\eea}{\end{eqnarray}}
\newcommand{\bei}{\begin{itemize}}
\newcommand{\eei}{\end{itemize}}
\newcommand{\ben}{\begin{enumerate}}
\newcommand{\een}{\end{enumerate}}
\newcommand{\argmin}{\mathop{\rm arg\min}}
\newcommand{\argmax}{\mathop{\rm arg\max}}

\newcommand{\norm}[1]{\left\|#1\right\|}
\newcommand{\abs}[1]{\left|#1\right|}

\newtheorem{Corollary}{Corollary}

\newtheorem{Lemma}{Lemma}

\newtheorem{Theorem}{Theorem}

\newtheorem{Remark}{Remark}

\newtheorem{Assumption}{Assumption}

\newcommand{\R}{\mathbb{R}}
\newcommand{\E}{{\mathbb{E}}}
\newcommand{\Prob}{{\mathbb{P}}}
\newcommand{\1}{{\mathbbm{1}}}

\newcommand{\supp}{{\rm supp}}

\newcommand{\normm}[1]{{\left\vert\kern-0.25ex\left\vert\kern-0.25ex\left\vert #1 
    \right\vert\kern-0.25ex\right\vert\kern-0.25ex\right\vert}}

\addtolength{\textwidth}{1in}
\addtolength{\oddsidemargin}{-0.5in}
\addtolength{\textheight}{1in}
\addtolength{\topmargin}{-0.55in}
\parskip .1cm

\usepackage{tikz}
\newcommand{\solidcirc}{\tikz\draw[black,fill=black] (0,0) circle (.5ex);}
\newcommand{\hollowcirc}{\tikz\draw[black] (0,0) circle (.5ex);}

\parskip .1cm

\graphicspath{{Figures/}}

\begin{document}
\part*{}
\title{Repro Samples Method for 
Model-Free Inference in  High-Dimensional Binary Classification}
\author{}
\date{ }
\maketitle

\vspace{-30mm}

\centerline{\large Xiaotian Hou, Peng Wang, Minge Xie, and Linjun Zhang\footnote[1]{Xiaotian Hou is graduate student, Minge Xie is Distinguished Professor and Linjun Zhang is Associate Professor, Department of Statistics, Rutgers, The State University of New Jersey, Piscataway, NJ 08854. Peng Wang is Associate Professor, Department of Operations, Business Analytics, and Information Systems, University of Cincinnati, Cincinnati, OH 45221.  
The research is supported in part by 
NSF-DMS2311064, NSF-DMS2319260, NSF-DMS2515766, NSF-DMS2340241 and NIH-1R01GM157610.}}


\begin{abstract}

    This paper presents a novel method for statistical inference in high-dimensional binary models with unspecified structure, where we leverage a (potentially misspecified) sparsity-constrained working generalized linear model (GLM) to facilitate the inference process. Our method is based on the repro samples framework, which generates artificial samples that mimic the actual data-generating process. Our inference targets include the model support, case probabilities, and the oracle regression coefficients defined in the working GLM. The proposed method has three major advantages. First, this approach is model-free, that is, it does not rely on specific model assumptions such as logistic or probit regression, nor does it require sparsity assumptions on the underlying model. Second, for model support, we construct a model candidate set for the most influential covariates that achieves guaranteed coverage under a weak signal strength assumption. Third, for oracle regression coefficients, we establish confidence sets for any group of linear combinations of regression coefficients. Simulation results demonstrate that the proposed method produces valid and small model candidate sets. It also achieves better coverage for regression coefficients than the state-of-the-art debiasing methods when the working model is the actual model that generates the sample data. Additionally, we analyze single-cell RNA-seq data on the immune response. 
    Besides identifying genes previously proven as relevant in the literature, our method also discovers a significant gene that has not been studied before, revealing a potential new direction in understanding cellular immune response mechanisms.

\end{abstract}

\newpage

\section{Introduction}

High-dimensional data with binary outcomes are ubiquitous in modern scientific research, including fields such as genomics, epidemiology, and finance. In these settings, reliable statistical inference is crucial for understanding the relationship between the covariates and the binary response. Consequently, simple parametric models are often favored over nonparametric or machine learning approaches because of their interpretability \citep{rudin2019stop}. 
However, classical high-dimensional parametric inference methods often rely on strong modeling assumptions that may not hold in practice. Most existing methods either assume the true underlying models of the data are parametric with sparse parameters \citep[e.g.,][]{shi2019linear, cai2021statistical}, or define target parameters as minimizers of certain population risks while assuming these minimizers are sparse \citep[e.g.,][]{van2014asymptotically, zhang2017simultaneous}. The parametric model specifications, such as logistic or probit regression, may oversimplify the underlying complex relationships. In addition, the sparsity assumptions regarding the impact of covariates on the response may be violated when many features carry weak but collectively important effects. 

To mitigate the reliance on such assumptions, we propose conducting inference based on a sparsity-constrained working generalized linear model (GLM). Notably, we do not impose any structural assumptions on the true underlying distribution of the data, nor do we require the underlying true model to be sparse. Instead, we specify a sparsity level $s$ and select a subset of the covariates with size at most $s$ that best reconstruct the binary response. We then study the optimal GLM using the selected covariates with optimal response-reconstruction performance.  Although the resulting sparse GLM may be misspecified, its coefficients can still capture the relationship between the most influential covariates and the binary response. Our goal is to make inferences on both the model support of these most influential covariates and the corresponding GLM coefficients.

Our inference method builds upon the repro samples framework and extends the work of \cite{wang2022finite} on high-dimensional Gaussian linear regression models to the setting of misspecified sparse GLMs. Our work differs from \cite{wang2022finite} in several aspects. First, we allow the working sparse GLM to be misspecified and impose no structural assumptions on the underlying true distribution, whereas \cite{wang2022finite} assumes a well-specified sparse Gaussian linear regression model. Second, unlike linear regression, our focus is on binary responses, where the information in the true mean model is highly compressed, making finite-sample recovery of the mean model significantly more challenging than in the setting considered by \cite{wang2022finite}. Third, under the high dimensional linear regression model setting of \cite{wang2022finite} especially in the case with Gaussian noise, we can use sufficient statistics to get rid of the nuisance parameters and construct finite-sample pivot statistics for inference. In contrast, such pivot statistics are unavailable in our setting. Instead, we use asymptotic approximations to characterize the distribution of test statistics and employ a profiling method to account for nuisance parameters. 

A key step of our method is to search for a relatively small set of candidate models that include the support of the most influential covariates with high probability. This can be done using an inversion method, leveraging the discreteness of the model space. Here, the inversion technique, developed under a frequentist setting, can be traced back to R.A. Fisher's Fiducial inversion technique. 
After given the set of candidate models, a Wald test can be applied to each model to conduct inference on the regression coefficients. Furthermore, in the cases where the working sparse GLM is the actual model of our sample, we use the following Monte-Carlo inversion approach to construct a confidence set for the model support: for each candidate model, we generate artificial samples using that model, then compare the summary statistics computed from the artificial data to those computed from the observed data. If these two statistics differ substantially, we reject that candidate model. We provide rigorous theoretical guarantees to support the validity of our procedure.

Our contributions are as follows:
\begin{enumerate}[1)]
    \item We propose a novel formulation for statistical inference under arbitrary binary response distributions in high-dimensional settings. Importantly, we make no assumptions about the correctness of the specified mean model or the sparsity of the optimal GLM. To the best of our knowledge, this is the first inference framework in such a model-free setting.
    \item We introduce a novel method for constructing a model candidate set that provably contains the model support of the most influential covariates with high probability, provided the model under consideration has a certain separation from its (arbitrary) alternatives. Here, we only require a weak signal strength assumption to identify the model under consideration.  
    \item Building upon the model candidate set, we develop a comprehensive approach that allows for inference on any group of linear combinations of regression coefficients. 
    This general result also enables us to efficiently infer nonlinear transformations of the regression coefficients, such as the working case probabilities for a set of new observations.
    Existing works in the literature only focus on inferring a constrained group of linear combinations of regression coefficients with either a well-specified model or sparse regression coefficients, e.g., see \cite{van2014asymptotically,zhang2017simultaneous,shi2019linear}.
    \item In the special case where the sparse GLM is the actual model of the sample, we further construct a confidence set for the model support with a desired confidence level. 
    To the best of our knowledge, this is the first approach for constructing model confidence sets in high-dimensional GLMs. 
    
\end{enumerate}

\subsection{Related works}

There is a large body of literature on high-dimensional inference for GLMs, such as \cite{van2014asymptotically, dezeure2015high, belloni2016post, chernozhukov2018double, ning2017general, shi2019linear, sur2019modern, ma2021global, cai2021statistical, shi2021statistical, fei2021estimation}. However, these methods typically rely on a well-specified sparse GLM or optimized sparse GLM. Such simplified models may fail to capture the complexity of many real-world data, limiting the applicability of these methods.

More recently, a number of studies have investigated statistical inference for high-dimensional GLMs that are either misspecified or dense. 
For instance, \cite{buhlmann2015high} studies misspecified linear models and applies the debiased Lasso estimator to construct valid inference for the best projected regression parameters.
\cite{zhu2018linear} proposes a hypothesis testing method for high-dimensional linear models without assuming sparsity on model parameters or the vector representing the linear hypothesis, as long as the synthesized and stabilized features obey a sparse linear structure.
\cite{shah2023double} explores the double-estimation-friendly property in testing the conditional independence between the response and a target covariate given others in GLMs, and discovers that the Wald test remains valid if either the GLM or a linear model of the target covariate on the others is correctly specified.
\cite{chen2023testing} studies the hypothesis testing of dense high-dimensional parameters in GLM with sparse high-dimensional nuisance parameters and develops a computationally efficient test with a closed-form limiting distribution.
\cite{hong2024inference} proposes a dimension-reduced generalized likelihood ratio test for high-dimensional GLMs with well-specified sparse mean functions but misspecified variance functions and nonpolynomial-dimensional nuisance parameters. 
Despite these advances, all of the aforementioned methods still require either a well-specified linear or GLM model or a sparse M-estimation model. These constraints limit their practical applicability to complex real-world problems.

When a model is well-specified and the sample data are generated from the model, it is also of interest to quantify the uncertainty and make inferences for the model support, a task that we can do. This inference problem is more difficult than coefficient inference due to the discrete nature of the model space.
While there are several works to construct model confidence sets, most of them are limited to low-dimensional settings with $p < n$. For instance, 
\cite{hansen2011model} constructs the model confidence set by a sequence of equivalence tests and eliminations. Specifically, starting from a set of candidate models, they eliminate models based on pairwise equivalence tests until only statistically equivalent models remain. \cite{ferrari2015confidence} constructs the variable selection confidence set for linear regression based on $F$-testing, comparing each sub-model against a pre-specified full model and retaining the accepted ones. \cite{zheng2019model} extends the linear regression models in \cite{ferrari2015confidence} to general models by comparing the sub-models to the full model using the likelihood ratio test. \cite{li2019model} introduces model confidence bounds as two nested models such that the true model is between them with a specified confidence level. This is achieved by bootstrapping model selection and choosing the model confidence bounds that meet the desired coverage on the bootstrap models. The work of \cite{hansen2011model,ferrari2015confidence,zheng2019model} requires either the dimension of the data to be less than the sample size, or a variable screening procedure with sure screening properties and thus a uniform signal strength condition. The work of \cite{li2019model} relies on a consistent model selection procedure where uniform signal strength is again necessary. Our proposed method does not have these constraints, and it directly applies to high-dimensional models with $p \gg n$.  

A very recent work by \cite{wang2022finite} uses the repro samples method proposed in \cite{xie2022repro} to address the statistical inference for both regression coefficients and model support in a high-dimensional Gaussian linear regression model with finite-sample coverage guarantee. Their artificial-sample-based method mimics the data-generating process by sampling from the known noise distribution to generate synthetic responses. If one had access to the exact noise realization used to generate the observed data, one could calculate all the possible values of the parameters that are able to generate the observed data using the noise, and then the uncertainty of identifying the parameters merely comes from the inversion of the data-generating process. However, the data-generating noise is unobservable, the repro samples method then incorporates both the uncertainty of the inversion of the data-generating process and the uncertainty of the random noise to construct a confidence set for the parameters. Our approach also builds upon the repro sample framework to conduct inference. However, \cite{wang2022finite} focuses on the much easier setting of well-specified Gaussian regression models, where we can use sufficient statistics to get rid of the nuisance parameters. Their method cannot be extended to the setting of misspecified~GLMs.

\section{Notations, Model setup and Model definition}
\label{sec_preliminaries}

\subsection{Notation}\label{sec_notation}
For any $p\in\mathbb{N}_+$, we denote $[p]$ to be the set $\{1,\ldots,p\}$. For a vector $v\in\R^p$ and a subset of indexes $\tau\subset[p]$, we denote $v_\tau$ to be the sub-vector of $v$ with indexes in $\tau$, denote $\norm{v}_k=(\sum_{j\in[p]}\abs{v_j}^k)^{1/k}$ for $k\ge 0$ with $\norm{v}_0=\sum_{j\in[p]}\1\{v_j\ne 0\}$ to be the number of nonzero elements in $v$ and $\norm{v}_\infty=\max_{j\in[p]}\abs{v_j}$. We also denote $\abs{\tau}=\sum_{j\in[p]}\1\{j\in\tau\}$ to be the cardinality of $\tau$. For matrix $A\in\R^{q\times p}$ and $\tau\subset[p]$, we denote $A_{\cdot,\tau}$ to be a submatrix of $A$ consisting of all the columns of $A$ with column indexes in $\tau$ and $\norm{A}_{\rm op}=\sup_{a\in\R^q,b\in\R^p}a^\top Ab$ is the operator norm of $A$. For a symmetric matrix $A$, $\lambda_{\min}(A)$ and $\lambda_{\max}(A)$ denote respectively the smallest and largest eigenvalues of $A$. We use $c$ and $C$ to denote absolute positive constants that may vary from place to place. For two positive sequences $a_n$ and $b_n$, $a_n\lesssim b_n$ means $a_n\le Cb_n$ for all $n$ and $a_n\gtrsim b_n$ if $b_n\lesssim a_n$ and $a_n\asymp b_n$ if $a_n\lesssim b_n$ and $b_n\lesssim a_n$, and $a_n\ll b_n$ if $\lim\sup_{n\rightarrow\infty}\frac{a_n}{b_n}=0$ and $a_n\gg b_n$ if $b_n\ll a_n$.

\subsection{Model set-up}\label{sec_model}
In this work, we consider the regression problem with a binary response based on the independent observations $\{(X_i,y_i):i\in[n]\}$ generated from the distribution $P_{X,Y}$ where
\[\Prob(Y=1|X)=1-\Prob(Y=0|X)=\mu(X),\quad X\sim P_X,\]
with $X\in\R^p$, $Y\in\{0,1\}$.  Here, the form of mean function $\mu(\cdot)$ is unknown to us.   This model can equivalently be expressed in the form of a data-generating model
\begin{equation}\label{eq_eta_generating}
    Y=\1\{\mu(X)>U\},
\end{equation}
where $U\sim {\rm Unif}(0,1)$ is independent of $X$.

Since we do not assume the mean function $\mu(X)$ to be sparse, it is infeasible to estimate $\mu$ accurately in the high-dimensional setting where $p\gg n$. To extract meaningful information from the data and also utilize existing algorithms in well-established sparse model literature, we instead fit a working $s$-sparse generalized linear model (GLM) of the form $g^{-1}(X_{\tau}^\top\bm{\beta}_{\tau})$ to approximate $\mu(X)$. Here, $g:[0,1]\rightarrow\R$ is a known, increasing link function satisfying $g(\frac{1}{2})=0$, $s\in[p]$ is a user-specified sparsity level, the model support $\tau\subset[p]$ with $|\tau|\le s$ aims to select the most influential covariates for the response $Y$,  and the regression coefficients $\bm{\beta}_\tau$ measures the impact of the selected covariates in the GLM. The choice of user-specified $s$ will be further discussed in Section \ref{sec_simulation}. 

To formalize the proposed working model, we define the population-level target parameters $(\tau_0,\bm{\beta}_{0,\tau_0})$ as follows:
\begin{enumerate}[1)]
    \item We define $\tau_0$ as the best $s$-sparse models for recovering $Y$ from $X$, i.e.,
    \begin{equation}\label{eq_tau0}
        \tau_0\in\argmin_{\tau\subset[p],|\tau|\le s}\inf_{\bm{\beta}_\tau\in\R^{|\tau|}}\Prob\bigg(Y\ne \1\bigg\{g^{-1}(X_{\tau}^\top\bm{\beta}_{\tau})>\frac{1}{2}\bigg\}\bigg)\bigg\}.
    \end{equation}
    \item Given $\tau_0$, we define $\bm{\beta}_{0,\tau_0}$ as the best $|\tau_0|$-dimensional coefficients for approximating the conditional distribution $P_{Y|X_{\tau_0}}$ in terms of Kullback-Leibler divergence, i.e.,
    \begin{equation}\label{eq_beta0}
    \bm{\beta}_{0,\tau_0}=\argmax_{\bm{\beta}_{\tau_0}\in\R^{|\tau_0|}}\E l(\tau_0,\bm{\beta}_{\tau_0}|X,Y),
    \end{equation}
    with $l$ to be the log-likelihood of the working GLM,
    \[l(\tau,\bm{\beta}_\tau|X,Y)=Y\log\frac{g^{-1}(X_{\tau}^\top\bm{\beta}_{\tau})}{1-g^{-1}(X_\tau^\top\bm{\beta}_{\tau})}+\log\big(1-g^{-1}(X_{\tau}^\top\bm{\beta}_{\tau})\big).\]
\end{enumerate}
Throughout the paper, we assume $\bm\theta_0=(\tau_0,\bm\beta_{0,\tau_0})$ is uniquely defined.
The simple structure of the sparsity-constrained GLM enables statistical inference for $\bm{\theta}_0=(\tau_0,\bm{\beta}_{0,\tau_0})$, including both the model support $\tau_0$ and the linear coefficients $\bm{\beta}_{0,\tau_0}$. For notational convenience, we also extend $\bm\beta_{0,\tau_0}$ to a full vector $\bm\beta_0\in\R^p$ by setting all its components outside $\tau_0$ to zero. Beyond its interpretability, we also establish in Lemma \ref{lem_prediction} of Section \ref{sec_proof} that the sparsity-constrained GLM achieves favorable prediction performance.

It is worth emphasizing that we make no assumptions on either the true mean function $\mu(X)$ or the often-required sparsity of an underlying model, 
in contrast to much of the existing high-dimensional literature \citep{van2014asymptotically,zhang2017simultaneous,shi2019linear}. Instead, we focus on the optimal GLM defined over a small subset of the most informative covariates $X_{\tau_0}$, which is more realistic and practical. In Lemma \ref{lem_tau0} of Section \ref{sec_proof}, we show that: 1) if $\mu(X)$ is indeed an $s$-sparse GLM, the model support $\tau_0$ defined in \eqref{eq_tau0} recovers the true support of $\mu(X)$, 2) if $\mu(X)$ is dense but well-approximated by an $s$-sparse GLM $\tilde \mu(X)$, then under a mild signal strength condition, $\tau_0$ still equals the support of $\tilde\mu(X)$. Although the sparsity $s$ in \eqref{eq_tau0} is user-specified, practically, we will choose it in a data-driven manner, see Section \ref{sec_numerical} for details.

\begin{Remark}\label{rem_zero_coefficient}
    If the sparse GLM is correctly specified, $\bm\beta_0$ becomes the regression coefficients in the GLM using all covariates $X$. In this case, $\beta_{0,j}=0$ for $j\not\in\tau_0$ implies that $X_j$ has no direct impact on $Y$. However, under the misspecified working model considered in this work, $\bm\beta_{0,\tau_0}$ is the optimal GLM coefficient based on the subset of covariates $X_{\tau_0}$. In this setting, the working model coefficient $\beta_{0,j}=0,j\not\in\tau_0$ merely indicates that $X_j$ contributes less to recovering $Y$ relative to those included in $X_{\tau_0}$, and does not imply a lack of association.
\end{Remark}

Recall that we use $g^{-1}(X_{\tau_0}\bm{\beta}_{0,\tau_0})$ as a working approximation to the true mean function $\mu(X)$. If we define the approximation residual as
\[\Delta(X)=\mu(X)-g^{-1}(X_{\tau_0}^\top\bm{\beta}_{0,\tau_0}),\]
and let
\[\epsilon=-g\big(U-\Delta(X)\big),\]
then the data-generating model \eqref{eq_eta_generating} can be equivalently expressed as
\begin{equation}\label{eq_generating_model_linear}
    Y=\1\big\{X_{\tau_0}^\top\bm{\beta}_{0,\tau_0}+\epsilon>0\big\}.
\end{equation}

To highlight the observed data and its correspondence with the working noise terms $\epsilon_i=-g(u_i-\Delta(X_i))$ for $ i\in[n]$, we use $\{(X_i^{obs}, y_i^{obs}, u_i^{rel}, \epsilon_i^{rel}):i\in[n]\}$ to denote the oracle data, which consists of the \textit{observed data} and the corresponding \textit{realizations} of the data-generating $u_i^{rel}$ and working noise 
$\epsilon_i^{rel}= -g(u_i^{rel}- \mu(X_i^{obs})+ g^{-1}({(X_{i,\tau_0}^{obs})}^\top\bm{\beta}_{0,\tau_0}))$. 
Denote $\bm{X}=(X_1, \ldots, X_n)^\top$, $\bm{X}^{obs}=(X_1^{obs}, \ldots, X_n^{obs})$, $\bm{y}=(y_1, \ldots, y_n)^\top$, $\bm{y}^{obs}=(y_1^{obs}, \ldots, y_n^{obs})^\top$, $\bm{u}=(u_1,\ldots,u_n)^\top$, $\bm{u}^{rel}=(u_1^{rel},\ldots,u_n^{rel})^\top$, $\bm{\epsilon}=(\epsilon_1, \ldots, \epsilon_n)^\top$, $\bm{\epsilon}^{rel}=(\epsilon_1^{rel}, \ldots, \epsilon_n^{rel})^\top$. Throughout the paper, we use $\bm{X}$, $\bm{y}$, $\bm{u}$, and $\bm{\epsilon}$ to denote the random copy of data and corresponding random noises, respectively. We use $\bm{X}^{obs}$, $\bm{y}^{obs}$, $\bm{u}^{rel}$, and $\bm{\epsilon}^{rel}$ when the observed data is treated as given (or realized). 

\subsection{Repro samples method}\label{sec_repro}

In this subsection, we briefly review the general repro samples framework for statistical inference proposed by \cite{xie2022repro}. This artificial-sample-based method 
can be applied to
construct confidence regions for a variety of parameters that take values in either continuum or discrete sets. Assume we observe $n$ samples $\bm{y}^{obs}=\{y_1^{obs},\ldots,y_n^{obs}\}$ from the population $\bm{Y}=G(\bm{U},\bm{\theta}_0)$, where $\bm{U}\in\mathcal{U}$ is a random vector from a known distribution $P_U$, $\bm{\theta}_0\in\Theta$ is the unknown model parameter and $G: \mathcal{U}\times\Theta\rightarrow \R^n$ is a known mapping. The observed data $\bm{y}^{obs}$ satisfies $\bm{y}^{obs}=G(\bm{u}^{rel},\bm{\theta}_0)$ where $\bm{u}^{rel}\in\mathcal{U}$ is a realization of the random vector $\bm{U}$. 

The repro samples method makes inference for the parameter $\bm{\theta}_0$ by mimicking the data-generating process. Intuitively, if we have observed $\bm{u}^{rel}$, then for any parameter $\bm{\theta}\in\Theta$, we generate an artificial data $\bm{y}'=G(\bm{u}^{rel},\bm{\theta})$. If the artificial data matches the observed samples, i.e., $\bm{y}' =\bm{y}^{obs}$, then $\bm{\theta}$ is a potential value of $\bm{\theta}_0$ and if there is any ambiguity, it
comes only from the inversion of $G(\bm{u}^{rel},\cdot)$. However, the data-generating noises $\bm{u}^{rel}$ are unobserved, so we need to incorporate their uncertainty for which we do by considering a Borel set $B_\alpha$ with $\Prob(\bm{U}\in B_\alpha)\ge\alpha$. For any $\bm{u}^*\in B_\alpha$ and $\bm{\theta}\in\Theta$, we create an artificial data $\bm{y}^*=G(\bm{u}^{*},\bm{\theta})$ called repro sample. We keep $\bm{\theta}$ as a potential value of $\bm{\theta}_0$ if $\bm{y}^* = \bm{y}^{obs}$. All the retained values of $\bm{\theta}$ form a level-$\alpha$ confidence set for $\bm{\theta}_0$. Therefore, the total uncertainty of the confidence region comes from both the possible ambiguity of the inversion of $G(\bm{u}^{rel},\cdot)$ and the uncertainty of the unobservability of $\bm{u}^{rel}$. Note that throughout the paper, we use $\alpha$ instead of $1-\alpha$ to denote the confidence level. For example, $\alpha = .90, .95$, or $.99$.

More generally, we consider a Borel set $B_\alpha(\bm{\theta})$ with $\Prob(T(\bm{U},\bm{\theta})\in B_\alpha(\bm{\theta}))\ge\alpha$. Then define the confidence region of $\bm{\theta}_0$ as
\[\Gamma_\alpha^{\bm{\theta}_0}(\bm{y}^{obs})=\{\bm{\theta}:\exists \bm{u}^*~{\rm s.t.}~\bm{y}^{obs}=G(\bm{u}^*,\bm{\theta}), T(\bm{u}^*,\bm{\theta})\in B_\alpha(\bm{\theta})\}.\]
It follows
\[\Prob(\bm{\theta}_0\in\Gamma_\alpha^{\bm{\theta}_0}(\bm{Y}))\ge\Prob\big(T(\bm{U},\bm{\theta}_0)\in B_\alpha(\bm{\theta}_0)\big)\ge\alpha.\]
Here $T:\mathcal{U}\times\Theta\rightarrow \R^d$ is called the nuclear mapping. Clearly, there might be multiple choices of $T$ that all lead to valid confidence regions. One choice is $T(\bm{u},\bm{\theta})=\bm{u}$ for any $\bm{\theta}\in\Theta$ and $B_\alpha(\bm{\theta})=D_\alpha$ is a level-$\alpha$ Borel set of $P_U$ with $\Prob(\bm{U}\in D_\alpha)\ge\alpha$. However, this naive nuclear statistic could lead to an oversized confidence region. Therefore, $T$ is similar to a test statistic under the hypothesis testing framework and should be designed properly, see \cite{xie2022repro} for more details. Note that if $T$ depends on $\bm{u}^*$ through $G(\bm{u}^*,\bm{\theta})$, i.e., $T(\bm{u}^*,\bm{\theta})=\tilde T(G(\bm{u}^*,\bm{\theta}), \bm{\theta})$ for some $\tilde T$, then $\Gamma_\alpha^{\bm{\theta}_0}$ can be equivalently expressed as
\begin{equation}\label{eq_nuclear_test}
    \begin{aligned}
    \Gamma_{\alpha}^{\bm{\theta}_0}(\bm{y}^{obs})=&\{\bm{\theta}:\exists\bm{u}^*{\rm~s.t.~}\bm{y}^{obs}=G(\bm{u}^*,\bm{\theta}),\tilde T(\bm{y}^{obs},\bm{\theta})\in B_\alpha(\bm{\theta})\}\\
    \subseteq&\{\bm{\theta}:\tilde T(\bm{y}^{obs},\bm{\theta})\in B_\alpha(\bm{\theta})\}
    = \tilde\Gamma_\alpha^{\bm{\theta}_0}(\bm{y}^{obs}).
\end{aligned}
\end{equation}
Specifically, if $\tilde T$ is a test statistic under the Neyman-Pearson framework, by the property of test duality, $\tilde\Gamma_\alpha^{\bm{\theta}_0}(\bm{y}^{obs})$ is a level-$\alpha$ confidence set and the confidence set $\Gamma_\alpha^{\bm{\theta}_0}(\bm{y}^{obs})$ constructed by repro samples method becomes a subset of $\tilde\Gamma_\alpha^{\bm{\theta}_0}(\bm{y}^{obs})$. In cases when nuisance parameters are present, \cite{xie2022repro} proposes a nuclear mapping function by profiling out the nuisance components to make inferences for the parameters of interest. 

However, the repro samples framework was originally developed for well-specified models. Under model misspecification, the current framework is not directly applicable for valid inference on the parameters of interest. To address this, we extend the framework in three key directions. First, in well-specified models, the inference targets are naturally defined. In contrast, when the model is misspecified, target parameters must be carefully chosen so that they both capture meaningful information and remain inferable. To this end, we introduce the sparsity-constrained GLM as a working model and define the inference targets as the subset of the most influential covariates together with their associated GLM coefficients.
Second, for inference on the regression coefficients, we follow the core idea of \cite{xie2022repro} by profiling out the model support parameter, based on a constructed model candidate set. Unlike the linear model setting in \cite{wang2022finite}, with the binary response in our case, multiple values of parameters may satisfy \eqref{eq_generating_model_linear} given the response and noise. This aspect significantly complicates the task of establishing a candidate set, both from computational and theoretical standpoints. Third, even under well-specified models, when we make inferences for model support, the regression coefficients are treated as unknown nuisance parameters. Unlike \cite{wang2022finite},  it is not possible in our case to handle these nuisance parameters by sampling 
from a conditional distribution given a set of sufficient statistics. We will need to tackle the computational challenge by designing a nuclear mapping that can profile out all possible values of the nuisance coefficients. See Section~\ref{sec_method} for a detailed explanation of the strategies to address the above challenges.

\section{Method and Theory}\label{sec_method}

\subsection{Model candidate set}\label{sec_candidate}

As mentioned in Section \ref{sec_repro}, we need a Borel set $B_\alpha(\bm{\theta})$ for $\bm{\theta}=(\tau,\bm{\beta}_\tau)$ to incorporate the uncertainty of $\bm{\epsilon}^{rel}$. We will see in later sections that, if we fix a model $\tau$, the set $B_\alpha(\bm{\theta})$ can be relatively easily constructed for any $\bm{\beta}_\tau$. However, we still need to search over all  $ 2^{p}$ possible $\tau$ models, which can be computationally expensive. Therefore, we introduce the notion of model candidate sets to constrain the potential values of $\tau_0$ to only a small set of models without sacrificing inferential validity. We also propose an efficient procedure for constructing such a candidate set. 

To demonstrate our construction of the model candidate set, we start from the oracle scenario where $\bm{\epsilon}^{rel}$ is known. With this oracle data, we show that $\tau_0$ can be identified under a weak signal strength assumption. However, the noise $\bm{\epsilon}^{rel}$ is unobservable in practice, so we used $d$ randomly generated working noises $\{\bm{\epsilon}^{*(j)}: j\in[d]\}$ to approximate $\bm{\epsilon}^{rel}$. For each $\bm{\epsilon}^{*(j)}$, we construct an estimator $\hat\tau(\bm{\epsilon}^{*(j)})$ of $\tau_0$, and then aggregate these $d$ estimators to form the model candidate set $\mathcal{C}=\{\hat\tau(\bm{\epsilon}^{*(j)}):j\in[d]\}$. Here, the distribution of $\bm{\epsilon}^{*(j)}$ is not crucial. It is only required to span the full space $\R^n$, so one of the random $\bm{\epsilon}^{*(j)}$'s would fall in a neighborhood of $\bm{\epsilon}^{rel}$. In practice, common choices such as Gaussian or logistic distributions suffice.

The construction of $\hat\tau(\bm{\epsilon}^{*(j)})$ is based on a data recovery principle. Given any noises $\tilde{\bm{\epsilon}}=\{\tilde\epsilon_i:i\in[n]\}$, we can use the generative mechanism $(X,\tilde\epsilon)\rightarrow\1\{X^\top_\tau\bm{\beta}_\tau+\sigma\tilde\epsilon>0\}$ to generate synthetic responses based on $(\bm{X}^{obs},\tilde{\bm{\epsilon}})$. The corresponding empirical recovery error for approximating $\bm{y}^{obs}$ is defined as
\begin{align*}
    L^R_n(\tau,\bm{\beta}_\tau,\sigma|\bm{X}^{obs},\bm{y}^{obs},\tilde{\bm{\epsilon}})=&\frac{1}{n}\sum_{i=1}^n\1\big\{y_i^{obs}\ne\1\{X_{i,\tau}^{obs\top}\bm{\beta}_\tau+\sigma\tilde\epsilon_i>0\}\big\}\\
    =&\frac{1}{n}\sum_{i=1}^n\1\big\{\1\{X_{i,\tau_0}^{obs\top}\bm{\beta}_{0,\tau_0}+\epsilon_i^{rel}>0\}\ne\1\{X_{i,\tau}^{obs\top}\bm{\beta}_\tau+\sigma\tilde\epsilon_i>0\}\big\}.
\end{align*}
To illustrate the main idea behind model candidate set construction, we first consider the oracle setting, where $\tilde{\bm{\epsilon}}=\bm{\epsilon}^{rel}$, in Section \ref{sec_candidate_oracle}. Then, in Section \ref{sec_candidate_practice} we study the practical setting, where $\tilde{\bm{\epsilon}}$ is an artificially generated $\bm{\epsilon}^*$, independent of the oracle data $(\bm{X}^{obs},\bm{y}^{obs},\bm{\epsilon}^{rel})$. 

We also define the expected data recovery error using these two choices of $\tilde{\bm{\epsilon}}$ respectively. For a random copy $(\bm{X},\bm{y},\bm{\epsilon})$ of the oracle data, if we choose $\tilde {\bm{\epsilon}}=\bm{\epsilon}$, the expected recovery error is denoted as
\begin{align*}
    L^R_{\bm{\theta_0}}(\tau,\bm{\beta}_\tau,\sigma)
    = &\E  L^R_n(\tau,\bm{\beta}_\tau,\sigma|\bm{X},\bm{y},\bm{\epsilon})
    = 
 \Prob\big(\1\{X_{\tau_0}^\top\bm{\beta}_{0,\tau_0}+\epsilon>0\}\ne\1\{X_\tau^\top\bm{\beta}_\tau+\sigma\epsilon>0\}\big),
\end{align*}
where the expectation $\E$ is over the randomness of $\bm{X}$, $\bm{\epsilon}$ and $\bm{y}$ (or equivalently $\bm{X}$ and $\bm{\epsilon}$).
When we set $\tilde{\bm{\epsilon}}=\bm{\epsilon}^*$ which is independent of $(\bm{X},\bm{y},\bm{\epsilon})$, we denote the expected recovery error as
\begin{align*}
    L_{\bm{\theta}_0}^{R*}(\tau,\bm{\beta}_\tau,\sigma)
    =&\E L^R_n(\tau,\bm{\beta}_\tau,\sigma|\bm{X},\bm{y},\bm{\epsilon}^*)
    =\Prob\big(\1\{X_{\tau_0}^\top\bm{\beta}_{0,\tau_0}+\epsilon>0\}\ne\1\{X_\tau^\top\bm{\beta}_\tau+\sigma\epsilon^*>0\}\big).
\end{align*}
Here the expectation $\E$ above is over the randomness of $\bm{X},\bm{y}$ and ${\bm{\epsilon}^*}$ (or $\bm{X},\bm{\epsilon}$ and ${\bm{\epsilon}^*}$).

\subsubsection{Signal strength condition and recovery under oracle setting}\label{sec_candidate_oracle}

As outlined in the earlier part of Section~\ref{sec_candidate}, our intuition for constructing the model candidate set involves two stages. At first, we show $\tau_0$ can be recovered given $\bm{\epsilon}^{rel}$. Then we generate independent random vectors $\bm{\epsilon}^*$ to approximate $\bm{\epsilon}^{rel}$. This subsection considers the first stage, investigating the sufficient conditions for recovering $\tau_0$ given the knowledge of $\bm{\epsilon}^{rel}$. Then we will show in Section~\ref{sec_candidate_practice} that under this sufficient condition, $\tau_0$ can still be recovered as long as $\bm{\epsilon}^{rel}$ is well aligned with at least one of the generated synthetic noises. 

Note that $L^R_{\bm{\theta}_0}(\tau,\bm{\beta}_{\tau},\sigma)$ attains its minimum value of zero at $(\tau_0,\bm{\beta}_{0,\tau_0},1)$. Therefore, supposing $\bm{\epsilon}^{rel}$ is known, we could estimate $\tau_0$ by minimizing $L^R_n(\tau,\bm{\beta}_\tau,\sigma|\bm{X}^{obs},\bm{y}^{obs},\bm{\epsilon}^{rel})$. However, when $\bm{\beta}_{0,\tau_0}$ has weak signals, excluding those weak signals from $\tau_0$ may not increase $L^R_{\bm{\theta}_0}$ substantially. Consequently, the minimizer of $L^R_n$ may differ from $\tau_0$, making it hard to identify $\tau_0$ using the oracle data $(\bm{X}^{obs},\bm{y}^{obs},\bm{\epsilon}^{rel})$. Therefore, 
to identify $\tau_0$, we need the following assumption on the signal strength to separate $\tau_0$ from all other $\tau$ models where $\tau \not = \tau_0$, $|\tau| \le |\tau_0|$.

\begin{Assumption}\label{ass_signal}
For all $\tau\subset[p]$ with $\abs{\tau}\le |\tau_0|,\tau\ne\tau_0$,
\begin{equation}\label{eq_signal}
    \inf_{\bm{\beta}_\tau\in\R^{\abs{\tau}},\sigma\ge 0}L^R_{\bm{\theta}_0}(\tau,\bm{\beta}_\tau,\sigma)
    \gtrsim (\abs{\tau}+1)\dfrac{\log \frac{n}{\abs{\tau}+1}}{n}+\min\bigg\{\abs{\tau_0\setminus\tau}\dfrac{\log p}{n},(\abs{\tau}+1)\dfrac{\log p}{n}\bigg\}.
\end{equation}
\end{Assumption}

Note that when $\bm{\epsilon}^{rel}$ is known, the data recovery error under the true parameter is 0, $L^R_{\bm{\theta}_0}(\tau_0,\bm{\beta}_{0,\tau_0},1)=0$. Then Assumption \ref{ass_signal} links model selection to data reconstruction in the sense that at the population level, any model $\abs{\tau}\le |\tau_0|,\tau\ne\tau_0$ has a positive data recovery error gap compared to $\tau_0$. As we will show in Remark \ref{rem_cmin} and \ref{rem_betamin}, when the sparse GLM is well-specified, i.e., $\Delta(X)=0$ $P_X$-almost surely, this assumption is weaker than other commonly used signal strength conditions in the literature.

\begin{Remark}\label{rem_cmin}
    If the sparse GLM is correctly specified, i.e., $\Delta(X)=0$ $P_X$-almost surely, then Assumption \ref{ass_signal} can be shown to be weaker than the $C_{\min}$ condition in \cite{shen2012likelihood}. Note that the $C_{\min}$ condition requires
    \[\inf_{\abs{\tau}\le |\tau_0|,\tau\ne\tau_0,\bm{\beta}_\tau\in\R^{\abs{\tau}}}\frac{[H(\Prob_{\bm{\theta}_0},\Prob_{(\tau,\bm{\beta}_\tau)})]^2}{\abs{\tau_0\setminus\tau}}\gtrsim\frac{\log p}{n},\]
    where $\Prob_{(\tau,\bm{\beta}_\tau)}$ is the joint distribution of $(X,Y)$ with $X\sim\Prob_X$, $\Prob(Y=1|X)=g^{-1}(X_\tau^\top\bm{\beta}_\tau)$, $H(\Prob_1,\Prob_2)$ is the Hellinger distance between $\Prob_1,\Prob_2$. However as we will show in Lemma \ref{lem_cmin} of Section \ref{sec_proof}, when $\sigma>0$, 
    \[L^R_{\bm{\theta}_0}(\tau,\bm{\beta}_\tau,\sigma)={\rm TV}(\Prob_{\bm{\theta}_0},\Prob_{(\tau,\frac{\bm{\beta}_\tau}{\sigma})}),\]
    where ${\rm TV}(\Prob_1,\Prob_2)=\sup_A\abs{\Prob_1(A)-\Prob_2(A)}$ is the total variation distance between $\Prob_1,\Prob_2$. If for any $\tau\subset[p]$ with $\abs{\tau}\le |\tau_0|,\tau\ne\tau_0$, the minimizer $(\bm{\beta}_\tau,\sigma)$ of Equation \eqref{eq_signal} satisfies $\sigma>0$, and if we further assume $s\log\frac{n}{s}\lesssim \log p$, then a sufficient condition for Assumption \ref{ass_signal} is
    \[\inf_{\abs{\tau}\le |\tau_0|,\tau\ne\tau_0,\bm{\beta}_\tau\in\R^{\abs{\tau}}}\frac{{\rm TV}(\Prob_{\bm{\theta}_0},\Prob_{(\tau,\bm{\beta}_\tau)})}{\abs{\tau_0\setminus\tau}}\gtrsim\frac{\log p}{n}.\]
    Since $\{H(\Prob_1,\Prob_2)\}^2\lesssim{\rm TV}(\Prob_1,\Prob_2)\lesssim H(\Prob_1,\Prob_2)$, Assumption \ref{ass_signal} is weaker than the $C_{\min}$ condition in \cite{shen2012likelihood}.
\end{Remark}
\begin{Remark}\label{rem_betamin}
  If the sparse GLM is correctly specified, i.e., $\Delta(X)=0$ $P_X$-almost surely, then Assumption \ref{ass_signal} is also weaker than the commonly used $\beta$-min condition \citep{bunea2008honest,zhang2010nearly,zhao2006model}. Denote $\beta_{\min}=\min_{j\in\tau_0}\abs{\beta_{0,j}}$, then the $\beta$-min condition assumes
    \[\beta_{\min}\gtrsim\sqrt{\frac{\log p}{n}}.\]
    As we will show in Lemma \ref{lem_betamin} of Section \ref{sec_proof}, if the samples come from logistic regression model, suppose $\norm{\bm{\beta}_0}_2\lesssim 1$, $X$ is sub-Gaussian and not too concentrated,
    then 
    \[\inf_{\abs{\tau}\le |\tau_0|,\tau\ne\tau_0,\bm{\beta}_\tau\in\R^{\abs{\tau}}}\frac{{\rm TV}(\Prob_{\bm{\theta}_0},\Prob_{(\tau,\bm{\beta}_\tau)})}{\sqrt{\abs{\tau_0\setminus\tau}}}\gtrsim\beta_{\min}.\]
    Therefore, another sufficient condition for Assumption \ref{ass_signal} is
    $\beta_{\min}\gtrsim\frac{\sqrt{s}\log p}{n}+\frac{s\log \frac{n}{s}}{n}.$ 
    When $\frac{s\log p}{n}+\frac{s^2\log^2\frac{n}{s}}{n\log p}\lesssim 1$, we have Assumption \ref{ass_signal} is weaker than the $\beta$-min condition.
\end{Remark}

Since we have assumed that $\tau_0$ in \eqref{eq_tau0} is uniquely defined, it follows that $|\tau_0|=s$. Under Assumption \ref{ass_signal}, all models $\tau\ne\tau_0$ with $|\tau|\le|\tau_0|$ have a relatively large data recovery error while $\tau_0$ has a recovery error equal to 0, therefore, if we solve the constrained empirical risk minimization problem
\begin{equation}\label{eq_oracle}
    \hat \tau(\bm{\epsilon}^{rel})=\argmin_{|\tau|\le s}\min_{\bm{\beta}\in\R^p,\sigma\ge 0}L^R_n(\tau,\bm{\beta}_\tau,\sigma|\bm{X}^{obs},\bm{y}^{obs},\bm{\epsilon}^{rel}),
\end{equation}
$\hat\tau(\bm{\epsilon}^{rel})$ is likely to equal to $\tau_0$. Formally, we have the following Lemma \ref{lem_oracle} which states that as long as Assumption \ref{ass_signal} is satisfied, we can identify $\tau_0$ using $(\bm{X}^{obs},\bm{y}^{obs},\bm{\epsilon}^{rel})$ with high probability. A proof is given in the Appendix. In Lemma~\ref{lem_oracle}, we denote
$\hat \tau(\bm{\epsilon}) = \argmin_{|\tau|\le s}\min_{\bm{\beta}\in\R^p,\sigma\ge 0}L^R_n(\tau,\bm{\beta}_\tau,\sigma|\bm{X},\bm{y},\bm{\epsilon})$ to be a random copy of $\hat\tau(\bm{\epsilon}^{rel})$.

\begin{Lemma}\label{lem_oracle}

For $\hat\tau$ defined in Equation \eqref{eq_oracle}, denote
\[\tilde c_{\min}=\min_{|\tau|\le |\tau_0|,\tau\ne\tau_0,\bm{\beta}_\tau\in\R^{\abs{\tau}},\sigma\ge 0}\dfrac{L^R_{\bm{\theta}_0}(\tau,\bm{\beta}_\tau,\sigma)-\frac{2\abs{\tau}+2}{n}\log_2\frac{2en}{\abs{\tau}+1}}{\abs{\tau_0\setminus\tau}},\]
\[c_{\min}=\min_{|\tau|\le |\tau_0|,\tau\ne\tau_0,\bm{\beta}_\tau\in\R^{\abs{\tau}},\sigma\ge 0}\dfrac{L^R_{\bm{\theta}_0}(\tau,\bm{\beta}_\tau,\sigma)-\frac{2\abs{\tau}+2}{n}\log_2\frac{2en}{\abs{\tau}+1}}{\abs{\tau}\vee 1},\]
then

\[\Prob(\hat\tau(\bm{\epsilon})\ne\tau_0)\lesssim 2^{-\frac{1}{2}n\tilde c_{\min}+2\log_2 p}\wedge 2^{-\frac{1}{2}nc_{\min}+\log_2 p}.\]
Here the probability is taken with respect to $(\bm{X}, \bm{y}, \bm{\epsilon})$. Furthermore, if Assumption \ref{ass_signal} holds, 
\[\Prob(\hat\tau(\bm{\epsilon})\ne\tau_0)\lesssim 2^{-cn\tilde c_{\min}}\wedge 2^{-cnc_{\min}}.\]
\end{Lemma}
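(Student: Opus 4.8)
The plan is to reduce the statement to a zero‑empirical‑error event for a VC class of classifiers, bound that event separately for each candidate model $\tau$, and then union‑bound over $\tau$, grouped by cardinality for the $c_{\min}$ bound and by overlap with $\tau_0$ for the $\tilde c_{\min}$ bound. To begin, the data‑generating identity \eqref{eq_generating_model_linear} applied to the random copy $(\bm X,\bm y,\bm\epsilon)$ gives $y_i=\1\{X_{i,\tau_0}^\top\bm\beta_{0,\tau_0}+\epsilon_i>0\}$ for all $i$, so $L^R_n(\tau_0,\bm\beta_{0,\tau_0},1\mid\bm X,\bm y,\bm\epsilon)=0$. Since $L^R_n\ge 0$ and $\hat\tau(\bm\epsilon)$ minimizes $\tau\mapsto\min_{\bm\beta_\tau,\sigma\ge0}L^R_n(\tau,\bm\beta_\tau,\sigma\mid\bm X,\bm y,\bm\epsilon)$ over $|\tau|\le s$, the minimal value is $0$; hence on $\{\hat\tau(\bm\epsilon)\ne\tau_0\}$ there is some $\tau\ne\tau_0$, $|\tau|\le s=|\tau_0|$, with $\min_{\bm\beta_\tau,\sigma\ge0}L^R_n(\tau,\bm\beta_\tau,\sigma\mid\cdot)=0$ (the infimum is attained because $L^R_n$ takes values in $\{0,\tfrac1n,\tfrac2n,\dots\}$ as $(\bm\beta_\tau,\sigma)$ ranges over a class inducing finitely many sign patterns). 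A union bound over such $\tau$ leaves me to bound, for each fixed $\tau$, the probability $\Prob(\exists\,\bm\beta_\tau\in\R^{|\tau|},\sigma\ge0:L^R_n(\tau,\bm\beta_\tau,\sigma\mid\bm X,\bm y,\bm\epsilon)=0)$.

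For a fixed $\tau$, substituting $y_i=\1\{X_{i,\tau_0}^\top\bm\beta_{0,\tau_0}+\epsilon_i>0\}$ rewrites this event as $\{\exists f\in\mathcal F_\tau:\tfrac1n\sum_{i=1}^n f(X_i,\epsilon_i)=0\}$ for the i.i.d.\ data $(X_i,\epsilon_i)$, where $\mathcal F_\tau$ is the class of indicators of the symmetric difference between the \emph{fixed} halfspace $\{x_{\tau_0}^\top\bm\beta_{0,\tau_0}+e>0\}$ and the variable homogeneous halfspace $\{x_\tau^\top\bm\beta_\tau+\sigma e>0\}$ in the $|\tau|+1$ variables $(x_\tau,e)$. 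Homogeneous halfspaces in $\R^{|\tau|+1}$ have VC dimension $|\tau|+1$, and symmetric difference with a fixed set preserves VC dimension, so $\mathcal F_\tau$ has VC dimension $\le|\tau|+1$ and growth function $S_{\mathcal F_\tau}(2n)\le(2en/(|\tau|+1))^{|\tau|+1}$; moreover every $f\in\mathcal F_\tau$ has $\E f(X,\epsilon)\ge\gamma_\tau:=\inf_{\bm\beta_\tau,\sigma\ge0}L^R_{\bm\theta_0}(\tau,\bm\beta_\tau,\sigma)$. The classical realizable‑regime uniform‑deviation bound (Vapnik--Chervonenkis) then gives $\Prob(\exists f\in\mathcal F_\tau:\tfrac1n\sum_i f(X_i,\epsilon_i)=0)\le 2\,S_{\mathcal F_\tau}(2n)\,2^{-\gamma_\tau n/2}$. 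The definitions of $c_{\min}$ and $\tilde c_{\min}$ are arranged precisely so that $\tfrac12 n\gamma_\tau$ absorbs the growth‑function exponent $(|\tau|+1)\log_2\tfrac{2en}{|\tau|+1}$, leaving $\Prob(\cdots)\le 2\cdot 2^{-\frac12 n(|\tau|\vee1)c_{\min}}$ and, in parallel, $\Prob(\cdots)\le 2\cdot 2^{-\frac12 n|\tau_0\setminus\tau|\,\tilde c_{\min}}$.

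Summing the first form over $\tau$ grouped by $k=|\tau|$ (at most $p^k/k!$ models of size $k\ge1$, one for $k=0$) gives $\sum_{k\ge0}\tfrac{2}{k!}\big(p\,2^{-\frac12 nc_{\min}}\big)^k\lesssim 2^{-\frac12 nc_{\min}+\log_2p}$ whenever $p\,2^{-\frac12 nc_{\min}}\le1$, and the claimed bound is trivial otherwise. Summing the second form over $\tau$ grouped by $(|\tau_0\setminus\tau|,|\tau\setminus\tau_0|)=(j,\ell)$ with $j\ge1$, $0\le\ell\le j$ (at most $\binom sj\binom{p-s}\ell$ models of that shape), using $\binom sj(j+1)\le\binom sj 2^j$ and the binomial theorem, gives $\lesssim sp\,2^{-\frac12 n\tilde c_{\min}}\le 2^{-\frac12 n\tilde c_{\min}+2\log_2p}$ (using $s\le p$), again trivially when the right side exceeds $1$. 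Taking the minimum of the two yields the first displayed bound. For the ``Furthermore'' part, I would plug \eqref{eq_signal} into $\gamma_\tau$: choosing the constant in \eqref{eq_signal} large enough makes the complexity term $(|\tau|+1)\tfrac{\log(n/(|\tau|+1))}{n}$ dominate the correction $\tfrac{2|\tau|+2}{n}\log_2\tfrac{2en}{|\tau|+1}$, and the remaining term $\min\{|\tau_0\setminus\tau|,|\tau|+1\}\tfrac{\log p}{n}$, after division by the two denominators $|\tau|\vee1$ and $|\tau_0\setminus\tau|$, yields $c_{\min}\gtrsim\tfrac{\log p}{n}$ and $\tilde c_{\min}\gtrsim\tfrac{\log p}{n}$; hence $2^{-\frac12 nc_{\min}+\log_2p}\le 2^{-cnc_{\min}}$ and $2^{-\frac12 n\tilde c_{\min}+2\log_2p}\le 2^{-cn\tilde c_{\min}}$ for a suitable $c\in(0,\tfrac12)$.

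I expect the bookkeeping in the union bound to be the main obstacle: one must count models of each shape sharply enough that the combinatorial factor is cancelled by $2^{-\gamma_\tau n/2}$, leaving exactly $\log_2 p$ in the $c_{\min}$ bound and $2\log_2 p$ in the $\tilde c_{\min}$ bound — it is this requirement that forces the two distinct normalizations ($|\tau|\vee1$ versus $|\tau_0\setminus\tau|$) in the definitions of $c_{\min}$, $\tilde c_{\min}$ and the matching $\min\{\cdot,\cdot\}$ structure in Assumption \ref{ass_signal}. A secondary technical point is invoking the realizable‑regime VC inequality in its sharp form (rate $2^{-\gamma_\tau n/2}$ against $S_{\mathcal F_\tau}(2n)$) so that this cancellation is clean rather than lossy.
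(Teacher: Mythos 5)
Your proposal is correct and follows essentially the same route as the paper: reduce $\{\hat\tau(\bm\epsilon)\ne\tau_0\}$ to the existence of a zero-empirical-recovery-error model $\tau\ne\tau_0$, apply the realizable-regime VC bound (which the paper proves from scratch as Lemma~\ref{lem_pac_upper} via the ghost-sample argument and Sauer's lemma, with the same $(|\tau|+1)$-dimensional halfspace class and the same $2^{-n\gamma_\tau/2}$ rate), and then run the two union bounds — grouped by $|\tau|$ for $c_{\min}$ and by $(|\tau_0\setminus\tau|,|\tau\setminus\tau_0|)$ for $\tilde c_{\min}$ — exactly as in the appendix. The only cosmetic difference is that you invoke the classical VC inequality directly and phrase the class as symmetric differences with the fixed halfspace, whereas the paper derives the bound explicitly; the bookkeeping and conclusions coincide.
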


\subsubsection{Candidate set construction in the practical setting}\label{sec_candidate_practice}

In practice, although the oracle noise $\bm{\epsilon}^{rel}$ is unobservable, we can generate a vector $\bm{\epsilon}^*$ independently from some distribution spanning $\R^n$, such as Gaussian or logistic, and calculate $\hat \tau(\bm{\epsilon}^*)$ as
\[\hat \tau(\bm{\epsilon}^*)=\argmin_{|\tau|\le s}\min_{\bm{\beta}\in\R^p,\sigma\ge 0}L^R_n(\tau,\bm{\beta}_\tau,\sigma|\bm{X}^{obs},\bm{y}^{obs},\bm{\epsilon}^*).\]
We expect that as long as $\bm{\epsilon}^*$ and $\bm{\epsilon}^{rel}$ are close enough, we would have $\hat\tau(\bm{\epsilon}^*)=\hat\tau(\bm{\epsilon}^{rel})$. Therefore, we generate $d$ i.i.d. random noises $\{\bm{\epsilon}^{*(j)}:j\in[d]\}$ from, say, logistic distribution, and calculate their corresponding $\hat\tau(\bm{\epsilon}^{*(j)})$. Then we collect all the estimated models into the model candidate set $\mathcal{C}$ as
\[\mathcal{C}=\{\hat\tau(\bm{\epsilon}^{*(j)}):\epsilon_i^{*(j)}\overset{{\rm i.i.d.}}{\sim}{\rm Logistic},i\in[n],j\in[d]\}.\] 
We summarize the above procedure in Algorithm \ref{alg_candidate}.

\begin{algorithm}
\caption{Model Candidate Set}\label{alg_candidate}
\begin{algorithmic}[1]
\State{\bf Input:} Observed data $(\bm{X}^{obs},\bm{y}^{obs})$, sparsity level $s$ and the number of repro samples $d$. 
\State{\bf Output:} Model candidate set $\mathcal{C}$.
\State Generate $d$ copies of logistic random noises $\{\bm{\epsilon}^{*(j)}:\epsilon_i^{*(j)}\overset{{\rm i.i.d.}}{\sim}{\rm Logistic},i\in[n],j\in[d]\}$.
\State Compute $\hat\tau(\bm{\epsilon}^{*(j)})=\argmin_{|\tau|\le s}\min_{\bm{\beta}\in\R^p,\sigma\ge 0}L^R_n(\tau,\bm{\beta}_\tau,\sigma|\bm{X}^{obs},\bm{y}^{obs},\bm{\epsilon}^{*(j)}),$ for $j\in[d]$.
\State Construct $\mathcal{C}=\{\hat\tau(\bm{\epsilon}^{*(j)}):j\in[d]\}$.
\end{algorithmic}
\end{algorithm}

\begin{Remark}[Practical implementation of Algorithm \ref{alg_candidate}]
    Line 4 in Algorithm \ref{alg_candidate} involves optimization for 0-1 loss function with $\ell_0$ constraint, which can be hard to calculate. In practice, we use the hinge loss or logistic loss as surrogates for the 0-1 loss, then replace the $\ell_0$ constraint by the adaptive Lasso penalty. See Section \ref{sec_simulation_candidate} for more details.
\end{Remark}

In the following theorem, we show that as long as the number of Monte Carlo copies, $d$, is large enough, there will be at least one $\bm{\epsilon}^{*(j)}$ that is closed to $\bm{\epsilon}^{rel}$, then the model candidate set $\mathcal{C}$ contains $\tau_0$ with high probability, even if the GLM is misspecified. A proof is given in the Appendix.

\begin{Theorem}\label{thm_candidate}
Using the same notation  as in Lemma \ref{lem_oracle}, if we further denote $F_{\rm log}(z)=(1+e^{-z})^{-1}$ to be the CDF of logistic distribution, we have
\begin{align*}
    \Prob(\tau_0\not\in\mathcal{C})\lesssim 2^{-\frac{1}{2}n\tilde c_{\min}+2\log_2p}\wedge2^{-\frac{1}{2}nc_{
    \min}+\log_2p}+(1-\{\E \big|F_{\rm log}(\epsilon)-F_{\rm log}(-X^\top_{\tau_0}\bm{\beta}_{0,\tau_0})\big|\}^n)^d.
\end{align*}
If Assumption \ref{ass_signal} holds, for any fixed $n$, when $d$ is large enough such that
\[(1-\{\E \big|F_{\rm log}(\epsilon)-F_{\rm log}(-X^\top_{\tau_0}\bm{\beta}_{0,\tau_0})\big|\}^n)^d\lesssim 2^{-cn\tilde c_{\min}}\wedge 2^{-cnc_{\min}},\]
we have
\[\Prob(\tau_0\not\in\mathcal{C})\lesssim 2^{-cn\tilde c_{\min}}\wedge 2^{-cnc_{\min}}.\]

\end{Theorem}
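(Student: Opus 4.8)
The plan is to combine two ingredients. By Lemma~\ref{lem_oracle}, off an event of probability $\lesssim 2^{-\frac12 n\tilde c_{\min}+2\log_2 p}\wedge 2^{-\frac12 nc_{\min}+\log_2 p}$ (and $\lesssim 2^{-cn\tilde c_{\min}}\wedge2^{-cnc_{\min}}$ under Assumption~\ref{ass_signal}), the oracle empirical recovery error $L^R_n(\cdot\,|\,\bm{X}^{obs},\bm{y}^{obs},\bm{\epsilon}^{rel})$ is minimized uniquely at $\tau_0$, with value $0$ there and value $>0$ at every other $\tau$ with $|\tau|\le s$. On this event it then suffices to show that at least one of the $d$ artificial noises $\bm{\epsilon}^{*(j)}$ ``aligns'' with $\bm{\epsilon}^{rel}$ closely enough that $\hat\tau(\bm{\epsilon}^{*(j)})=\tau_0$, whence $\tau_0\in\mathcal{C}$. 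So I would write $\Prob(\tau_0\notin\mathcal{C})\le\Prob(\hat\tau(\bm{\epsilon}^{rel})\ne\tau_0)+\Prob(\,\hat\tau(\bm{\epsilon}^{rel})=\tau_0,\ \text{no }\bm{\epsilon}^{*(j)}\text{ aligns}\,)$ and bound the two terms separately.

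Set $t_i=-X_{i,\tau_0}^{obs\top}\bm{\beta}_{0,\tau_0}$, so that by \eqref{eq_generating_model_linear} one has $y_i^{obs}=\1\{\epsilon_i^{rel}>t_i\}$. Call $\bm{\epsilon}^{*(j)}$ \emph{aligned} if, for every $i$, the coordinate $\epsilon_i^{*(j)}$ lies between $t_i$ and $\epsilon_i^{rel}$ (so $t_i<\epsilon_i^{*(j)}\le\epsilon_i^{rel}$ when $y_i^{obs}=1$ and $\epsilon_i^{rel}\le\epsilon_i^{*(j)}\le t_i$ when $y_i^{obs}=0$). On the Lemma~\ref{lem_oracle} event, alignment of $\bm{\epsilon}^{*(j)}$ forces $\hat\tau(\bm{\epsilon}^{*(j)})=\tau_0$, for two reasons. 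First, $(\tau_0,\bm{\beta}_{0,\tau_0},1)$ still achieves zero recovery error under $\bm{\epsilon}^{*(j)}$, since for each $i$ the sign of $X_{i,\tau_0}^{obs\top}\bm{\beta}_{0,\tau_0}+\epsilon_i^{*(j)}$ matches $y_i^{obs}$. Second --- and this is the crux --- for \emph{every} $\tau$, $\bm{\beta}_\tau$ and $\sigma\ge0$, the index set correctly classified by $(\tau,\bm{\beta}_\tau,\sigma)$ using $\bm{\epsilon}^{*(j)}$ is contained in the one correctly classified using $\bm{\epsilon}^{rel}$; this is verified coordinatewise, splitting on the value of $y_i^{obs}$ and on whether $\sigma=0$, using the sandwich inequalities above (e.g.\ for $y_i^{obs}=1$ and $\sigma>0$, correct classification under $\bm{\epsilon}^{*(j)}$ means $\epsilon_i^{*(j)}>-X_{i,\tau}^{obs\top}\bm{\beta}_\tau/\sigma$, and $\epsilon_i^{rel}\ge\epsilon_i^{*(j)}$ then yields the same for $\bm{\epsilon}^{rel}$). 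Hence $\min_{\bm{\beta}_\tau,\sigma\ge0}L^R_n(\tau,\bm{\beta}_\tau,\sigma\,|\,\bm{X}^{obs},\bm{y}^{obs},\bm{\epsilon}^{*(j)})\ge\min_{\bm{\beta}_\tau,\sigma\ge0}L^R_n(\tau,\bm{\beta}_\tau,\sigma\,|\,\bm{X}^{obs},\bm{y}^{obs},\bm{\epsilon}^{rel})$ for all $\tau$; on the Lemma~\ref{lem_oracle} event the right side is $0$ at $\tau_0$ and $>0$ at every other $\tau$ with $|\tau|\le s$, so $\tau_0$ is the unique minimizer under $\bm{\epsilon}^{*(j)}$ and $\hat\tau(\bm{\epsilon}^{*(j)})=\tau_0$.

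It remains to bound $\Prob(\text{no }\bm{\epsilon}^{*(j)}\text{ aligns})$. Conditionally on the oracle data $(\bm{X}^{obs},\bm{y}^{obs},\bm{\epsilon}^{rel})$ the $\bm{\epsilon}^{*(j)}$ are i.i.d., and since each $\epsilon_i^{*(j)}$ is logistic, the conditional probability that a single $\bm{\epsilon}^{*(j)}$ aligns equals $\prod_{i=1}^n\bigl|F_{\rm log}(\epsilon_i^{rel})-F_{\rm log}(t_i)\bigr|$, the logistic mass of the interval with endpoints $t_i$ and $\epsilon_i^{rel}$. Because the oracle data are i.i.d.\ across $i$, each such factor has expectation $\E\bigl|F_{\rm log}(\epsilon)-F_{\rm log}(-X_{\tau_0}^\top\bm{\beta}_{0,\tau_0})\bigr|$, and assembling the conditional independence over the $d$ artificial noises together with this per-coordinate computation produces the term $\bigl(1-\{\E|F_{\rm log}(\epsilon)-F_{\rm log}(-X_{\tau_0}^\top\bm{\beta}_{0,\tau_0})|\}^n\bigr)^d$. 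Adding this to the Lemma~\ref{lem_oracle} term gives the first displayed inequality; the second follows by invoking Assumption~\ref{ass_signal} to replace the Lemma~\ref{lem_oracle} term by $2^{-cn\tilde c_{\min}}\wedge2^{-cnc_{\min}}$ and choosing $d$ so large that the alignment term is of the same order.

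The main obstacle is the ``crux'' step: proving the coordinatewise domination of correctly-classified index sets \emph{uniformly over all $(\tau,\bm{\beta}_\tau,\sigma)$}, since this is exactly what upgrades alignment from merely reproducing $\bm{y}^{obs}$ under the oracle model to certifying $\tau_0$ as the unique empirical minimizer; everything else is the Lemma~\ref{lem_oracle} input and elementary counting. A secondary point that needs care is the passage from the data-dependent conditional alignment probability $\prod_i|F_{\rm log}(\epsilon_i^{rel})-F_{\rm log}(t_i)|$ to the closed-form unconditional quantity $\{\E|F_{\rm log}(\epsilon)-F_{\rm log}(-X_{\tau_0}^\top\bm{\beta}_{0,\tau_0})|\}^n$ appearing in the statement.
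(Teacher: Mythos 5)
Your proposal is correct and follows essentially the same route as the paper's proof: the paper defines exactly your alignment event $A=\{\bm\epsilon^*:-X_{i,\tau_0}^\top\bm\beta_{0,\tau_0}<\epsilon_i^*\le\epsilon_i\text{ or }\epsilon_i\le\epsilon_i^*\le-X_{i,\tau_0}^\top\bm\beta_{0,\tau_0},\ \forall i\}$, decomposes $\Prob(\tau_0\notin\mathcal C)$ according to whether some $\bm\epsilon^{*(j)}$ lands in $A$, proves your ``crux'' coordinatewise domination $\1\{y_i\ne\1(X_{i,\tau}^\top\bm\beta_\tau+\sigma\epsilon_i^*>0)\}\ge\1\{y_i\ne\1(X_{i,\tau}^\top\bm\beta_\tau+\sigma\epsilon_i>0)\}$ uniformly over $(\tau,\bm\beta_\tau,\sigma)$, and then invokes Lemma~\ref{lem_oracle} for the first term and the per-coordinate logistic-CDF computation for the second. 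The one step you flag as delicate --- passing from the data-dependent conditional alignment probability $\prod_i\abs{F_{\rm log}(\epsilon_i)-F_{\rm log}(-X_{i,\tau_0}^\top\bm\beta_{0,\tau_0})}$ to the unconditional $\{\E\abs{F_{\rm log}(\epsilon)-F_{\rm log}(-X_{\tau_0}^\top\bm\beta_{0,\tau_0})}\}^n$ --- is treated no more carefully in the paper, which simply writes $T_2=(1-\Prob(\bm\epsilon^*\in A))^d$ and factors the expectation across coordinates.
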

Theorem \ref{thm_candidate} ensures the inclusion of $\tau_0$ in $\mathcal{C}$ regardless of the model-misspecification, as long as Assumption \ref{ass_signal} is satisfied and $d$ is large enough.

Next, we demonstrate that under a stronger signal strength condition, the requirement for the number of repro samples, $d$, can be relaxed.

\begin{Assumption}\label{ass_signal_strong}
For all $\tau$ with $\abs{\tau}\le |\tau_0|,\tau\ne\tau_0$,
\begin{equation}\label{eq_signal_strong}
    \begin{aligned}
        \inf_{\bm{\beta}_\tau\in\R^{\abs{\tau}},\sigma\ge 0}L_{\bm{\theta}_0}^{R*}(\tau,\bm{\beta}_\tau,\sigma)-\inf_{\bm{\beta}_{\tau_0}\in\R^{|\tau_0|}}L_{\bm{\theta}_0}^{R*}(\tau_0,\bm{\beta}_{\tau_0},0)\gtrsim\sqrt{\frac{\abs{\tau}\vee 1}{n}}+\sqrt{\abs{\tau_0\setminus\tau}\wedge(\abs{\tau}\vee1)}\sqrt{\frac{\log p}{n}}.
    \end{aligned}
\end{equation}
\end{Assumption}

Assumption \ref{ass_signal_strong} assumes that all models $\tau\ne\tau_0$ with $|\tau|\le |\tau_0|$ have a positive error gap from $\tau_0$. Compared to Assumption \ref{ass_signal}, the signal strength in Assumption \ref{ass_signal_strong} scales with $\frac{1}{\sqrt{n}}$ instead of $\frac{1}{n}$ as in Assumption \ref{ass_signal}.

As we will show in the following theorem, if the stronger signal strength Assumption~\ref{ass_signal_strong} holds, then, similar to the model selection consistency \citep{zhao2006model,zhang2010nearly,bunea2008honest}, the model candidate set contains $\tau_0$ with high probability for any $d \ge 1$. A proof is provided in the Appendix.

\begin{Theorem}\label{thm_candidate_strong_signal}

Denote
\[\tilde c_{\min}^*=\bigg(\inf_{|\tau|\le |\tau_0|,\tau\ne\tau_0}\dfrac{\inf_{\bm{\beta}_\tau\in\R^{\abs{\tau}},\sigma\ge 0}L_{\bm{\theta}_0}^{R*}(\tau,\bm{\beta}_\tau,\sigma)-\inf_{\bm{\beta}_{\tau_0}\in\R^{|\tau_0|}}L_{\bm{\theta}_0}^{R*}(\tau_0,\bm{\beta}_{\tau_0},0)-c\sqrt{\frac{\abs{\tau}+1}{n}}}{\sqrt{\abs{\tau_0\setminus\tau}}}\bigg)^2,\]
\[c_{\min}^*=\bigg(\inf_{|\tau|\le |\tau_0|,\tau\ne\tau_0}\dfrac{\inf_{\bm{\beta}_\tau\in\R^{\abs{\tau}},\sigma\ge 0}L_{\bm{\theta}_0}^{R*}(\tau,\bm{\beta}_\tau,\sigma)-\inf_{\bm{\beta}_{\tau_0}\in\R^{|\tau_0|}}L_{\bm{\theta}_0}^{R*}(\tau_0,\bm{\beta}_{\tau_0},0)-c\sqrt{\frac{\abs{\tau}+1}{n}}}{\sqrt{\abs{\tau}\vee 1}}\bigg)^2.\]
For any $n$ and $d$, the model candidate set satisfies,
\[\Prob(\tau_0\not\in\mathcal{C})\lesssim e^{-\frac{n}{8}\tilde c_{\min}^*+2\log p}\wedge e^{-\frac{n}{8}nc_{\min}^*+\log p}.\]
If Assumption \ref{ass_signal_strong} holds, then
\[\Prob(\tau_0\not\in\mathcal{C})\lesssim e^{-cn\tilde c_{\min}^*}\wedge e^{-cnc_{\min}^*}.\]

\end{Theorem}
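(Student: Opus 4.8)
The plan is to control $\Prob(\tau_0\notin\mathcal{C})$ through a single repro copy. Since $\tau_0\in\mathcal{C}$ already holds as soon as $\hat\tau(\bm{\epsilon}^{*(1)})=\tau_0$, we have $\Prob(\tau_0\notin\mathcal{C})\le\Prob(\hat\tau(\bm{\epsilon}^{*(1)})\ne\tau_0)$, so the bound is automatically free of $d$. Fix one artificial noise $\bm{\epsilon}^{*}$ and abbreviate $\hat L_n(\tau):=\min_{\bm{\beta}_\tau\in\R^{|\tau|},\,\sigma\ge0}L^R_n(\tau,\bm{\beta}_\tau,\sigma\mid\bm{X}^{obs},\bm{y}^{obs},\bm{\epsilon}^{*})$. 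Because $\hat\tau(\bm{\epsilon}^{*})$ minimizes jointly over $(\tau,\bm{\beta}_\tau,\sigma)$ with $|\tau|\le s=|\tau_0|$, the event $\{\hat\tau(\bm{\epsilon}^{*})\ne\tau_0\}$ is contained in $\bigcup_{\tau\ne\tau_0,\,|\tau|\le s}\{\hat L_n(\tau)\le\hat L_n(\tau_0)\}$, so it suffices to show $\hat L_n(\tau)>\hat L_n(\tau_0)$ simultaneously over all competitors on a high-probability event. This parallels the oracle argument behind Lemma~\ref{lem_oracle}, but with a mismatched noise and with $L^R_{\bm{\theta}_0}$ replaced by $L^{R*}_{\bm{\theta}_0}$.

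Two ingredients are needed. For the baseline term, the key observation is that one may take $\sigma=0$, which removes the recovery rule's dependence on the mismatched noise: letting $\bm{\beta}^{\star}_{\tau_0}$ (nearly) attain $\inf_{\bm{\beta}_{\tau_0}}L^{R*}_{\bm{\theta}_0}(\tau_0,\bm{\beta}_{\tau_0},0)$, the quantity $L^R_n(\tau_0,\bm{\beta}^{\star}_{\tau_0},0\mid\bm{X}^{obs},\bm{y}^{obs},\bm{\epsilon}^{*})=\tfrac1n\sum_{i}\1\{y_i^{obs}\ne\1\{(X^{obs}_{i,\tau_0})^\top\bm{\beta}^{\star}_{\tau_0}>0\}\}$ is an average of i.i.d.\ Bernoullis with mean $\inf_{\bm{\beta}_{\tau_0}}L^{R*}_{\bm{\theta}_0}(\tau_0,\bm{\beta}_{\tau_0},0)$ (via the data-generating form \eqref{eq_generating_model_linear}), so Hoeffding gives $\hat L_n(\tau_0)\le\inf_{\bm{\beta}_{\tau_0}}L^{R*}_{\bm{\theta}_0}(\tau_0,\bm{\beta}_{\tau_0},0)+t_0$ with probability at least $1-e^{-2nt_0^2}$; this $\sigma=0$ device is the substitute for the exact recovery $\hat L_n(\tau_0)=0$ available in the oracle setting, and it is the reason the reference level in Assumption~\ref{ass_signal_strong} is $\inf_{\bm{\beta}_{\tau_0}}L^{R*}_{\bm{\theta}_0}(\tau_0,\bm{\beta}_{\tau_0},0)$ at $\sigma=0$. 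For a competitor $\tau$, I would lower-bound $\hat L_n(\tau)$ by $\inf_{\bm{\beta}_\tau,\sigma\ge0}L^{R*}_{\bm{\theta}_0}(\tau,\bm{\beta}_\tau,\sigma)$ minus a uniform deviation: $L^R_n(\tau,\bm{\beta}_\tau,\sigma\mid\cdots)$ is an empirical mean over the i.i.d.\ triples $(X^{obs}_i,\epsilon^{rel}_i,\epsilon^{*}_i)$ of $\1\{\1\{(X_{i,\tau_0})^\top\bm{\beta}_{0,\tau_0}+\epsilon_i>0\}\ne\1\{(X_{i,\tau})^\top\bm{\beta}_\tau+\sigma\epsilon^{*}_i>0\}\}$, whose only index-dependent factor is the linear-threshold classifier $\1\{(X_{i,\tau})^\top\bm{\beta}_\tau+\sigma\epsilon^{*}_i>0\}$; for fixed $\tau$ this classifier ranges over a VC class of dimension $O(|\tau|+1)$ (origin halfspaces in the $(|\tau|+1)$-dimensional variable $((X_{i,\tau})^\top,\epsilon^{*}_i)$, restricted to $\sigma\ge0$). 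Symmetrization with a sharp chaining/Haussler bound gives Rademacher complexity $\lesssim\sqrt{(|\tau|+1)/n}$ with no logarithmic factor, and bounded differences yields $\sup_{\bm{\beta}_\tau,\sigma\ge0}\abs{L^R_n(\tau,\bm{\beta}_\tau,\sigma)-L^{R*}_{\bm{\theta}_0}(\tau,\bm{\beta}_\tau,\sigma)}\le c\sqrt{(|\tau|+1)/n}+t_\tau$ with probability at least $1-e^{-2nt_\tau^2}$, hence $\hat L_n(\tau)\ge\inf_{\bm{\beta}_\tau,\sigma\ge0}L^{R*}_{\bm{\theta}_0}(\tau,\bm{\beta}_\tau,\sigma)-c\sqrt{(|\tau|+1)/n}-t_\tau$.

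Combining, on the intersection of these events the margin obeys $\hat L_n(\tau)-\hat L_n(\tau_0)\ge\tilde g(\tau)-t_\tau-t_0$, where $\tilde g(\tau):=\inf_{\bm{\beta}_\tau,\sigma\ge0}L^{R*}_{\bm{\theta}_0}(\tau,\bm{\beta}_\tau,\sigma)-\inf_{\bm{\beta}_{\tau_0}}L^{R*}_{\bm{\theta}_0}(\tau_0,\bm{\beta}_{\tau_0},0)-c\sqrt{(|\tau|+1)/n}$ is precisely the numerator defining $\tilde c_{\min}^{*}$ and $c_{\min}^{*}$; thus $\tilde g(\tau)^2\ge\tilde c_{\min}^{*}\,|\tau_0\setminus\tau|$ and $\tilde g(\tau)^2\ge c_{\min}^{*}(|\tau|\vee1)$, and under Assumption~\ref{ass_signal_strong} (which forces $\tilde g(\tau)>0$) in particular $\tilde g(\tau)\ge\sqrt{\tilde c_{\min}^{*}}\vee\sqrt{c_{\min}^{*}}$ for every competitor. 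Taking $t_\tau=\tfrac12\tilde g(\tau)$ and $t_0=\tfrac14\sqrt{\tilde c_{\min}^{*}}$ makes $t_\tau+t_0\le\tfrac34\tilde g(\tau)<\tilde g(\tau)$, so the margin is strictly positive on the good event, and a union bound gives $\Prob(\hat\tau(\bm{\epsilon}^{*})\ne\tau_0)\le e^{-2nt_0^2}+\sum_{\tau\ne\tau_0,\,|\tau|\le s}e^{-\frac n2\tilde g(\tau)^2}\le e^{-\frac n8\tilde c_{\min}^{*}}+\sum_{m\ge1}N_m\,e^{-\frac n2\tilde c_{\min}^{*}m}$, where $N_m$, the number of models with $|\tau_0\setminus\tau|=m$, is at most a polynomial in $p$ of degree $2m$; summing the geometric series produces the first bound $\lesssim e^{-\frac n8\tilde c_{\min}^{*}+2\log p}$. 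Organizing the union bound instead by $k=|\tau|$ (at most $p^{k}$ such models) and using $\tilde g(\tau)^2\ge c_{\min}^{*}(|\tau|\vee1)$ with $t_0=\tfrac14\sqrt{c_{\min}^{*}}$ yields $\lesssim e^{-\frac n8 c_{\min}^{*}+\log p}$, and the minimum of the two is the first display. Finally, under Assumption~\ref{ass_signal_strong}, once the constant in \eqref{eq_signal_strong} is large enough relative to the absolute constant $c$ above, the term $\sqrt{(|\tau|\vee1)/n}$ dominates $c\sqrt{(|\tau|+1)/n}$, so $\tilde g(\tau)$ is comparable to the true gap and $\tfrac n2\tilde g(\tau)^2\gtrsim(|\tau_0\setminus\tau|\wedge(|\tau|\vee1))\log p$ for each competitor; carrying this through the same union bound with the model count bounded by $\binom{s}{s-m}\binom{p-s}{j}$ for the models that omit $m$ true and add $j$ spurious covariates absorbs the polynomial-in-$p$ factor into the exponent, leaving $\Prob(\tau_0\notin\mathcal{C})\lesssim e^{-cn\tilde c_{\min}^{*}}\wedge e^{-cnc_{\min}^{*}}$.

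The hard part will be the uniform-deviation step and its accompanying bookkeeping: one must control $L^R_n(\tau,\cdot,\cdot)$ simultaneously over the whole nuisance family $(\bm{\beta}_\tau,\sigma)$ and over all candidate models with $|\tau|\le s$, with a rate that splits cleanly into a per-model VC contribution $\sqrt{(|\tau|+1)/n}$ and a per-model union-bound contribution $\sqrt{\log p/n}$ --- exactly the two summands appearing in \eqref{eq_signal_strong}. Getting the VC contribution \emph{without} a spurious $\sqrt{\log n}$ (so that the signal can absorb it rather than being enlarged by it) needs the sharp chaining bound rather than a crude $\varepsilon$-net union bound; and the passage from the unconditional bound to the clean exponential rate under Assumption~\ref{ass_signal_strong} relies on the term-by-term form of the union bound noted above, since replacing every $\tilde g(\tau)$ by the single number $\sqrt{\tilde c_{\min}^{*}}$ would be too lossy for the competitors that omit many true covariates.
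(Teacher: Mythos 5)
Your proposal is correct and follows essentially the same route as the paper's proof: bound $\Prob(\tau_0\notin\mathcal{C})$ by the misselection probability of a single repro copy, use the $\sigma=0$ slice to control the empirical loss at $\tau_0$, apply a VC/Rademacher uniform deviation bound of order $\sqrt{(|\tau|+1)/n}$ plus a Hoeffding-type tail over the nuisance family $(\bm{\beta}_\tau,\sigma)$ for each competitor, and run the union bound over models organized both by $|\tau_0\setminus\tau|$ and by $|\tau|$ to obtain the two exponents. The only cosmetic difference is that you use a pointwise Hoeffding bound at a near-minimizer for the baseline term where the paper takes a supremum over $\bm{\beta}_{\tau_0}$ at $\sigma=0$; both yield the same rate.
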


\begin{Remark}

    \begin{itemize}
        \item[(1)] Besides the coverage for $\tau_0$, we can also guarantee the consistency of $\mathcal{C}$. Specifically, under Assumption \ref{ass_signal_strong}, using the same notation as in Theorem \ref{thm_candidate_strong_signal}, if we set $d$ such that $\log d\lesssim \log p$, then with high probability, we have $\mathcal{C}$ contains only $\tau_0$,
        \[\Prob(\mathcal{C}\ne\{\tau_0\})\lesssim e^{-cn\tilde c_{\min}^*}\wedge e^{-cnc_{\min}^*}.\]
        Note that to conduct inference for $\tau_0$ and $\bm{\beta}_0$, it is only necessary that $\tau_0\in\mathcal{C}$, but $\mathcal{C}=\{\tau_0\}$ is not required. Therefore, we can set $d$ as large as necessary.
        \item[(2)] Combining Theorem \ref{thm_candidate} and \ref{thm_candidate_strong_signal}, it becomes evident that the model candidate set $\mathcal{C}$ is adaptive to the signal strength. Under the weak signal strength Assumption \ref{ass_signal}, as we discussed in Remark \ref{rem_cmin} and \ref{rem_betamin}, none of the existing work can be guaranteed to find $\tau_0$, but our approach assures $\tau_0\in\mathcal{C}$ as long as $d$ is large enough. Furthermore, if the stronger signal strength Assumption \ref{ass_signal_strong} is satisfied, then $d$ doesn't need to be large at all, since $\tau_0\in\mathcal{C}$ holds for any $d\ge 1$. Moreover, under Assumption \ref{ass_signal_strong}, if $d$ is not too large such that $\log d\lesssim\log p$, it is ensured that $\mathcal{C}=\{\tau_0\}$.
    \end{itemize}

\end{Remark}

\subsection{Inference for \texorpdfstring{$A\bm{\beta}_0$}{Abeta\_0}}\label{sec_Abeta}

In this section, we construct confidence sets for linear combinations of coefficients $A{\bm{\beta}}_0$ for any $A\in\R^{q\times p}$, $q \geq 1$. 
Here, our target is $A{\bm{\beta}}_0$, and we treat $\tau_0$ as the nuisance parameter. 
In the following, we first provide a brief overview of the intuition for inferring $A\bm{\beta}_0$. Then, we elaborate on this intuition with more details.

Recall that $A_{\cdot\tau}$ is a submatrix of $A$ consisting of all the columns with column indexes in $\tau$, so we have $A\bm{\beta}_0=A_{\cdot\tau_0}\bm{\beta}_{0,\tau_0}$. 
Then we can quantify the uncertainty of estimating $A\bm{\beta}_0$ by considering two components: the uncertainty of estimating the model parameters $A_{\cdot\tau_0}{\bm{\beta}}_{0,\tau_0}$ given the true nuisance parameters and the impact of not knowing the nuisance parameters. 
At first, when $\tau_0$ is known, we consider the low-dimensional data $\{(X_{i,\tau_0}^{obs},y^{obs}_i):i\in[n]\}$ with covariates $\bm{X}^{obs}_{\cdot\tau_0}$ constrained on $\tau_0$ and construct a confidence set for $A_{\cdot\tau_0}\bm{\beta}_{0,\tau_0}$ by employing Wald test.
To address the impact of unknown nuisance parameters, we consider each $\hat\tau\in\mathcal{C}$ as a possible true model and apply a Wald test using data $\{(X_{i,\hat\tau}^{obs},y_i^{obs}):i\in[n]\}$, resulting in a set for $A_{\cdot\hat\tau}{\bm{\beta}}_{0,\hat\tau}$, which we refer to as {\it representative set}. If $\hat\tau = \tau_0$,  this resulting set  is a level-$\alpha$ confidence set for $A_{\cdot\tau_0}{\bm{\beta}}_{0}$. However, when $\hat\tau \not = \tau_0$, the confidence statement for the resulting set does not hold, thus we refer it here as a representative set. 
By combining these representative sets, we obtain a valid confidence set for $A{\bm{\beta}}_0$. 
Following the intuition described above, we elaborate on this intuition with more details as follows.

Let us first consider the case where $\tau_0$ is known and derive the confidence set for $A_{\cdot\tau_0}\bm\beta_{0,\tau_0}$. We denote ${\rm rank}(A_{\cdot\tau_0})=r(\tau_0)\le q\wedge|\tau_0|$ and write the rank factorization of $A_{\cdot\tau_0}$ to be $A_{\cdot\tau_0}=C(\tau_0)D(\tau_0)$ with $C(\tau_0)\in\R^{q\times r(\tau_0)}$, $D(\tau_0)\in\R^{r(\tau_0)\times p}$ and $D(\tau_0)D(\tau_0)^\top=I_{r(\tau_0)}$. Then it suffices to construct a confidence set for $D(\tau_0)\bm\beta_{0,\tau_0}$. We denote 
\[\nabla l(\tau_0,\bm\beta_{\tau_0}|X,Y)=\frac{\partial}{\partial\bm\beta_{\tau_0}} l(\tau_0,\bm\beta_{\tau_0}|X,Y),\quad \nabla^2 l(\tau_0,\bm\beta_{\tau_0}|X,Y)=\frac{\partial^2}{\partial\bm\beta_{\tau_0}\partial\bm\beta_{\tau_0}^\top} l(\tau_0,\bm\beta_{\tau_0}|X,Y),\]
and set the quasi MLE of $\bm\beta_{0,\tau_0}$ to be
\[\hat{\bm\beta}_{\tau_0}=\argmax_{\bm\beta_{\tau_0}\in\R^{|\tau_0|}}\sum_{i\in[n]} l(\tau_0,\bm\beta_{\tau_0}|X_i^{obs},Y_i^{obs}).\]
Then we estimate the asymptotic covariance matrix of $D(\tau_0)\hat{\bm\beta}_{0,\tau_0}$ by
\begin{equation}\label{eq_variance}
    \hat V(\tau_0)=D(\tau_0)\hat H(\tau_0)^{-1}\widehat{\Cov}(\nabla l(\tau_0,\hat{\bm\beta}_{\tau_0}|X,Y))\hat H(\tau_0)^{-1}D(\tau_0)^\top,\quad \hat H(\tau_0)=\frac{1}{n}\sum_{i\in[n]}\nabla^2l(\tau_0,\hat{\bm\beta}_{\tau_0}|X_i^{obs},Y_i^{obs})
\end{equation}
where $\widehat{\Cov}$ denotes the sample covariance matrix. Finally, we set the test statistic for the working hypothesis $H_0:D(\tau_0)\bm\beta_{0,\tau_0}=t,\bm\beta_{0,\tau_0^c}=\bm0$ versus $H_1:D(\tau_0)\bm\beta_{0,\tau_0}\ne t,\bm\beta_{0,\tau_0^c}=\bm0$ to be
\[\tilde T(\bm X^{obs}, \bm y^{obs},(\tau_0,t))=n\|\hat V(\tau_0)^{-\frac{1}{2}}(D(\tau_0)\hat{\bm\beta}_{0,\tau_0}-t)\|_2^2.\]
Due to the Chi-squared approximation of the Wald test statistic in moderate dimension, if we denote $F^{-1}_{\chi^2_r}(\alpha)$ to be the $\alpha$-quantile of $\chi^2_r$, then
\[\Prob\big(\tilde T(\bm X,\bm y,(\tau_0,D(\tau_0)\bm\beta_{0,\tau_0}))\le F^{-1}_{\chi^2_{r(\tau_0)}}(\alpha)\big)\rightarrow \alpha,\]
which results in a level-$\alpha$ confidence set for $D(\tau_0)\bm\beta_{0,\tau_0}$. Although we focus on the Wald test in this section, alternative test statistics, such as the score test or those based on pseudo-likelihood, can also be applied.

Secondly, to deal with the impact of unknown $\tau_0$, we apply the previous procedure to each candidate model pretending it is the true model, then we combine all the sets together to get a level-$\alpha$ confidence set of $A{\bm{\beta}}_{0}$:
\begin{align*}
    \Gamma_\alpha^{A{\bm{\beta}}_{0}}(\bm{X}^{obs},\bm{y}^{obs})=\big\{\tilde t:\tilde t=C(\tau)t,\tilde T(\bm{X}^{obs},\bm{y}^{obs},(\tau,t))\le F^{-1}_{\chi^2_{r(\tau)}}(\alpha),\tau\in\mathcal{C}\big\}.
\end{align*}
We summarize the above procedure in Algorithm \ref{alg_coefficient_confidence}. 

\begin{algorithm}
\caption{Confidence set for $A{\bm{\beta}}_{0}$}\label{alg_coefficient_confidence}
\begin{algorithmic}[1]
\State{\bf Input:} Observed data $(\bm{X}^{obs},\bm{y}^{obs})$, model candidate set $\mathcal{C}$. 
\State{\bf Output:} Confidence set $\Gamma_{\alpha}^{A\bm{\beta}_{0}}(\bm{X}^{obs},\bm{y}^{obs})$ for $A{\bm{\beta}}_{0}$.
\For{$\tau\in\mathcal{C}$}
\State Calculate the MLE
\[\hat{\bm\beta}_{\tau}=\argmax_{\bm\beta_{\tau}\in\R^{|\tau|}}\sum_{i\in[n]} l(\tau,\bm\beta_{\tau}|X_i^{obs},Y_i^{obs}),\]
and the matrix factorization $A_{\cdot\tau}=C(\tau)D(\tau)$ with $D(\tau)D(\tau)^\top=I_{r(\tau)}$. 
\State Estimate the asymptotic covariance matrix
\[\hat V(\tau)=D(\tau)\hat H(\tau)^{-1}\widehat{\Cov}(\nabla l(\tau,\hat{\bm\beta}_{\tau}|X,Y))\hat H(\tau)^{-1}D(\tau)^\top,\quad \hat H(\tau)=\frac{1}{n}\sum_{i\in[n]}\nabla^2l(\tau,\hat{\bm\beta}_{\tau}|X_i^{obs},Y_i^{obs}).\]
\State Calculate
\[\tilde T(\bm X^{obs}, \bm y^{obs},(\tau,t))=n\|\hat V(\tau)^{-\frac{1}{2}}(D(\tau)\hat{\bm\beta}_{0,\tau}-t)\|_2^2.\]

\EndFor
\State Construct
\[\Gamma_{\alpha}^{A{\bm{\beta}}_0}(\bm{X}^{obs},\bm{y}^{obs})=\{\tilde t:\tilde t=C(\tau)t,\tilde T(\bm{X}^{obs},\bm{y}^{obs},(\tau,t))\le F_{\chi_{r(\tau)}^2}^{-1}(\alpha),\tau\in\mathcal{C}\}.\]
\end{algorithmic}
\end{algorithm}
It is worth noting that once we get the confidence set $\Gamma^{A\bm{\beta}_0}_\alpha(\bm{X}^{obs},\bm{y}^{obs})$ for $A\bm{\beta}_0$, it is straightforward to transfer $\Gamma^{A\bm{\beta}_0}_\alpha(\bm{X}^{obs},\bm{y}^{obs})$ into the confidence set $\Gamma^{h(A\bm{\beta}_0)}_\alpha(\bm{X}^{obs},\bm{y}^{obs})$ for a nonlinear transformation $h$ of $A\bm{\beta}_0$, by applying $h$ to each element in $\Gamma^{A\bm{\beta}_0}_\alpha(\bm{X}^{obs},\bm{y}^{obs})$,
\[\Gamma^{h(A\bm{\beta}_0)}_\alpha(\bm{X}^{obs},\bm{y}^{obs})=\{h(t):t\in\Gamma_\alpha^{A\bm{\beta}_0}(\bm{X}^{obs},\bm{y}^{obs})\}.\]

In the following, we provide the theoretical guarantee of Algorithm~\ref{alg_coefficient_confidence} to show the valid coverage of $\Gamma_{\alpha}^{A{\bm{\beta}}_0}(\bm{X}^{obs},\bm{y}^{obs})$ and $\Gamma_{\alpha}^{h(A{\bm{\beta}}_0)}(\bm{X}^{obs},\bm{y}^{obs})$. We first introduce an assumption.

\begin{Assumption}\label{ass_subgaussian}
    Suppose $\|X_{\tau_0}\|_{\psi_2}\lesssim 1$. Denote 
    \[\eta(z)\overset{\triangle}{=}g^{-1}(z),\quad h_1(z)\overset{\triangle}{=}\frac{\eta''(z)}{\eta(z)}-\bigg(\frac{\eta'(z)}{\eta(z)}\bigg)^2,\quad h_0(z)\overset{\triangle}{=}\frac{\eta''(z)}{1-\eta(z)}+\bigg(\frac{\eta'(z)}{1-\eta(z)}\bigg)^2,\]
    we assume
    \begin{equation}\label{eq_loglike_condition}
        \bigg\|\frac{\eta'}{\eta}\bigg\|_\infty+\bigg\|\frac{\eta'}{1-\eta}\bigg\|_\infty+\|h_1\|_\infty+\|h_0\|_\infty\lesssim 1,\quad h_1<0<h_0.
    \end{equation}
\end{Assumption}
Assumption \ref{ass_subgaussian} guarantees that the gradient of log-likelihood is sub-Gaussian and the Hessian of log-likelihood is sub-exponential. The $\ell_\infty$ control can be relaxed to other tail probability assumptions, such as sub-Gaussian conditions. Here we take $\ell_\infty$ for simplicity, and it is satisfied by the logistic regression model. 

\begin{Assumption}\label{ass_hessian}
    Denote $H=\E\nabla^2 l(\tau_0,\bm\beta_{0,\tau_0}|X,Y)$ to be the expected Hessian of the log-likelihood function, we assume
    \[\lambda_{\min}(H)\asymp\lambda_{\max}(H)\asymp 1.\]
\end{Assumption}

Assumption \ref{ass_hessian} is on the Hessian matrix under $\tau_0$, rather than the Hessian matrix with respect to the full coefficient vector $\bm{\beta}_0$. Therefore, it is weaker than other commonly imposed conditions on the Hessian matrix \citep{cai2021statistical,van2014asymptotically,fei2021estimation}.

Theorem~\ref{thm_Abeta} below states that $\Gamma_\alpha^{A{\bm{\beta}}_0}(\bm{X}^{obs},\bm{y}^{obs})$ and $\Gamma_{\alpha}^{h(A{\bm{\beta}}_0)}(\bm{X}^{obs},\bm{y}^{obs})$ are level-$\alpha$ confidence sets of $A{\bm{\beta}}_0$ and $h(A\bm{\beta}_0)$, respectively. A proof can be found in the Appendix.

\begin{Theorem}\label{thm_Abeta}
If Assumptions \ref{ass_subgaussian}, \ref{ass_hessian} holds and $n\gg s^2$, when one of the following conditions holds
\begin{itemize}
    \item[(1)] $d\rightarrow \infty$ at first, then $n\rightarrow\infty$, and $n, p, s$ satisfy Assumption \ref{ass_signal}, 
    \item[(2)]  fix any $d$, $n\rightarrow \infty$, and $n, p, s$ satisfy Assumption \ref{ass_signal_strong},
\end{itemize}
then the confidence sets $\Gamma_\alpha^{A{\bm{\beta}}_0}(\bm{X}^{obs},\bm{y}^{obs})$ and $\Gamma_{\alpha}^{h(A{\bm{\beta}}_0)}(\bm{X}^{obs},\bm{y}^{obs})$ are asymptotically valid
\[\Prob(A{\bm{\beta}}_0\in\Gamma_{\alpha}^{A{\bm{\beta}}_0}(\bm{X},\bm{y}))\ge\alpha-o(1),\quad \Prob(h(A{\bm{\beta}}_0)\in\Gamma_{\alpha}^{h(A{\bm{\beta}}_0)}(\bm{X},\bm{y}))\ge\alpha-o(1).\]
\end{Theorem}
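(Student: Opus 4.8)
The key structural observation is that $\Gamma_\alpha^{A\bm\beta_0}$ is a \emph{union} of representative sets over $\tau\in\mathcal C$, so for the one-sided coverage statement it suffices to keep the single term coming from the true model $\tau_0$ and to bound the probability that $\tau_0$ is missing from $\mathcal C$. Concretely, since $A\bm\beta_0=A_{\cdot\tau_0}\bm\beta_{0,\tau_0}=C(\tau_0)D(\tau_0)\bm\beta_{0,\tau_0}$ and $\bm\beta_{0,\tau_0^c}=\bm0$ by construction, on the event $\{\tau_0\in\mathcal C\}$ we have
\[
\{A\bm\beta_0\in\Gamma_\alpha^{A\bm\beta_0}(\bm X,\bm y)\}\supseteq\{\tilde T(\bm X,\bm y,(\tau_0,D(\tau_0)\bm\beta_{0,\tau_0}))\le F^{-1}_{\chi^2_{r(\tau_0)}}(\alpha)\},
\]
hence
\[
\Prob(A\bm\beta_0\in\Gamma_\alpha^{A\bm\beta_0})\ge\Prob\big(\tilde T(\bm X,\bm y,(\tau_0,D(\tau_0)\bm\beta_{0,\tau_0}))\le F^{-1}_{\chi^2_{r(\tau_0)}}(\alpha)\big)-\Prob(\tau_0\notin\mathcal C).
\]
Under condition (1), Theorem \ref{thm_candidate} bounds $\Prob(\tau_0\notin\mathcal C)$ by a term that vanishes once $d\to\infty$ is taken first plus a term $\lesssim 2^{-cn\tilde c_{\min}}\wedge2^{-cnc_{\min}}$ that vanishes as $n\to\infty$ because Assumption \ref{ass_signal} forces $n\tilde c_{\min},nc_{\min}\to\infty$; under condition (2), Theorem \ref{thm_candidate_strong_signal} with Assumption \ref{ass_signal_strong} gives $\Prob(\tau_0\notin\mathcal C)\lesssim e^{-cn\tilde c^*_{\min}}\wedge e^{-cnc^*_{\min}}=o(1)$ for every fixed $d$. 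The claim for $h(A\bm\beta_0)$ then follows with no extra work, since $h(A\bm\beta_0)\in\Gamma_\alpha^{h(A\bm\beta_0)}$ whenever $A\bm\beta_0\in\Gamma_\alpha^{A\bm\beta_0}$, by the definition of the transformed set as the image of $\Gamma_\alpha^{A\bm\beta_0}$ under $h$.

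\textbf{The substantive step} is the moderate-dimensional Wald approximation $\Prob\big(\tilde T(\bm X,\bm y,(\tau_0,D(\tau_0)\bm\beta_{0,\tau_0}))\le F^{-1}_{\chi^2_{r(\tau_0)}}(\alpha)\big)\to\alpha$ under the possibly misspecified working GLM, with $|\tau_0|=s$ allowed to grow subject to $n\gg s^2$. I would localize all arguments to the coordinates $\tau_0$ and run a sandwich-type M-estimation analysis. First, by concavity of $\bm\beta_{\tau_0}\mapsto\sum_i l(\tau_0,\bm\beta_{\tau_0}|X^{obs}_i,Y^{obs}_i)$, Assumption \ref{ass_hessian} ($\lambda_{\min}(H)\asymp1$), and the sub-Gaussian/sub-exponential tail controls of Assumption \ref{ass_subgaussian}, show $\norm{\hat{\bm\beta}_{\tau_0}-\bm\beta_{0,\tau_0}}_2\lesssim\sqrt{s/n}$ with high probability; here it is essential that $\bm\beta_{0,\tau_0}$ is the population maximizer of $\E l(\tau_0,\cdot|X,Y)$ in \eqref{eq_beta0}, so the score $\nabla l(\tau_0,\bm\beta_{0,\tau_0}|X,Y)$ is mean zero even under misspecification and the sandwich form $\hat V(\tau_0)$ is the correct asymptotic covariance. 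Second, Taylor-expand the score equation to get $\sqrt n(\hat{\bm\beta}_{\tau_0}-\bm\beta_{0,\tau_0})=-\hat H(\bar{\bm\beta})^{-1}\sqrt n\,\overline{\nabla l}(\bm\beta_{0,\tau_0})$ at an intermediate $\bar{\bm\beta}$, where $\overline{\nabla l}$ is the empirical average score, and bound the operator-norm fluctuation of the empirical Hessian over the $\sqrt{s/n}$-ball via matrix Bernstein plus an $\epsilon$-net, which is negligible precisely when $s^2/n\to0$; this also yields $\hat H(\tau_0)\to H$ and $\widehat\Cov(\nabla l)\to\Cov(\nabla l(\tau_0,\bm\beta_{0,\tau_0}|X,Y))$ in operator norm, hence $\hat V(\tau_0)\to V\overset{\triangle}{=}D(\tau_0)H^{-1}\Cov(\nabla l)H^{-1}D(\tau_0)^\top$. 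Third, a CLT for the centered i.i.d.\ score sum, applied to the at-most-$r(\tau_0)$-dimensional statistic $D(\tau_0)H^{-1}\sqrt n\,\overline{\nabla l}(\bm\beta_{0,\tau_0})$, gives asymptotic $\gauss(0,V)$, either by a fixed-dimensional CLT when $r(\tau_0)$ stays bounded or by a growing-dimension Lyapunov/Lindeberg argument matching the standardized chi-square $(\chi^2_r-r)/\sqrt{2r}\to\gauss(0,1)$ when $r(\tau_0)\to\infty$; combined with $\hat V(\tau_0)\to V$ this gives $\tilde T=n\norm{\hat V(\tau_0)^{-1/2}(D(\tau_0)\hat{\bm\beta}_{\tau_0}-D(\tau_0)\bm\beta_{0,\tau_0})}_2^2\Rightarrow\chi^2_{r(\tau_0)}$ in the appropriate sense, so the $F^{-1}_{\chi^2_{r(\tau_0)}}(\alpha)$-level acceptance probability tends to $\alpha$.

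\textbf{The main obstacle} is the joint control, in growing dimension $s$, of (i) the Taylor remainder, i.e.\ the operator-norm fluctuation of the GLM log-likelihood Hessian over a local ball, which is where the $n\gg s^2$ condition is consumed, and (ii) the quadratic-form/chi-square approximation for the Wald statistic when $r(\tau_0)$ may diverge, both of which need sharper-than-textbook concentration because the nuisance dimension is not fixed. A secondary point is that $\hat V(\tau_0)$ is evaluated at $\hat{\bm\beta}_{\tau_0}$ rather than $\bm\beta_{0,\tau_0}$, so the $\norm{\hat{\bm\beta}_{\tau_0}-\bm\beta_{0,\tau_0}}_2$ bound must be propagated through the sub-exponential Hessian terms (the $h_0,h_1$ of Assumption \ref{ass_subgaussian}) to certify $\hat V(\tau_0)^{-1/2}\to V^{-1/2}$; the Lipschitz-type $\ell_\infty$ controls on $\eta'/\eta$, $\eta'/(1-\eta)$, $h_0$, $h_1$ in Assumption \ref{ass_subgaussian} are exactly what makes this step go through.
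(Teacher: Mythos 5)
Your proposal is correct and follows essentially the same route as the paper: reduce to the representative set at $\tau_0$ plus $\Prob(\tau_0\notin\mathcal C)$ controlled by Theorems \ref{thm_candidate} and \ref{thm_candidate_strong_signal}, then run a sandwich M-estimation analysis localized to $\tau_0$ with $\|\hat{\bm\beta}_{\tau_0}-\bm\beta_{0,\tau_0}\|_2=O_P(\sqrt{s/n})$, a score-equation expansion whose Hessian and Taylor remainders consume $n\gg s^2$, and operator-norm consistency of the sandwich variance $\hat V(\tau_0)$. The only place the paper is more specific is the final step, where instead of your ``fixed-dimensional CLT or standardized $\chi^2$'' alternative it invokes a quantitative Gaussian-approximation bound for the Euclidean norm of the normalized score sum (rate $r^2/\sqrt n$, via \cite{zhilova2022new}) together with anti-concentration of $\|G\|_2$ to absorb the $O_P(s/\sqrt n)$ remainder.
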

\begin{Remark}
    Note that our target parameter $\bm\beta_{0,\tau_0}$ is defined to be the optimal GLM based on a subset of covariates $X_{\tau_0}$ and we do not assume the optimal GLM using all the covariates $X$ to be sparse, rendering the standard inference methods for high-dimensional problems \citep{shi2019linear,van2014asymptotically,cai2021statistical} not applicable.
\end{Remark}

When the sparse GLM is well-specified, \cite{shi2019linear} also studied the problem of testing $A\bm{\beta}_0$ but with the assumption that $A$ has only $m$ non-zero columns. 
This implies only $m$ elements $\bm{\beta}_{0,M}$ of $\bm{\beta}_0$ are involved in $A\bm{\beta}_0$, for some $M\subset[p]$ with $|M|=m$. 
They developed asymptotically valid tests using partial penalized Wald, score and likelihood ratio statistics, respectively. 
However, the validity of their proposed tests relies on two conditions. 
On the one hand, they suppose $s+m\ll n^{\frac{1}{3}}$, which restricts the number of coefficients in the test and excludes many important cases such as $A\bm{\beta}_0=\bm{\beta}_0$. 
On the other hand, their approach requires a signal strength condition on the coefficients $\bm{\beta}_{0,M^c}$ that are not involved in the hypothesis, which is similar to the $\beta$-min condition.

Marginal inference for single coefficients $\beta_{0,j}$ and joint inference for the whole vector ${\bm{\beta}}_0$ are usually of particular interest. 
Additionally, simultaneous inference for the working case probabilities of a set of new observations plays an important role in many cases, such as electronic health record data analysis \citep{guo2021inference}.
Equipped with the general result in Theorem \ref{thm_Abeta}, we can address these special cases by setting $A=e_j^\top$, $A=I_p$, and $A=\bm{X}_{\rm new}\in\R^{n_{\rm new}\times p}$, respectively.

\subsubsection{Inference for single coefficient \texorpdfstring{$\beta_{0,j}$}{beta\_0,j}} \label{sec_betaj}

Following the general framework described in Section \ref{sec_Abeta} with $A=e_j^\top$, to construct a confidence set for $\beta_{0,j}$, we apply the Wald test to $\beta_{0,j}$ under each candidate model. Concretely, given any candidate model $\tau\in\mathcal{C}$, we test the working hypothesis $H_0:\beta_{0,j}=\beta_j,{\bm{\beta}}_{0,\tau^c}=\bm{0}$ versus $H_1:\beta_{0,j}\ne \beta_j,{\bm{\beta}}_{0,\tau^c}=\bm{0}$.
Without loss of generality, we assume $j\in\tau$, otherwise, if $j\not\in\tau$ and $\beta_j=0$, we accept $H_0$ and if $j\not\in\tau$, $\beta_j\ne 0$, we reject $H_0$. With the quasi MLE $\hat{\bm\beta}_{\tau}$, we calculate the asymptotic variance \eqref{eq_variance}
\[\hat V=e_j^\top \hat H(\tau)^{-1}\widehat{\Cov}(\nabla l(\tau,\hat{\bm\beta}_{\tau}|X,Y))\hat H(\tau)^{-1}e_j,\]
then the Wald test statistic is
\[\tilde T(\bm{X}^{obs},\bm{y}^{obs},(\tau,\beta_j))=\frac{n(\hat{\beta}_{j}-\beta_j)^2}{\hat V}.\]
Finally, we combine the Wald test statistics corresponding to each candidate model and define the level-$\alpha$ confidence set for $\beta_{0,j}$ as
\begin{align*} 
\Gamma_\alpha^{\beta_{0,j}}(\bm{X}^{obs},\bm{y}^{obs}) = \{\beta_j: \tilde T(\bm{X}^{obs},\bm{y}^{obs},(\tau,\beta_j)) \le F^{-1}_{\chi^2_{\1(j\in\tau)}}(\alpha), \tau \in \mathcal{C}\}.
\end{align*}
Following Theorem \ref{thm_Abeta}, we can show $\Gamma_{\alpha}^{\beta_{0,j}}(\bm{X}^{obs},\bm{y}^{obs})$ is a valid asymptotic level-$\alpha$ confidence set for $\beta_{0,j}$.
\begin{Corollary}
If Assumptions \ref{ass_subgaussian}, \ref{ass_hessian} holds and $n\gg s^2$, for any $j\in[p]$, when one of the following conditions holds
\begin{itemize}
    \item[(1)] $d\rightarrow \infty$ at first, then $n\rightarrow\infty$, and $n, p, s$ satisfy Assumption \ref{ass_signal}, 
    \item[(2)]  fix any $d$, $n\rightarrow \infty$, and $n, p, s$ satisfy Assumption \ref{ass_signal_strong},
\end{itemize}
then
\[\Prob(\beta_{0,j}\in\Gamma_\alpha^{\beta_{0,j}}(\bm{X},\bm{y}))\ge \alpha-o(1).\]
\end{Corollary}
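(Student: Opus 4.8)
The plan is to obtain the Corollary as an immediate specialization of Theorem~\ref{thm_Abeta} with the linear functional $A = e_j^\top \in \R^{1\times p}$, so that $A\bm\beta_0 = \beta_{0,j}$ and $q = 1$. The bulk of the work is purely notational: one checks that, for this choice of $A$, the set $\Gamma_\alpha^{\beta_{0,j}}(\bm X^{obs},\bm y^{obs})$ defined in Section~\ref{sec_betaj} coincides with the general set $\Gamma_\alpha^{A\bm\beta_0}(\bm X^{obs},\bm y^{obs})$ from Algorithm~\ref{alg_coefficient_confidence}.

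First I would unpack the rank factorization for each $\tau \in \mathcal C$. When $j \in \tau$, the submatrix $A_{\cdot\tau} = (e_j^\top)_{\cdot\tau}$ is a nonzero row vector, hence $r(\tau) = \mathrm{rank}(A_{\cdot\tau}) = 1$; one may take $C(\tau) = 1$ and $D(\tau)$ equal to the coordinate row vector selecting $\beta_j$ within $\tau$ (so $D(\tau)D(\tau)^\top = 1$), which gives $\tilde t = C(\tau)t = t = \beta_j$ and reduces the quadratic form $n\|\hat V(\tau)^{-1/2}(D(\tau)\hat{\bm\beta}_\tau - t)\|_2^2$ to the scalar statistic $n(\hat\beta_j - \beta_j)^2/\hat V$ used in Section~\ref{sec_betaj}, compared against $F^{-1}_{\chi^2_1}(\alpha)$. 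When $j \notin \tau$, we have $A_{\cdot\tau} = 0$ and $r(\tau) = 0$; with the convention that $\chi^2_0$ is the point mass at $0$ (so $F^{-1}_{\chi^2_0}(\alpha) = 0$), the acceptance rule $\tilde T \le F^{-1}_{\chi^2_{\1(j\in\tau)}}(\alpha)$ becomes exactly the rule stated in Section~\ref{sec_betaj}: accept $\beta_j$ iff $\beta_j = 0$, which matches keeping only the point $\tilde t = 0$ that the representative set $\{C(\tau)t\}$ contributes in the degenerate case. Hence the two set constructions agree term by term over $\tau \in \mathcal C$, and therefore as sets.

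Having established this identification, I would simply invoke Theorem~\ref{thm_Abeta}: since Assumptions~\ref{ass_subgaussian} and \ref{ass_hessian} hold, $n \gg s^2$, and one of the two signal-strength regimes is in force (case (1) with Assumption~\ref{ass_signal} and $d\to\infty$ before $n\to\infty$, or case (2) with Assumption~\ref{ass_signal_strong} and $d$ fixed), the theorem gives $\Prob(A\bm\beta_0 \in \Gamma_\alpha^{A\bm\beta_0}(\bm X,\bm y)) \ge \alpha - o(1)$, which is precisely $\Prob(\beta_{0,j} \in \Gamma_\alpha^{\beta_{0,j}}(\bm X,\bm y)) \ge \alpha - o(1)$.

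I do not anticipate a genuine obstacle here, since all the probabilistic content --- the Wald/$\chi^2$ approximation under the working model, the profiling over $\mathcal C$, and the guarantee $\tau_0 \in \mathcal C$ from Theorems~\ref{thm_candidate} and \ref{thm_candidate_strong_signal} --- is already packaged in Theorem~\ref{thm_Abeta}. The only point requiring a little care is the bookkeeping for models $\tau$ with $j \notin \tau$, i.e.\ making the degenerate rank-$0$ / $\chi^2_0$ convention in Algorithm~\ref{alg_coefficient_confidence} match the explicit accept/reject description given in Section~\ref{sec_betaj}; once that is pinned down, the Corollary is immediate.
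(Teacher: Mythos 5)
Your proposal is correct and matches the paper's treatment exactly: the paper also obtains this Corollary as a direct specialization of Theorem~\ref{thm_Abeta} with $A=e_j^\top$, and your careful bookkeeping for the degenerate case $j\notin\tau$ (rank $0$, accept iff $\beta_j=0$) is consistent with the convention stated in Section~\ref{sec_betaj}. No gaps.
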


The debiasing methods for high-dimensional logistic regression models \citep{cai2021statistical,van2014asymptotically} have been proposed for inferring single coefficients when the optimal GLM using all the covariates $X$ is sparse. 
These methods require a constant lower bound for the smallest eigenvalue, of either the Hessian matrix with respect to $\bm{\beta}_0$ or the covariance matrix $\E XX^\top$. Such assumptions can be violated if, for instance, two non-informative covariates are identical. 
However, since Assumption~\ref{ass_hessian} only involves $X_{\tau_0}$, our results remain valid in such cases. Moreover, the debiasing methods typically require the sample size to be large enough such that $n\gg s^2\log^2 p$, but we only suppose $n\gg s^2$. More importantly, our method doesn't require a well-specified sparse GLM and remains valid under a misspecified dense model.

The confidence sets generated by debiasing methods are intervals for any $\beta_{0,j}$, regardless of whether $\beta_{0,j}$ is zero. In contrast, the confidence sets produced by our method are unions of intervals. Specifically, if a candidate model contains the index $j$, the confidence set for $\beta_{0,j}$ will encompass the interval derived under that candidate model. If no candidate model includes $j$, then we are confident that $\beta_{0,j}=0$ and the confidence set for $\beta_{0,j}$ reduces to a singleton $\{0\}$. Therefore our method is more flexible and can adapt to the uncertainties of model selection.

\subsubsection{Inference for \texorpdfstring{$\bm{\beta}_{0,\tau_0}$}{beta\_0}} 

Following the general framework in Section \ref{sec_Abeta} with $A=I_p$, to construct a confidence set for $\bm{\beta}_0$, we apply the Wald test to $\bm{\beta}_0$ under each candidate model. Particularly, for each candidate model $\tau\in\mathcal{C}$, we consider the working hypothesis $H_0:{\bm{\beta}}_{0,\tau}={\bm{\beta}}_{\tau},{\bm{\beta}}_{0,\tau^c}=\bm{0}$ versus $H_1:{\bm{\beta}}_{0,\tau}\ne{\bm{\beta}}_{\tau},{\bm{\beta}}_{0,\tau^c}=\bm{0}$.
Based on the quasi MLE $\hat{\bm{\beta}}_{\tau}$, we estimate the asymptotic covariance matrix
\[\hat V(\tau)=\hat H(\tau)^{-1}\widehat{\Cov}(\nabla l(\tau,\hat{\bm\beta}_{\tau}|X,Y))\hat H(\tau)^{-1},\]
then the Wald test statistic is 
\[\tilde T(\bm{X}^{obs},\bm{y}^{obs},(\tau,\bm{\beta}_{\tau}))=n\|\hat V(\tau)^{-\frac{1}{2}}(\hat{\bm\beta}_{\tau}-\bm\beta_{\tau})\|_2^2.\]
Given the Wald test statistics corresponding to each candidate model, the final level-$\alpha$ confidence set for ${\bm{\beta}}_0$ is
\[\Gamma_\alpha^{{\bm{\beta}}_0}(\bm{X}^{obs},\bm{y}^{obs})=\{{\bm{\beta}}:\tilde T(\bm{X}^{obs},\bm{y}^{obs},(\tau,{\bm{\beta}}_\tau))\le F^{-1}_{\chi^2_{|\tau|}}(\alpha),{\bm{\beta}}_{\tau^c}=\bm{0},\tau\in\mathcal{C}\}.\]
Similarly, we have the following corollary stating that $\Gamma_\alpha^{{\bm{\beta}}_0}(\bm{X}^{obs},\bm{y}^{obs})$ has asymptotic coverage $\alpha$.
\begin{Corollary}\label{cor_joint}
If Assumptions \ref{ass_subgaussian}, \ref{ass_hessian} holds and $n\gg s^2$, when one of the following conditions holds
\begin{itemize}
    \item[(1)] $d\rightarrow \infty$ at first, then $n\rightarrow\infty$, and $n, p, s$ satisfy Assumption \ref{ass_signal}, 
    \item[(2)]  fix any $d$, $n\rightarrow \infty$, and $n, p, s$ satisfy Assumption \ref{ass_signal_strong},
\end{itemize}
then
\[\Prob({\bm{\beta}}_{0}\in\Gamma_\alpha^{{\bm{\beta}}_0}(\bm{X},\bm{y}))\ge \alpha-o(1).\]
\end{Corollary}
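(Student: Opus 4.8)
The plan is to read Corollary \ref{cor_joint} as the instantiation $A = I_p$ of Theorem \ref{thm_Abeta}, so the only real content is to check that the confidence set $\Gamma_\alpha^{\bm{\beta}_0}$ written out in this subsection is literally the set $\Gamma_\alpha^{A\bm{\beta}_0}$ produced by Algorithm \ref{alg_coefficient_confidence} for that choice of $A$. First I would do the bookkeeping: with $A = I_p$ one has $\mathrm{rank}(A_{\cdot\tau}) = r(\tau) = |\tau|$, and one may take $D(\tau)$ to be the restriction-to-$\tau$ map and $C(\tau)$ the zero-padding map, so that $D(\tau)D(\tau)^\top = I_{|\tau|}$, $D(\tau)\bm{\beta}_{0,\tau} = \bm{\beta}_{0,\tau}$, the sandwich estimate $\hat V(\tau) = D(\tau)\hat H(\tau)^{-1}\widehat{\Cov}(\nabla l)\hat H(\tau)^{-1}D(\tau)^\top$ collapses to $\hat H(\tau)^{-1}\widehat{\Cov}(\nabla l(\tau,\hat{\bm{\beta}}_\tau|X,Y))\hat H(\tau)^{-1}$, the Wald statistic becomes $n\|\hat V(\tau)^{-1/2}(\hat{\bm{\beta}}_\tau - \bm{\beta}_\tau)\|_2^2$ compared against $F^{-1}_{\chi^2_{|\tau|}}(\alpha)$, and $\{C(\tau)t : t\in\R^{|\tau|}\} = \{\bm{\beta}\in\R^p : \bm{\beta}_{\tau^c} = \bm{0}\}$. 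These are precisely the ingredients of $\Gamma_\alpha^{\bm{\beta}_0}$, so $\Gamma_\alpha^{\bm{\beta}_0} = \Gamma_\alpha^{I_p\bm{\beta}_0}$ verbatim.

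Next I would invoke Theorem \ref{thm_Abeta} directly. Its hypotheses --- Assumptions \ref{ass_subgaussian} and \ref{ass_hessian}, the condition $n\gg s^2$, and either (1) the weak-signal regime (Assumption \ref{ass_signal}) with $d\to\infty$ taken before $n\to\infty$, or (2) the strong-signal regime (Assumption \ref{ass_signal_strong}) with $d$ fixed --- are exactly those stated in Corollary \ref{cor_joint}, and Theorem \ref{thm_Abeta} imposes no restriction on the row dimension $q$ of $A$, so the case $q = p$ is admissible. Hence $\Prob(A\bm{\beta}_0 \in \Gamma_\alpha^{A\bm{\beta}_0}(\bm{X},\bm{y})) \ge \alpha - o(1)$ applied with $A = I_p$ yields the claimed $\Prob(\bm{\beta}_0 \in \Gamma_\alpha^{\bm{\beta}_0}(\bm{X},\bm{y})) \ge \alpha - o(1)$.

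For completeness I would recall, one level down, why nothing beyond Theorem \ref{thm_Abeta} is required and where the assumptions enter. Coverage decomposes along the event $\{\tau_0 \in \mathcal{C}\}$, which has probability $1 - o(1)$ by Theorem \ref{thm_candidate} in regime (1) and by Theorem \ref{thm_candidate_strong_signal} in regime (2); on that event the $\tau = \tau_0$ term of the union is an ordinary Wald set built from the quasi-MLE $\hat{\bm{\beta}}_{\tau_0}$, whose studentized quadratic form converges in distribution to $\chi^2_{s}$ --- this is where $n\gg s^2$, together with the sub-Gaussian gradient / sub-exponential Hessian control of Assumption \ref{ass_subgaussian} and the conditioning of $H$ in Assumption \ref{ass_hessian}, is used to bound the estimation error of the sandwich variance and the remainder in the quadratic expansion of the log-likelihood in the growing dimension $|\tau_0| = s$. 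The remaining terms $\tau \in \mathcal{C}\setminus\{\tau_0\}$ only enlarge $\Gamma_\alpha^{\bm{\beta}_0}$, so they cannot destroy coverage. I do not anticipate a genuine obstacle here: the substantive analysis --- the $\chi^2$ approximation uniformly over candidate models and the handling of the data-dependent set $\mathcal{C}$ --- is already carried out in the proof of Theorem \ref{thm_Abeta}; the only point needing care is confirming that the bespoke formulas of this subsection are the $A = I_p$ specialization of Algorithm \ref{alg_coefficient_confidence}, which is the routine verification described above.
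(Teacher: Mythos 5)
Your proposal is correct and matches the paper's route exactly: Corollary~\ref{cor_joint} is proved in the paper simply as the $A=I_p$ instance of Theorem~\ref{thm_Abeta}, with no separate argument beyond the bookkeeping that $r(\tau)=|\tau|$, $D(\tau)$ is the restriction map, and the displayed $\Gamma_\alpha^{\bm{\beta}_0}$ coincides with the output of Algorithm~\ref{alg_coefficient_confidence}. Your additional recap of the coverage decomposition along $\{\tau_0\in\mathcal{C}\}$ and the $\chi^2$ approximation is consistent with how Theorem~\ref{thm_Abeta} is established and adds nothing that conflicts with the paper.
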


When the optimal GLM using all the covariates $X$ is sparse, \cite{zhang2017simultaneous} also studied the simultaneous inference for $\bm{\beta}_0$ based on the debiasing method \citep{van2014asymptotically}. 
Their approach produces an asymptotically valid test for $\bm{\beta}_0$, provided the smallest eigenvalue of the Hessian matrix of the log-likelihood with respect to $\bm{\beta}_0$ exceeds a positive constant. 
However, this assumption fails to hold if there is collinearity among the non-informative covariates. 
In contrast, our method remains valid in such cases. Moreover, instead of being a full-dimensional ellipsoid, our constructed confidence set is a union of low-dimensional ellipsoids with many coefficients to be exactly zero. Therefore, our method can adapt to the uncertainty of model selection. In addition, we only assume $n\gg s^2$ which is weaker than $n\gg s^2\text{poly}\log(np)$ required in \cite{zhang2017simultaneous}. More importantly, our method remains valid even with model misspecification.

\subsubsection{Simultaneous inference for case probabilities}

GLMs such as logistic regression have been widely applied to detect infectious diseases based on information of patients \citep{ravi2019detection, chadwick2006distinguishing}. 
Statistical inference for patients' case probabilities is critical for identifying those at risk, enabling early intervention.
However, individual-level inference lacks the capacity for group-wise error control and, therefore fails to control disease transmission due to interconnected infection dynamics. 
Consequently, there is an imperative need for simultaneous inference methods for case probabilities of a group of patients.

Given the fixed covariates $\{X_{{\rm new},i}\in\R^p: i\in[n_{\rm new}]\}$ of an arbitrary group of new patients, we use the working GLM $g^{-1}(X_{{\rm new},i}^\top\bm\beta_{0})$
to model the conditional distribution $\Prob(Y_{\text{new},i}=1|X_{\text{new},i})$ of the unknown infection statuses $\{Y_{{\rm new},i}\in\{0,1\}:i\in[n]\}$.
Then the case probabilities $\{g^{-1}(X_{{\rm new},i}^\top\bm{\beta}_0):i\in[n_{\rm new}]\}$ measure the confidence for labeling each new patient as infected. 
Denote $\bm{X}_{\rm new}=(X_{{\rm new},1},\ldots,X_{{\rm new},n_{\rm new}})^\top\in\R^{n_{\rm new}\times p}$, $g^{-1}(\bm{X}_{\rm new}^\top\bm{\beta}_0)=(g^{-1}(X_{{\rm new},1}^\top\bm{\beta}_0),\ldots,g^{-1}(X_{{\rm new},n_{\rm new}}^\top\bm{\beta}_0))^\top\in\R^{n_{\rm new}}$.
To quantify the uncertainty of predicting $Y_{{\rm new}, i}$'s, we aim to conduct statistical inference for all the case probabilities $g^{-1}(\bm{X}_{\rm new}^\top\bm{\beta}_0)$ of these $n_{\rm new}$ new patients simultaneously. To this end, we construct a confidence set for the vector $g^{-1}(\bm{X}_{\rm new}^{\top}\bm{\beta}_0)$ and the matrix $A$ in Section~\ref{sec_Abeta} equals $\bm{X}_{\rm new}^\top$.
Then it suffices to form a confidence set for $\bm{X}_{\rm new}^\top\bm{\beta}_0$. 

Following the strategy described in Section~\ref{sec_Abeta} with $A=\bm{X}^\top_{\rm new}$, to construct a confidence set for $\bm{X}_{\rm new}^\top\bm{\beta}_0$, we apply the Wald test to $\bm{X}_{\rm new}^\top\bm{\beta}_0$ under each candidate model. Specifically, for any candidate model $\tau\in\mathcal{C}$, we consider the working hypotheses $H_0:\bm{X}_{{\rm new},\cdot\tau}\bm{\beta}_{0,\tau}=t, \bm{\beta}_{0,\tau^c}=\bm{0}$ versus $H_1:\bm{X}_{{\rm new},\cdot\tau}\bm{\beta}_{0,\tau}\ne t, \bm{\beta}_{0,\tau^c}=\bm{0}$,
with $\bm{X}_{{\rm new},\cdot\tau}$ to be a submatrix consisting of the columns of $\bm{X}_{\rm new}$ with indexes in $\tau$. Without loss of generality, we assume the existence of $\bm{\beta}$ such that $\bm{X}_{{\rm new},\cdot\tau}\bm{\beta}_{\tau}=t$, otherwise we reject $H_0$. We denote ${\rm rank}(\bm{X}_{{\rm new},\cdot\tau})=r(\tau)$ and decompose $\bm X_{\text{new},\cdot\tau}$ as $\bm X_{\text{new},\cdot\tau}=C(\tau)D(\tau)$ with $D(\tau)D(\tau)^\top=I_{r(\tau)}$. Based on the quasi MLE $\hat{\bm\beta}_{\tau}$, we estimate the asymptotic covariance matrix of $D(\tau)\hat{\bm\beta}_{\tau}$ as
\[\hat V(\tau)=D(\tau)\hat H(\tau)^{-1}\widehat{\Cov}(\nabla l(\tau,\hat{\bm\beta}_{\tau}|X,Y))\hat H(\tau)^{-1}D(\tau)^{\top}.\]
Then the Wald test statistic is
\[\tilde T(\bm{X}^{obs},\bm{y}^{obs},(\tau,t))=n\|\hat V(\tau)^{-\frac{1}{2}}(D(\tau)\hat{\bm\beta}_{\tau}-t)\|_2^2.\]
Given the Wald test statistics corresponding to each candidate model, we define the final confidence set for $h(\bm{X}^\top_{\rm new}\bm{\beta}_0)$ to be
\[\Gamma_\alpha^{h(\bm{X}_{\rm new}\bm{\beta}_0)}(\bm{X}^{obs},\bm{y}^{obs})=\{h(\tilde t):\tilde t=C(\tau)t,\tilde T(\bm{X}^{obs},\bm{y}^{obs},(\tau,t))<F^{-1}_{\chi^2_{r(\tau)}}(\alpha),\tau\in\mathcal{C}\}.\]
According to Theorem~\ref{thm_Abeta}, we know $\Gamma^{h(\bm{X}^\top_{\rm new}\bm{\beta}_0)}_\alpha(\bm{X}^{obs},\bm{y}^{obs})$ is asymptotically valid.

\begin{Corollary}\label{cor_case_probability}
    If Assumptions \ref{ass_subgaussian}, \ref{ass_hessian} holds and $n\gg s^2$, when one of the following conditions holds
\begin{itemize}
    \item[(1)] $d\rightarrow \infty$ at first, then $n\rightarrow\infty$, and $n, p, s$ satisfy Assumption \ref{ass_signal}, 
    \item[(2)]  fix any $d$, $n\rightarrow \infty$, and $n, p, s$ satisfy Assumption \ref{ass_signal_strong},
\end{itemize}
then the confidence set $\Gamma_\alpha^{h(\bm{X}_{\rm new}\bm{\beta}_0)}(\bm{X}^{obs},\bm{y}^{obs})$ is asymptotically valid
\[\Prob(h(\bm{X}_{\rm new}\bm{\beta}_0)\in\Gamma_\alpha^{h(\bm{X}_{\rm new}\bm{\beta}_0)}(\bm{X},\bm{y}))\ge \alpha-o(1).\]
\end{Corollary}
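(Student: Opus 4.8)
The plan is to obtain this corollary as an immediate specialization of Theorem~\ref{thm_Abeta} to the choice $A = \bm{X}_{\rm new}^\top$. First I would check that the confidence set $\Gamma_\alpha^{h(\bm{X}_{\rm new}\bm{\beta}_0)}(\bm{X}^{obs},\bm{y}^{obs})$ constructed in this subsection is literally the general set $\Gamma_\alpha^{h(A\bm{\beta}_0)}(\bm{X}^{obs},\bm{y}^{obs})$ of Section~\ref{sec_Abeta} under this identification. Indeed, with $A = \bm{X}_{\rm new}^\top \in \R^{n_{\rm new}\times p}$ we have $A\bm{\beta}_0 = \bm{X}_{\rm new}^\top\bm{\beta}_0$ and $A_{\cdot\tau} = \bm{X}_{{\rm new},\cdot\tau}$ for every $\tau$; the rank factorization $\bm{X}_{{\rm new},\cdot\tau} = C(\tau)D(\tau)$ with $D(\tau)D(\tau)^\top = I_{r(\tau)}$, the plug-in covariance $\hat V(\tau) = D(\tau)\hat H(\tau)^{-1}\widehat{\Cov}(\nabla l(\tau,\hat{\bm\beta}_\tau|X,Y))\hat H(\tau)^{-1}D(\tau)^\top$, the Wald statistic $\tilde T(\bm{X}^{obs},\bm{y}^{obs},(\tau,t)) = n\|\hat V(\tau)^{-1/2}(D(\tau)\hat{\bm\beta}_\tau - t)\|_2^2$, and the $\chi^2_{r(\tau)}$ thresholding over $\tau\in\mathcal{C}$ are all verbatim instances of the objects in Algorithm~\ref{alg_coefficient_confidence} and its nonlinear extension. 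The map $h$ here (for case probabilities, $h = g^{-1}$ applied coordinatewise) is an admissible nonlinear transformation as in Theorem~\ref{thm_Abeta}.

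Next, since the new covariates $\{X_{{\rm new},i}\}$ are treated as fixed, $A = \bm{X}_{\rm new}^\top$ is a deterministic element of $\R^{q\times p}$ with $q = n_{\rm new}$, hence an admissible choice in Theorem~\ref{thm_Abeta}, which permits arbitrary $A$. The hypotheses required there — Assumptions~\ref{ass_subgaussian} and \ref{ass_hessian}, $n\gg s^2$, and either (1) $d\to\infty$ followed by $n\to\infty$ under Assumption~\ref{ass_signal} or (2) fixed $d$ with $n\to\infty$ under Assumption~\ref{ass_signal_strong} — are exactly those assumed in the corollary. Applying Theorem~\ref{thm_Abeta} with this $A$ and this $h$ then gives $\Prob(h(\bm{X}_{\rm new}\bm{\beta}_0)\in\Gamma_\alpha^{h(\bm{X}_{\rm new}\bm{\beta}_0)}(\bm{X},\bm{y}))\ge\alpha - o(1)$, where the probability is over the data $(\bm{X},\bm{y})$ and the auxiliary noises $\{\bm{\epsilon}^{*(j)}\}$ used to form $\mathcal{C}$.

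There is essentially no new obstacle; the only point needing a line of care is that Theorem~\ref{thm_Abeta} is stated for a generic $A$ whose column dimension is fixed as $n\to\infty$, whereas here the row dimension is $n_{\rm new}$. I would therefore note that the relevant quantity entering the $\chi^2_{r(\tau)}$ approximation is $r(\tau) = {\rm rank}(\bm{X}_{{\rm new},\cdot\tau}) \le q\wedge|\tau| \le s$, so the moderate-dimensional Wald/$\chi^2$ argument in the proof of Theorem~\ref{thm_Abeta} goes through unchanged under $n\gg s^2$ irrespective of how large $n_{\rm new}$ is (one may even allow $n_{\rm new}$ to grow, since only $r(\tau)\le s$ matters). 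Modulo this remark, the corollary is a direct corollary and requires no additional estimates.
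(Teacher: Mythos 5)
Your proposal matches the paper's treatment exactly: the paper obtains this corollary by specializing Theorem~\ref{thm_Abeta} to $A=\bm{X}_{\rm new}^\top$ with $h=g^{-1}$ applied coordinatewise, with no separate argument given. Your additional remark that only $r(\tau)\le s$ (not $n_{\rm new}$) enters the $\chi^2$ approximation is a correct and worthwhile clarification, but the route is the same.
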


In comparison, \cite{guo2021inference} pioneered the study of statistical inference for case probabilities in high-dimensional logistic regression models. 
However, their method can only be applied to one observation and requires a well-specified model, in contrast, our method enables simultaneous inference for the case probabilities of an arbitrary set of new observations even with model misspecification.

\subsection{Inference for \texorpdfstring{$\tau_0$}{tau\_0} when \texorpdfstring{$\mu(X)$}{mu(X)} is an $s$-sparse GLM}\label{sec_tau}

When the sparse GLM is correctly specified, i.e., the mean function in \eqref{eq_eta_generating} satisfies $\mu(X)=g^{-1}(X_{\tau_0}\bm\beta_{0,\tau_0})$, then the data-generating model \eqref{eq_generating_model_linear} becomes
\[Y=\1(X_{\tau_0}^\top\bm\beta_{0,\tau_0}+\epsilon>0),\quad \epsilon=-g(U),\quad U\sim \text{Unif}[0,1].\]
In this case, as we proved in Lemma \ref{lem_tau0} of Section \ref{sec_proof}, the model support defined in \eqref{eq_tau0} recovers the true support $\tau_0$ of $\mu(X)$, and therefore, the GLM coefficient in \eqref{eq_beta0} coincides with the coefficient $\bm\beta_{0,\tau_0}$ of $\mu(X)$. We are interested in the inference for the true model $\tau_0$, then $\bm{\beta}_{0,\tau_0}$ is 
a nuisance parameter. As we discussed in Section \ref{sec_repro} Equation \eqref{eq_nuclear_test}, if the nuclear statistic has the form $T(\bm{X}^{obs},\bm{\epsilon}^*,\bm{\theta})=\tilde T(\bm{X}^{obs},\bm{Y}^*,\bm{\theta})$ where $\bm{Y}^*$ is generated by $\bm{X}^{obs},\bm{\epsilon}^*$ and $\bm{\theta}=(\tau,\bm\beta_\tau)$, then it suffices to check whether $\tilde T(\bm{X}^{obs},\bm{y}^{obs},\bm{\theta})$ is in $B_{\alpha}(\bm{\theta})$. In order to deal with the nuisance parameter, we consider the following form of confidence set for $\tau_0$,
\begin{align*}
    \Gamma_\alpha^{\tau_0}(\bm{X}^{obs},\bm{y}^{obs})=&\{\tau:\exists\bm{\beta}_\tau\in\R^{|\tau|}{\rm~s.t.~}\tilde T(\bm{X}^{obs},\bm{y}^{obs},(\tau,\bm{\beta}_\tau))\in B_\alpha((\tau,\bm{\beta}_\tau))\},
\end{align*} 
with $B_\alpha(\bm{\theta})$ satisfies
$\Prob(\tilde T(\bm{X},\bm{Y}^*,\bm{\theta})\in B_\alpha(\bm{\theta}))\ge\alpha.$

If $1-\tilde T(\bm{X},\bm{Y}^*,\bm{\theta})$ is a $p$-value, then we can take $B_\alpha(\bm{\theta})=(-\infty,\alpha)$ and rewrite $\Gamma^{\tau_0}_\alpha(\bm{X}^{obs},\bm{y}^{obs})$ as
\begin{equation}\label{eq_model_conf}
    \Gamma^{\tau_0}_\alpha(\bm{X}^{obs},\bm{y}^{obs})=\{\tau:\min_{\bm{\beta}_\tau\in\R^{|\tau|}}\tilde T(\bm{X}^{obs},\bm{y}^{obs},(\tau,\bm{\beta}_\tau))<\alpha\}.
\end{equation}
Here, we refer to
$\min_{\bm{\beta}_\tau\in\R^{|\tau|}}\tilde T(\bm{X}^{obs},\bm{Y}^*,(\tau,\bm{\beta}_\tau))$ as a {\it profile nuclear statistic}.

Specifically, we construct the nuclear statistic $\tilde T$ and the model confidence sets as follows. For any given $\bm{\theta}=(\tau,\bm{\beta}_\tau)$ and $\bm{Y}^*\in\{0,1\}^n$ generated by $Y_i^*=\mathbbm{1}\{X_{i,\tau}^{obs\top}\bm{\beta}_{\tau}+\epsilon_i^*> 0\}$ with $\epsilon_i^*=-g(u_i^*)$, $u_i\overset{\rm i.i.d.}{\sim}\text{Unif}[0,1]$, we solve
\begin{equation}\label{eq_synthetic_lasso}
    \tilde{\bm{\beta}}(\lambda)=\argmin_{\bm{\beta}\in\R^p}-\frac{1}{n}\sum_{i=1}^n\bigg\{Y_i^*\log\frac{g^{-1}(X_i^{obs\top}\bm\beta)}{1-g^{-1}(X_i^{obs\top}\bm\beta)}+\log\big(1-g^{-1}(X_i^{obs\top}\bm\beta)\big)\bigg\}+\lambda\norm{\bm{\beta}}_1,
\end{equation}
\[\tilde\lambda(\tau,\bm{\beta}_\tau)=\argmax_{\lambda\ge0}\|\tilde{\bm{\beta}}(\lambda)\|_0,\quad{\rm s.t. }\norm{\tilde{\bm{\beta}}(\lambda)}_0\le\abs{\tau},\]
\[\tilde\tau(\bm{X}^{obs},\bm{Y}^*,\bm{\theta})=\supp(\tilde{\bm{\beta}}(\tilde\lambda(\bm{\theta}))).\]
The model selector $\tilde\tau(\bm{X}^{obs},\bm{Y}^*,\bm{\theta})$ is the largest model with cardinality at most $|\tau|$ in the solution path of Problem \eqref{eq_synthetic_lasso} using the synthetic data $(\bm{X}^{obs},\bm{Y}^*)$. 
Denote
\[P_{\bm{\theta}}(\tau^*)=\Prob_{\bm{\epsilon}^*|\bm{\theta}}(\tilde \tau(\bm{X}^{obs},\bm{Y}^*,\bm{\theta})=\tau^*),\]
where $\Prob_{\bm{\epsilon}^*|\bm{\theta}}$ counts the randomness of $\bm{Y}^*$ given $\bm{X}^{obs}$. Then we consider the nuclear statistic
\[T(\bm{X}^{obs},\bm{\epsilon},\bm{\theta})=\tilde T(\bm{X}^{obs},\bm{y},\bm{\theta})=\Prob_{\bm{\epsilon^*}|\bm{\theta}}\big(P_{\bm{\theta}}(\tilde\tau(\bm{X}^{obs},\bm{Y^*},\bm{\theta}))>P_{\bm{\theta}}(\tilde\tau(\bm{X}^{obs},\bm{y},\bm{\theta}))\big)\]
which is the probability that $\tilde\tau(\bm{X}^{obs},\bm{y},\bm{\theta})$ appears less often than the synthetic model selector $\tilde\tau(\bm{X}^{obs},\bm{Y}^*,\bm{\theta})$ in $P_{\bm{\theta}}(\cdot)$. Since $\tilde T(\bm{X}^{obs},\bm{y},\bm{\theta})$ is also the survival function of random variable $P_{\bm{\theta}}(\tilde \tau(\bm{X}^{obs},\bm{Y}^*,\bm{\theta}))$ evaluated at $P_{\bm{\theta}}(\tilde\tau(\bm{X}^{obs},\bm{y},\bm{\theta}))$, when $\bm{\theta}=\bm{\theta}_0,\bm{y}=\1(\bm{X}^{obs}\bm{\beta}_0+\bm{\epsilon}>0)$, we know that $1-\tilde T(\bm{X}^{obs},\bm{y},\bm{\theta}_0)$ is a p-value with
\[\Prob_{\bm{\epsilon}}(\tilde T(\bm{X}^{obs},\bm{y},\bm{\theta}_0)<\alpha)\ge\alpha.\]
Here $\Prob_{\bm{\epsilon}}$ counts the randomness of $\bm{y}$ given $\bm{X}^{obs}$. Since $\tau_0$ belongs to $\mathcal{C}$ with high probability as guaranteed by Theorem \ref{thm_candidate} and \ref{thm_candidate_strong_signal}, we constrain the model confidence set to be a subset of $\mathcal{C}$. Then according to Equation \eqref{eq_model_conf}, we define the confidence set for $\tau_0$ as
\begin{align*}
    \Gamma_\alpha^{\tau_0}(\bm{X}^{obs},\bm{y}^{obs})=&\{\tau:\exists\bm{\beta}_\tau\in\R^{|\tau|}{\rm~s.t.~}\tilde T(\bm{X}^{obs},\bm{y}^{obs},(\tau,\bm{\beta}_\tau))<\alpha,\tau\in\mathcal{C}\}\\
    =&\{\tau:\min_{\bm{\beta}_\tau\in\R^{|\tau|}}\tilde T(\bm{X}^{obs},\bm{y}^{obs},(\tau,\bm{\beta}_\tau))<\alpha,\tau\in\mathcal{C}\}.
\end{align*}

Since we don't have an explicit expression for $P_{\bm{\theta}}(\tau)$, we apply the Monte Carlo method to approximate it. More specifically, we generate $\{\bm{\epsilon}^{*(j)}:j\in[m]\}$ with $\epsilon_i^{*(j)}=-g(u_i^{*(j)})$, $u_i^{*(j)}\overset{i.i.d.}{\sim}\text{Unif}[0,1]$ for $i\in[n], j\in[m]$, then generate $\{\bm{Y}^{*(j)}:j\in[m]\}$ by $Y_i^{*(j)}=\mathbbm{1}\{X_{i,\tau}^{obs\top}{\bm{\beta}}_\tau+\epsilon_i^{*(j)}> 0\}$. For each $\bm{Y}^{*(j)}$, we calculate the corresponding $\tilde\tau^{(j)}\overset{\triangle}{=}\tilde\tau(\bm{X}^{obs},\bm{Y}^{*(j)},\bm{\theta})$ and estimate $P_{\bm{\theta}}(\tau^*)$ by $\hat P_{\bm{\theta}}(\tau^*)=\frac{1}{m}\sum_{j=1}^m\mathbbm{1}\{\tilde \tau^{(j)}=\tau^*\}$. Denote the estimated profile nuclear statistic as
\[\hat T(\bm{X}^{obs},\bm{y},\tau)=\min_{{\bm{\beta}}_\tau\in\R^{\abs{\tau}}}\frac{\abs{\{j\in[m]:\hat P_{\tau,{\bm{\beta}}_\tau}(\tilde\tau^{(j)})>\hat P_{\tau,{\bm{\beta}}_\tau}(\tilde\tau(\bm{X}^{obs},\bm{y},\bm{\theta}))\}}}{m},\]
then the final confidence set for $\tau_0$ becomes
\[\hat\Gamma_\alpha^{\tau_0}(\bm{X}^{obs},\bm{y}^{obs})=\{\tau:\hat T(\bm{X}^{obs},\bm{y}^{obs},\tau)<\alpha,\tau\in\mathcal{C}\}.\]
We summarize the procedure in Algorithm \ref{alg_model_confidence}. 

\begin{algorithm}
\caption{Model Confidence Set under Well-Specified GLMs}\label{alg_model_confidence}
\begin{algorithmic}[1]
\State{\bf Input:} Observed data $(\bm{X}^{obs},\bm{y}^{obs})$, model candidate set $\mathcal{C}$, the number of Monte Carlo samples $m$. 
\State{\bf Output:} Model confidence set $\hat\Gamma_\alpha^{\tau_0}(\bm{X}^{obs},\bm{y}^{obs})$.
\For{$\tau\in\mathcal{C}$}
\State Generate $m$ copies of random noises $\{\bm{\epsilon}^{*(j)}:\epsilon_i^{*(j)}=-g(u_i^{*(j)}),u_i^{*(j)}\overset{{\rm i.i.d.}}{\sim}{\rm Unif}[0,1],i\in[n],j\in[m]\}$.
\State For some ${\bm{\beta}}_\tau$ to be optimized later, compute $\{{\bm Y}^{*(j)}:j\in[m]\}$ with $Y_i^{*(j)}=\1\{X_{i,\tau}^{obs\top}{\bm{\beta}}_\tau+\epsilon_i^{*(j)}>0\}$.
\State For each ${\bm Y}^{*(j)},j\in[m]$, calculate
\[\tilde{\bm{\beta}}^{(j)}(\lambda)=\argmin_{{\bm{\beta}}\in\R^p}-\frac{1}{n}\sum_{i=1}^n\bigg\{Y_i^{*(j)}\log\frac{g^{-1}(X_i^{obs\top}\bm\beta)}{1-g^{-1}(X_i^{obs\top}\bm\beta)}+\log\big(1-g^{-1}(X_i^{obs\top}\bm\beta)\big)\bigg\}+\lambda\norm{{\bm{\beta}}}_1,\]
\[\tilde\tau^{(j)}=\supp(\tilde{\bm{\beta}}^{(j)}(\tilde\lambda^{(j)}(\tau,\bm{\beta}_\tau))),\qquad\tilde\lambda^{(j)}(\tau,\bm{\beta}_\tau)=\argmax_{\lambda\ge0}\norm{\tilde{\bm{\beta}}^{(j)}(\lambda)}_0\quad{\rm s.t. }\norm{\tilde{\bm{\beta}}^{(j)}(\lambda)}_0\le\abs{\tau},\]
and
\[\tilde{\bm{\beta}}(\lambda)=\argmin_{{\bm{\beta}}\in\R^p}-\frac{1}{n}\sum_{i=1}^n\bigg\{y_i^{obs}\log\frac{g^{-1}(X_i^{obs\top}\bm\beta)}{1-g^{-1}(X_i^{obs\top}\bm\beta)}+\log\big(1-g^{-1}(X_i^{obs\top}\bm\beta)\big)\bigg\}+\lambda\norm{{\bm{\beta}}}_1,\]
\[\tilde\tau(\bm{X}^{obs},\bm{y}^{obs},(\tau,\bm{\beta}_\tau))=\supp(\tilde{\bm{\beta}}(\tilde\lambda(\tau,\bm{\beta}_\tau))),\qquad\tilde\lambda(\tau,\bm{\beta}_\tau)=\argmax_{\lambda\ge0}\norm{\tilde{\bm{\beta}}(\lambda)}_0\quad{\rm s.t. }\norm{\tilde{\bm{\beta}}(\lambda)}_0\le\abs{\tau}.\]

\State Calculate
\[\hat T(\bm{X}^{obs},\bm{y}^{obs},\tau)=\min_{{\bm{\beta}}_\tau\in\R^{\abs{\tau}}}\frac{\abs{\{j\in[m]:\hat P_{\tau,{\bm{\beta}}_\tau}(\tilde\tau^{(j)})>\hat P_{\tau,{\bm{\beta}}_\tau}(\tilde\tau(\bm{X}^{obs},\bm{y}^{obs},(\tau,\bm{\beta}_\tau))\}}}{m},\]
with $\hat P_{\tau,{\bm{\beta}}_\tau}(\tau^*)=\frac{1}{m}\sum_{j=1}^m\mathbbm{1}\{\tilde \tau^{(j)}=\tau^*\}$.
\EndFor
\State Construct the model confidence set as
\[\hat\Gamma_\alpha^{\tau_0}(\bm{X}^{obs},\bm{y}^{obs})=\{\tau:\hat T(\bm{X}^{obs},\bm{y}^{obs},\tau)<\alpha,\tau\in\mathcal{C}\}.\]
\end{algorithmic}
\end{algorithm}
Now we formalize the intuition stated above as the following theorem, which guarantees the validity of $\hat\Gamma_\alpha^{\tau_0}(\bm{y}^{obs})$. A proof is given in the Appendix.

\begin{Theorem}\label{thm_tau}

\begin{itemize}
    \item[(1)] If Assumption \ref{ass_signal} holds, $d$ is large enough as required in Theorem \ref{thm_candidate} and $n$ is any fixed number, for $c_{\min},\tilde c_{\min}$ defined in Theorem \ref{thm_candidate}, we have
    \[\Prob(\tau_0\in\hat\Gamma_\alpha^{\tau_0}(\bm{X},\bm{y}))\ge\alpha-\sqrt{\frac{(\frac{ep}{s})^s}{4m}}-\sqrt{\frac{\pi}{8m}}-ce^{-cnc_{\min}}\wedge ce^{-cn\tilde c_{\min}}.\]
    \item[(2)] If Assumption \ref{ass_signal_strong} holds, $n$ and $d$ are any fixed numbers, for $c_{\min}^*,\tilde c_{\min}^*$ defined in Theorem \ref{thm_candidate_strong_signal}, we have
    \[\Prob(\tau_0\in\hat\Gamma_\alpha^{\tau_0}(\bm{X},\bm{y}))\ge\alpha-\sqrt{\frac{(\frac{ep}{s})^s}{4m}}-\sqrt{\frac{\pi}{8m}}-ce^{-cnc_{\min}^*}\wedge ce^{-cn\tilde c_{\min}^*}.\]
\end{itemize}

\end{Theorem}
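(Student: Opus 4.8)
The plan is to split the target event through
\[\Prob\big(\tau_0\in\hat\Gamma_\alpha^{\tau_0}(\bm{X},\bm{y})\big)\ \ge\ \Prob\big(\hat T(\bm{X},\bm{y},\tau_0)<\alpha\big)\ -\ \Prob(\tau_0\notin\mathcal{C}),\]
which is valid because $\hat\Gamma_\alpha^{\tau_0}(\bm{X},\bm{y})=\{\tau\in\mathcal{C}:\hat T(\bm{X},\bm{y},\tau)<\alpha\}$. I would bound the second term directly by Theorem~\ref{thm_candidate} in case (1) (with $d$ as large as required there), giving $\lesssim e^{-cnc_{\min}}\wedge e^{-cn\tilde c_{\min}}$, and by Theorem~\ref{thm_candidate_strong_signal} in case (2), giving $\lesssim e^{-cnc_{\min}^*}\wedge e^{-cn\tilde c_{\min}^*}$; these are exactly the last terms in the two displayed bounds. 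Everything else then reduces to $\Prob(\hat T(\bm{X},\bm{y},\tau_0)<\alpha)\ge\alpha-\sqrt{(ep/s)^s/(4m)}-\sqrt{\pi/(8m)}$. Writing $\hat T_{MC}(\bm{X}^{obs},\bm{y}^{obs},(\tau,\bm{\beta}_\tau))$ for the pre-minimization quantity in line~6 of Algorithm~\ref{alg_model_confidence}, so that $\hat T(\bm{X}^{obs},\bm{y}^{obs},\tau_0)=\min_{\bm{\beta}_{\tau_0}}\hat T_{MC}(\bm{X}^{obs},\bm{y}^{obs},(\tau_0,\bm{\beta}_{\tau_0}))$, I would evaluate the minimand at the population value $\bm{\beta}_{\tau_0}=\bm{\beta}_{0,\tau_0}$; hence it suffices to lower bound $\Prob(\hat T_{MC}(\bm{X},\bm{y},\bm{\theta}_0)<\alpha)$ with $\bm{\theta}_0=(\tau_0,\bm{\beta}_{0,\tau_0})$. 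This profiling reduction is lossless here because only a coverage lower bound is needed (it would matter for the size of the set).

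Next I would set up the exchangeability that makes the statistic exact. When the $s$-sparse GLM is correctly specified, $y_i^{obs}=\1(X_{i,\tau_0}^{obs\top}\bm{\beta}_{0,\tau_0}-g(u_i^{rel})>0)$ with $u_i^{rel}$ i.i.d.\ $\mathrm{Unif}[0,1]$, and this is exactly the rule Algorithm~\ref{alg_model_confidence} uses to draw each $\bm{Y}^{*(j)}$ at the profiled value $\bm{\beta}_{0,\tau_0}$. Therefore, conditionally on $\bm{X}^{obs}$, the selectors $\tilde\tau^{obs}:=\tilde\tau(\bm{X}^{obs},\bm{y}^{obs},\bm{\theta}_0)$ and $\tilde\tau^{(1)},\dots,\tilde\tau^{(m)}$ are i.i.d.\ with common law $P_{\bm{\theta}_0}(\cdot)$ on the at most $N:=\sum_{k\le s}\binom pk\le(ep/s)^s$ models of size $\le s$. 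As recorded in the text preceding the theorem, $1-\tilde T(\bm{X}^{obs},\bm{y},\bm{\theta}_0)$ is then an exact, super-uniform $p$-value, i.e.\ $\Prob(\tilde T(\bm{X}^{obs},\bm{y},\bm{\theta}_0)<t)\ge t$ for every $t\in(0,1)$.

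The substantive step is then to pass from the ideal $\tilde T$ to the computable $\hat T_{MC}(\bm{X},\bm{y},\bm{\theta}_0)$, which replaces $P_{\bm{\theta}_0}$ by the empirical masses $\hat P_{\bm{\theta}_0}(\tau^*)=\tfrac1m\#\{j:\tilde\tau^{(j)}=\tau^*\}$ and the outer probability by an average over the same $m$ draws. I would interpose the oracle Monte Carlo statistic $\tilde T_{MC}:=\tfrac1m\sum_{j}\1\{P_{\bm{\theta}_0}(\tilde\tau^{(j)})>P_{\bm{\theta}_0}(\tilde\tau^{obs})\}$ and control two errors. First, replacing $P_{\bm{\theta}_0}$ by $\hat P_{\bm{\theta}_0}$ flips a rank comparison only through the estimation error of the mass function, so the discrepancy is controlled by $\E\sum_{\tau^*}|\hat P_{\bm{\theta}_0}(\tau^*)-P_{\bm{\theta}_0}(\tau^*)|$; using $\E|\hat P_{\bm{\theta}_0}(\tau^*)-P_{\bm{\theta}_0}(\tau^*)|\le\sqrt{P_{\bm{\theta}_0}(\tau^*)(1-P_{\bm{\theta}_0}(\tau^*))/m}$, Cauchy--Schwarz together with $\sum_{\tau^*}P_{\bm{\theta}_0}(\tau^*)=1$ gives a bound of order $\sqrt{N/(4m)}\le\sqrt{(ep/s)^s/(4m)}$. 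Second, conditionally on $(\bm{X}^{obs},\bm{y}^{obs})$, $\tilde T_{MC}$ is an average of $m$ i.i.d.\ $\mathrm{Bernoulli}(\tilde T)$ variables, so coupling $\mathrm{Bin}(m,\tilde T)$ against $\mathrm{Bin}(m,\mathrm{Unif})$ --- the latter being uniform on $\{0,\dots,m\}$, by super-uniformity of $\tilde T$ --- gives $\Prob(\tilde T_{MC}<\alpha)\ge\alpha-\tfrac1{m+1}\ge\alpha-\sqrt{\pi/(8m)}$. Assembling the two contributions additively with the super-uniformity of $\tilde T$ yields $\Prob(\hat T_{MC}(\bm{X},\bm{y},\bm{\theta}_0)<\alpha)\ge\alpha-\sqrt{(ep/s)^s/(4m)}-\sqrt{\pi/(8m)}$, which combined with the first display proves both parts.

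I expect the main obstacle to be precisely this last step: $\hat T_{MC}$ substitutes a \emph{self-referential} empirical estimate of the entire law of the discrete selector $\tilde\tau$ over an exponentially large model space, so one must quantify how much this perturbs a rank comparison while retaining the $m^{-1/2}$ rate --- which is what forces the $\sum_{\tau^*}P_{\bm{\theta}_0}(\tau^*)=1$ normalization in the Cauchy--Schwarz bound --- and then splice in the discreteness-induced fluctuation of the outer Monte Carlo average, keeping the overall loss additive rather than routing it through a lossy Markov inequality. Everything upstream --- the candidate-set coverage, the profiling reduction, and the exact $p$-value property --- is either supplied by earlier results or immediate.
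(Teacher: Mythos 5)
Your overall architecture --- splitting off $\Prob(\tau_0\notin\mathcal{C})$ and bounding it by Theorems \ref{thm_candidate} and \ref{thm_candidate_strong_signal}, profiling at $\bm\beta_{0,\tau_0}$, using the conditional i.i.d.\ structure of $\tilde\tau^{obs},\tilde\tau^{(1)},\dots,\tilde\tau^{(m)}$ under the well-specified model, and the super-uniformity of $\tilde T$ --- is sound. The gap sits exactly at the step you yourself flag as the main obstacle, and the fix you propose does not work. The claim that the discrepancy between $\hat T_{MC}=\frac1m\#\{j:\hat P_{\bm\theta_0}(\tilde\tau^{(j)})>\hat P_{\bm\theta_0}(\tilde\tau^{obs})\}$ and $\tilde T_{MC}=\frac1m\#\{j:P_{\bm\theta_0}(\tilde\tau^{(j)})>P_{\bm\theta_0}(\tilde\tau^{obs})\}$ is controlled by $\E\sum_{\tau^*}\abs{\hat P_{\bm\theta_0}(\tau^*)-P_{\bm\theta_0}(\tau^*)}$ is false: a rank comparison between two atoms of equal or nearly equal population mass is flipped by an arbitrarily small perturbation of $\hat P_{\bm\theta_0}$, and many atoms can flip simultaneously. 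Concretely, if $P_{\bm\theta_0}$ is uniform on $N$ models, then $\tilde T=\tilde T_{MC}=0$ almost surely (no strict inequality ever holds), while for $m\gg N$ the statistic $\hat T_{MC}$ is approximately uniform on $[0,1]$, so $\E\abs{\hat T_{MC}-\tilde T_{MC}}\approx\tfrac12$ even though $\E\|\hat P_{\bm\theta_0}-P_{\bm\theta_0}\|_1\asymp\sqrt{N/m}\to0$. (The theorem remains true in this example; it is your intermediate inequality that fails.) Even where an $\ell_1$ control did hold, converting an expected deviation into the additive loss you want forces a Markov step that destroys the $m^{-1/2}$ rate, which you acknowledge but never resolve; and the binomial coupling is then applied to the wrong intermediate object.

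The repair is to not compare $\hat T_{MC}$ with $\tilde T_{MC}$ at all. Note that $\hat T_{MC}=\hat S\big(\hat P_{\bm\theta_0}(\tilde\tau^{obs})\big)$, where $\hat S(w)=\sum_{\tau^*:\hat P_{\bm\theta_0}(\tau^*)>w}\hat P_{\bm\theta_0}(\tau^*)$ is the survival transform of the empirical law itself. Condition on $(\bm X^{obs},\tilde\tau^{(1)},\dots,\tilde\tau^{(m)})$: the rejection region $R_\alpha=\{\tau:\hat S(\hat P_{\bm\theta_0}(\tau))<\alpha\}$ is then fixed, and the same discrete super-uniformity you invoke for $\tilde T$, applied with $\hat P_{\bm\theta_0}$ in the role of the sampling law, gives $\hat P_{\bm\theta_0}(R_\alpha)\ge\alpha$. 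Since $\tilde\tau^{obs}\sim P_{\bm\theta_0}$ independently of $R_\alpha$,
\[
\Prob\big(\hat T_{MC}<\alpha\big)=\E\,P_{\bm\theta_0}(R_\alpha)\ \ge\ \E\,\hat P_{\bm\theta_0}(R_\alpha)-\E\,{\rm TV}\big(P_{\bm\theta_0},\hat P_{\bm\theta_0}\big)\ \ge\ \alpha-\tfrac12\,\E\big\|\hat P_{\bm\theta_0}-P_{\bm\theta_0}\big\|_1,
\]
and $\tfrac12\E\|\hat P_{\bm\theta_0}-P_{\bm\theta_0}\|_1\le\tfrac12\sum_{\tau^*}\sqrt{P_{\bm\theta_0}(\tau^*)/m}\le\tfrac12\sqrt{N/m}\le\sqrt{(ep/s)^s/(4m)}$ with $N=\sum_{k\le s}\binom{p}{k}$, which is where the factor $4$ in the theorem actually comes from (not from $p(1-p)\le\tfrac14$). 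This change-of-measure route is additive by construction, needs no flip counting and no binomial coupling, and the extra $-\sqrt{\pi/(8m)}$ in the statement only makes the target weaker.
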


\begin{Remark}[Practical implementation of Algorithm \ref{alg_model_confidence}]
    Line 7 in Algorithm \ref{alg_model_confidence} involves the optimization for indicator functions, which could be computationally challenging. This optimization with respect to ${\bm{\beta}}_\tau$ ensures that under the true model $\tau_0$, the statistic $\hat T(\bm{X}^{obs},\bm{y}^{obs},\tau_0)$ is more conservative than $\frac{\abs{\{j\in[m]:\hat P_{\tau_0,{\bm{\beta}}_{0,\tau_0}}(\tilde\tau^{(j)})>\hat P_{\tau_0,{\bm{\beta}}_{0,\tau_0}}(\tilde\tau(\bm{X}^{obs},\bm{y}^{obs},(\tau_0,\bm{\beta}_{0,\tau_0})))\}}}{m}$ which is the oracle statistic when using ${\bm{\beta}}_{0,\tau_0}$ to generate $\bm{Y}^*$. In practice, for any $\tau\in\mathcal{C}$, MLE of ${\bm{\beta}}_{\tau}$ can also be employed to generate $\bm{Y}^{*(j)}$ since it is a consistent estimator in the low-dimensional setting given $\tau_0$. And our numerical results confirm that MLE indeed yields confidence sets with guaranteed coverages and reasonable sizes.
\end{Remark}

\section{Numerical Results}\label{sec_numerical}

In this section, we illustrate the performance of the proposed methods using both synthetic data and real data.
\subsection{Synthetic data}\label{sec_simulation}

In this subsection, we demonstrate the performance of the proposed methods based on synthetic data. Throughout this subsection, for $n,p$ to be specified later, we generate $n$ i.i.d. copies $\{X_i:i\in[n]\}$ of $X\in\R^p$ from normal distribution $N(\bm{0},\Sigma)$ with mean vector $\bm{0}$ and covariance matrix $\Sigma\in\R^{p\times p}$ satisfying $\Sigma_{ij}=0.2^{\abs{i-j}}$. Denote $\bm \gamma=(5,4,3,2.5,0.1,-0.1,\ldots,0.1,-0.1)^\top\in\R^p$, $\bm\omega=(1,-1,\ldots,1,-1)^\top\in\R^p$, and $g(t)=\log\frac{t}{1-t}$, we consider the follows four combinations of mean function, sample size $n$, dimension $p$ and the number $d$ of repro samples. Then we use sparse logistic regression model to fit the data.
\begin{itemize}
    \item[(M1)] $n=500$, $p=1000$, $d=5000$,
    \[\mu(X)=\frac{1}{2}+0.95\bigg(g^{-1}(X^\top\bm\gamma)-\frac{1}{2}\bigg)+0.05\bigg(\Phi(X^\top\bm\omega)-\frac{1}{2}\bigg).\]

    \item[(M2)] $n=500$, $p=1000$, $d=5000$,
    \[\mu(X)=\left\{\begin{matrix}
        \max\big\{0,\min\big\{1,g^{-1}(X^\top\bm\gamma)+0.2\big|g^{-1}(X^\top\bm\gamma)-\frac{1}{2}\big|\sin(X^\top\bm{\omega})\big\}\big\},&g^{-1}(X^\top\bm\gamma)\ge\frac{1}{2}\\
        \max\big\{0,\min\big\{1,g^{-1}(X^\top\bm\gamma)+0.2\big|g^{-1}(X^\top\bm\gamma)-\frac{1}{2}\big|\sin(5X^\top\bm{\omega})\big\}\big\},&g^{-1}(X^\top\bm\gamma)<\frac{1}{2}
    \end{matrix}\right..\]

    \item[(M3)] $n=500$, $p=1000$, $d=5000$,
    \[\mu(X)=g^{-1}(X^\top\bm\beta), \quad\bm\beta=(5,4,3,2.5,0,\ldots,0)^\top\in\R^p.\]
    \item[(M4)] $n=900$, $p=1000$, $d=10000$,
    \[\mu(X)=g^{-1}(X^\top\bm\beta), \quad\bm\beta=(5,4,3,1,0,\ldots,0)^\top\in\R^p.\]
\end{itemize}
Both models (M1) and (M2) are dense, and the logistic regression model is misspecified. However, the first four covariates are significantly more influential in the response than the other covariates. For (M3) and (M4), the mean functions $\mu(X)$ are indeed sparse logistic models, therefore, the working model is the actual data-generating model. 

In Section \ref{sec_model}, we consider the working sparse GLMs at a user-specified sparsity level $s$ and require that the model $\tau_0$ has a stronger signal compared to other models. However, in practice, when the data-generating distribution indeed has certain approximately-sparse structures, specifying a large $s$ incorporates too many redundant covariates. The limited impact of those redundant covariates makes it hard to recover them using the data. On the other hand, if we set a small $s$, the defined $\tau_0$ omits important covariates and fails to capture the underlying structure.
Therefore, in practice, instead of aiming at the model with a user-specified sparsity level $s$, we set a maximal sparsity level $s_u$ and define the target model size $s$ to be the one that balances the approximation error and model complexity, among all models with size no greater than $s_u$. Given a dataset of $n$ samples, we adopt the extended BIC (EBIC) \citep{chen2008extended} to select the sparsity $s$, by minimizing
\[-2\sum_{i=1}^nl(\tau,\bm\beta_\tau|X_i,Y_i)+|\tau|\log n+2\log\begin{pmatrix}
    p\\
    |\tau|
\end{pmatrix}.\]
Note that $s$ considered above depends on the observed sample, and therefore is random. In the simulation study, to facilitate the evaluation of our proposed algorithm, we also consider the population level EBIC and choose the sparsity level $s\le s_u$ to minimize
\begin{equation}\label{eq_s_ebic}
-2n\E l(\tau,\bm\beta_\tau|X,Y)+|\tau|\log n+2\log\begin{pmatrix}
    p\\
    |\tau|
\end{pmatrix},
\end{equation}
where $n$ is the observed sample size. Then we define $(\tau_0,\bm\beta_{0,\tau_0})$ based on the sparsity $s$ obtained in \eqref{eq_s_ebic}. In Section \ref{sec_simulation_candidate}, we will show that the candidate models selected based on empirical EBIC have a good coverage rate for $\tau_0$.

In the rest of this section, we set the sparsity upper bound as $s_u=10$. To calculate the population level $s$, we generate 50000 samples from the data-generating models to approximate the expectation in \eqref{eq_s_ebic} and the resulting $s=4$ for all models (M1)-(M4).
In the following Figure \ref{fig_fit-size}, we verify the selected sparsity level $s=4$ by generating 50000 samples and applying forward stepwise logistic regression to approximate the relationship between model size and model fitting. In both (M1) and (M2), we can see that the chosen $s=4$ is a reasonable target model size, achieving the optimal balancing between model fitting and model size. 

We summarize the population value of $\tau_0,\bm\beta_{0,\tau_0}$ as follows. Although the equation \eqref{eq_s_ebic} and the curves in Figure \ref{fig_fit-size} can not be observed in practice, we will show in Section \ref{sec_simulation_candidate} that the defined optimal balancing model $\tau_0$ can still be included in the proposed model candidate sets.
\begin{itemize}
    \item[(M1)] $\tau_0=[4]$, $\bm\beta_{0,\tau_0}=(2.03, 1.63, 1.24, 1.04)^\top$.
    \item[(M2)] $\tau_0=[4]$, $\bm{\beta}_{0,\tau_0}=(1.93, 1.52, 1.15, 0.98)^\top$.
    \item[(M3)] $\tau_0=[4]$, $\bm\beta_{0,\tau_0}=(5,4,3,2.5)^\top$.
    \item[(M4)] $\tau_0=[4]$, $\bm\beta_{0,\tau_0}=(5,4,3,1)^\top$.
\end{itemize}

\begin{figure}
    \centering
    \begin{subfigure}{0.5\textwidth}
        \includegraphics[width=\linewidth]{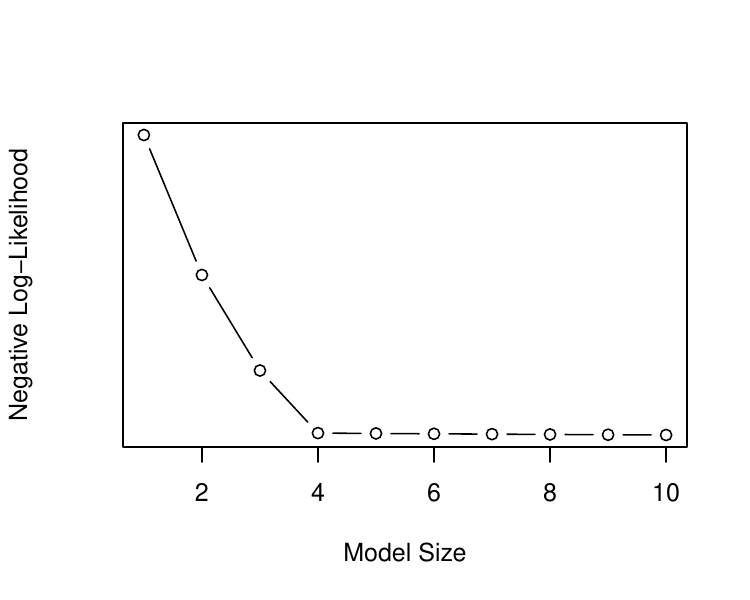}
        \caption{M1}
        \label{fig_fit-size_mix}
    \end{subfigure}%
    \begin{subfigure}{0.5\textwidth}
        \includegraphics[width=\linewidth]{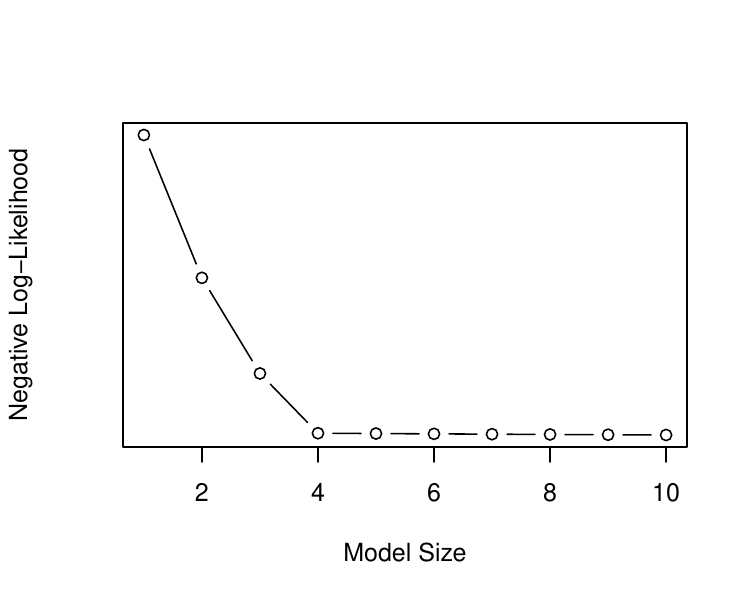}
        \caption{M2}
        \label{fig_fit-size_sin}
    \end{subfigure}
    \caption{The curve between the negative log-likelihood of the working logistic model and the model size under (M1) and (M2), respectively. The curve is calculated based on forward stepwise logistic regression using 50000 samples.}
    \label{fig_fit-size}
\end{figure}

\subsubsection{Model candidate set}\label{sec_simulation_candidate}

In this section, we study the coverage of our proposed model candidate set for $\tau_0$. As we demonstrated in Section \ref{sec_simulation}, instead of specifying the sparsity level $s$, we set a maximal sparsity level $s_u$ and define the target model to be the one that balances the approximation error and model complexity, among all models with size no greater than $s_u$. However, the sparsity of $\tau_0$ is still defined at the population level and is unknown in practice. In this subsection, we use data-driven methods to choose sparsity levels no greater than $s_u$ and show that the proposed model candidate set has a good coverage rate for $\tau_0$.

When applying Algorithm \ref{alg_candidate} for the model candidate set, we replace the $\ell_0$ constrained empirical 0-1 risk minimization problem in Line 4 by the following computationally efficient surrogate
\[(\hat{\bm{\beta}}^{(j)}(\lambda_j),\hat\sigma^{(j)}(\lambda_j))=\argmin_{{\bm{\beta}}\in\R^p,\sigma\in\R}\sum_{i=1}^nL_S((2y_i^{obs}-1)(X_i^{obs\top}{\bm{\beta}}+\sigma\epsilon_i^{*(j)}))+\lambda_j\sum_{k\in[p]}\frac{|\beta_k|}{|\tilde\beta^{(j)}_k|},\]
\[\hat\tau(\bm{\epsilon}^{*(j)},\lambda_j)=\supp\{\hat{\bm{\beta}}^{(j)}(\lambda_j)\},\]
where we take $L_S$ to be either the logistic loss $L_l$ or hinge loss $L_h$ defined as
\[L_l(t)=\log(1+e^{-t}),\quad L_h(t)=\max\{0,1-t\},\]
and we choose $\tilde{\bm{\beta}}^{(j)}$ as the solution of
\[(\tilde{\bm{\beta}}^{(j)}(\tilde\lambda_j),\tilde\sigma^{(j)}(\tilde\lambda_j))=\argmin_{{\bm{\beta}}\in\R^p,\sigma\in\R}\sum_{i=1}^nL_S((2y_i^{obs}-1)(X_i^{obs\top}{\bm{\beta}}+\sigma\epsilon_i^{*(j)}))+\tilde\lambda_j\norm{{\bm{\beta}}}_2^2,\]
for $\tilde\lambda_j$ chosen by 3-fold cross-validation. The tuning parameter $\lambda_j$ is selected using EBIC
\begin{align*}
    {\rm EBIC}_{j,\xi}(\lambda)=&2\sum_{i=1}^nL_S((2y_i^{obs}-1)(X_i^{obs\top}\hat{\bm{\beta}}^{(j)}(\lambda)+\hat\sigma^{(j)}(\lambda)\epsilon_i^{*(j)}))\\
    &+\abs{\hat\tau(\bm{\epsilon}^{*(j)},\lambda)}\log n+2\xi\log{p \choose\abs{\hat\tau(\bm{\epsilon}^{*(j)},\lambda)}}.
\end{align*}
Here we choose $\lambda_j(\xi)$ to minimize ${\rm EBIC}_{j,\xi}(\lambda)$ under the sparsity constraint $|\hat\tau(\bm\epsilon^{*(j)},\lambda_j(\xi))|\le s_u$ for each $\xi\in[0,1]$. Therefore for each $\bm{\epsilon}^{*(j)}$, we collect all models $\{\hat\tau(\bm{\epsilon}^{*(j)},\lambda_j(\xi)):\xi\in[0,1]\}$. Then the final model candidate set becomes
\[\mathcal{C}=\{\hat\tau(\bm{\epsilon}^{*(j)},\lambda_j(\xi)):j\in[d],\xi\in[0,1]\}.\]

For the logistic loss $L_l$ and hinge loss $L_h$, we calculate the model candidate sets with 300 replications and report the averaged coverage of $\tau_0$ and the averaged cardinality of the candidate sets with standard deviations in the parentheses in Table \ref{tab_candidate}. We can read from Table \ref{tab_candidate} that the proposed method performs well for both the misspecified and well-specified models. Based on 5000 repro samples, the model candidate sets for (M1), (M2), and (M3) achieve nearly 100\% coverage of the target model $\tau_0$ and contain only six candidate models. For the well-specified model (M4) with weak signals, the model candidate sets based on 10000 repro samples attain the desired coverages and contain only four candidate models on average.

\begin{table}
    \centering
    \begin{tabular*}{\columnwidth}{@{\extracolsep\fill}lcccc@{\extracolsep\fill}}\hline
         & \multicolumn{4}{c}{Losses}\\\cline{2-5}
        & \multicolumn{2}{c}{Hinge} & \multicolumn{2}{c}{Logistic} \\\cline{2-3}\cline{4-5}
        Models  & Coverage & Cardinality & Coverage & Cardinality \\\hline

         M1 & 0.99(0.11) & 4.79(2.18) & 0.98(0.15) & 3.92(2.96)\\
         
         M2 & 0.99(0.10) & 4.94(2.33) & 0.98(0.15) & 3.59(2.52)\\
         
         M3 & 0.99(0.11) & 6.42(2.58) & 0.99(0.11) & 5.86(3.25)\\
         
         M4 & 0.98(0.15) & 4.38(2.20) & 0.99(0.08) & 2.38(1.43)\\
         \hline
    \end{tabular*}
    \caption{Comparison of performance of the model candidate sets. Here ``Coverage" means the probability for the model candidate set $\mathcal{C}$ to contain $\tau_0$, and ``Cardinality" indicates the number of models in $\mathcal{C}$.}
    \label{tab_candidate}
\end{table}

\subsubsection{Inference for \texorpdfstring{$\beta_{0,j}$}{beta\_0,j}}

In this subsection, we study the performance of the confidence sets for individual coefficients $\beta_{0,j}$ for $j\in[p]$. We compare our method with the oracle Wald test assuming $\tau_0$ were known. For the well-specified models (M3) and (M4), we also compare with the Debiased Lasso method in \cite{van2014asymptotically} implemented using the \texttt{lasso.proj} function in \texttt{hdi} package. 

For models (M1),(M2), the sparse logistic model is misspecified. As we demonstrated in Remark \ref{rem_zero_coefficient}, $\beta_{0,j}=0$ for $j\in[p]\setminus\tau_0$ in (M1), (M2) doesn't imply the lack of association between $X_j$ and $Y$, but merely indicates that $X_j$ contributes less to $Y$ relative to those included in $X_{\tau_0}$. Consequently, $\beta_{0,j}=0$ for $j\in[p]\setminus\tau_0$ doesn't have a quantitative meaning. Therefore, for models (M1) and (M2), we only calculate the coverage and size of confidence sets for $\beta_{0,j},j\in\tau_0$, and then we average the performance over $j\in\tau_0$. For the well-specified models (M3) and (M4), we also report the confidence sets for $\beta_{0,j},j\in[p]\setminus\tau_0$. Note that the proposed confidence sets for $\beta_{0,j}$ are a union of intervals, so we report the Lebesgue measure of the confidence sets. Then the final results reported in Table \ref{tab_betai_conf} contain the averaged coverages and sizes of confidence sets over 300 replications with standard deviations in the parentheses.

As we discussed in Section \ref{sec_simulation_candidate}, we consider two losses, logistic loss and hinge loss, for Line 4 in Algorithm \ref{alg_candidate}. Hereafter, we use the abbreviations ``Repro-Logistic" and ``Repro-Hinge" to denote the repro samples method with logistic loss and hinge loss, respectively. We also use ``Debias" to denote the Debiased Lasso method and use ``Oracle" to denote the oracle Wald test with the knowledge of $\tau_0$. From Table \ref{tab_betai_conf}, we see that for $j\in\tau_0$, the proposed methods Repro-Hinge and Repro-Logistic and the Oracle method have the desired coverage of 0.95 for all the models, while the Debiased method couldn't cover the nonzero coefficients in (M3) and (M4). In terms of size, the confidence sets produced by Repro-Hinge and Repro-Logistic are comparable to those of the Oracle method, but the sizes of the intervals calculated by the Debiased Lasso method are even shorter than those of the Oracle method, so are likely to be undercovered. For the zero coefficients with $j\in[p]\setminus\tau_0$ in (M3) and (M4), Repro-Hinge, Repro-Logistic, and Debiased Lasso all have coverage rates 1, but the sizes corresponding to Repro-Hinge and Repro-Logistic are shorter than the sizes corresponding to Debiased Lasso. The reason is that Repro-Hinge and Repro-Logistic also make use of the uncertainty of the selected models. When no models in the candidate set contain $j$, we estimate $\beta_{0,j}$ by 0 with confidence 1.

\begin{table}[t]
    \centering
    \begin{tabular*}{\columnwidth}{@{\extracolsep\fill}lccccc@{\extracolsep\fill}}\hline
        &&\multicolumn{2}{c}{$\beta_{0,j},j\in\tau_0$}&\multicolumn{2}{c}{$\beta_{0,j},j\in[p]\setminus\tau_0$}\\\cline{3-4}\cline{5-6}
        Model & Method & Coverage & Length & Coverage & Length \\\hline
        M1 & Repro-Hinge & 0.96(0.12) & 0.95(0.12) &  & \\
        & Repro-Logistic & 0.96(0.12) & 0.94(0.13) &  & \\
        & Oracle & 0.95(0.13) & 0.84(0.09) & &\\[6pt]
        M2 & Repro-Hinge & 0.96(0.11) & 0.93(0.12) &  & \\
        & Repro-Logistic & 0.96(0.12) & 0.91(0.13) &  & \\
        & Oracle & 0.95(0.13) & 0.83(0.09) & &\\[6pt]
        M3 & Repro-Hinge & 0.97(0.11) & 2.59(0.72) & 1.00(0.00) & 0.00(0.00)\\
        & Repro-Logistic & 0.97(0.11) & 2.72(0.93) & 1.00(0.00) & 0.00(0.00)\\
        & Debias & 0.09(0.23) & 0.87(0.17) & 1.00(0.00) & 0.72(0.14) \\
        & Oracle & 0.93(0.18) & 1.98(0.40) & &\\[6pt]
        M4 & Repro-Hinge & 0.96(0.15) & 1.52(0.25) & 1.00(0.00) & 0.00(0.00)\\
        & Repro-Logistic & 0.95(0.15) & 1.42(0.23) & 1.00(0.00) & 0.00(0.00)\\
        & Debias & 0.14(0.25) & 0.64(0.06) & 0.99(0.00) & 0.51(0.05)\\
        & Oracle & 0.94(0.17) & 1.30(0.17) & &\\\hline
    \end{tabular*}
    \caption{Comparison of performance of the confidence sets of $\beta_{0,j}$. Here ``Coverage" means the probability for $\Gamma_\alpha^{\beta_{0,j}}(\bm{X}^{obs},\bm{y}^{obs})$ to contain $\beta_{0,j}$, and ``Length" means the Lebesgue measure of $\Gamma_\alpha^{\beta_{0,j}}(\bm{X}^{obs},\bm{y}^{obs})$. The third and fourth columns correspond to $j\in\tau_0$, and the last two columns correspond to $j\in[p]\setminus\tau_0$.}
    \label{tab_betai_conf}
\end{table}

\subsubsection{Inference for \texorpdfstring{$\bm{\beta}_0$}{beta\_0}}

We also study the performance of the proposed method for simultaneous inference for the vector parameter $\bm\beta_0$. Since it is hard to calculate the Lebesgue measure of the confidence sets, we report only the coverage rates of Repro-Hinge, Repro-Logistic, and the Oracle method with known $\tau_0$ in Table \ref{tab_beta_conf}. From Table \ref{tab_beta_conf}, we can see that the Repro-Hinge and Repro-Logistic have similar performance to that of the Oracle method, and the coverage rates are close to the desired 0.95. However, because $\bm\beta_{0,\tau_0}$ has higher dimensionality and the sample sizes in (M1)-(M3) are limited, the Oracle method exhibits slight undercoverage in these models. Consequently, the proposed methods also slightly undercover. In contrast, for (M4), the larger sample size makes the asymptotic $\chi^2$ approximation of the Wald test statistic more accurate. As a result, both the proposed methods and the Oracle method achieve the desired coverage rates.

\begin{table}[t]
    \centering
    \begin{tabular*}{\columnwidth}{@{\extracolsep\fill}lccc@{\extracolsep\fill}}\hline
        & \multicolumn{3}{c}{Coverage}\\\cline{2-4}
         Models & Repro-Hinge & Repro-Logistic & Oracle \\\hline
         M1 & 0.93(0.25) & 0.92(0.27) & 0.94(0.24) \\
         M2 & 0.91(0.28) & 0.90(0.30) & 0.92(0.27)\\
         M3 & 0.91(0.29) & 0.91(0.28) & 0.92(0.27)\\
         M4 & 0.92(0.27) & 0.94(0.24) & 0.94(0.23)\\\hline
    \end{tabular*}
    \caption{Comparison of performance of the confidence sets of $\bm{\beta}_0$. Here ``Coverage" means the probability for $\Gamma_\alpha^{\bm{\beta}_0}(\bm{X}^{obs},\bm{y}^{obs})$ to contain $\bm{\beta}_0$.}
    \label{tab_beta_conf}
\end{table}

\subsubsection{Simultaneous inference for case probabilities}

To evaluate the empirical performance of our proposed method for simultaneous inference for case probabilities, we construct $\bm{X}_{\rm new}$ as follows. For (M1)-(M4), the number of new observations is set to $n_{\rm new}=2~{\rm or}~2000$. Then for each of the models, we generate $X_{{\rm new},i}\in\R^p$ to be i.i.d. random vectors from normal distribution $N(\bm{0},\Sigma)$ with the covariance matrix $\Sigma$ satisfying $\Sigma_{ij}=0.2^{|i-j|}$. Since it is hard to measure the volume of the confidence sets, we instead report the coverage rates of Repro-Hinge, Repro-Logistic, and the Oracle method with known $\tau_0$ in Table~\ref{tab_case_conf}. The results in Table~\ref{tab_case_conf} reveal that both Repro-Hinge and Repro-Logistic have performance comparable to the Oracle method, with coverage rates close to the nominal value of 0.95. Notably, when $n_{\rm new}=2$, we have ${\rm rank}(\bm X_{\rm new})\le 2$. In this case, the effective parameter has dimension at most 2, which is lower than 4 as in Table \ref{tab_beta_conf}. Consequently, both the proposed methods and the Oracle method achieve better coverage. In contrast, when $n_{\rm new}=2000>p$, it is typical that ${\rm rank}(\bm X_{\rm new})=p$. Hence, testing $\bm X_{\rm new}\bm\beta_0$ is equivalent to testing $\bm\beta_0$. Accordingly, the coverages for $h(\bm X_{\rm new}\bm\beta_0)$, listed in Table \ref{tab_case_conf}, are identical to those for $\bm\beta_0$ in Table \ref{tab_beta_conf}.

\begin{table}[t]
    \centering
    \begin{tabular*}{\columnwidth}{@{\extracolsep\fill}lcccc@{\extracolsep\fill}}\hline
        & & \multicolumn{3}{c}{Coverage}\\\cline{3-5}
         $n_{\rm new}$ & Models & Repro-Hinge & Repro-Logistic & Oracle \\\hline
         2 & M1 & 0.98(0.14) & 0.96(0.19) & 0.97(0.18)\\
         & M2 & 0.98(0.15) & 0.97(0.17) & 0.95(0.23)\\
         & M3 & 0.98(0.15) & 0.97(0.17) & 0.94(0.23)\\
         & M4 & 0.96(0.20) & 0.95(0.23) & 0.94(0.24)\\[6pt]
         
         2000 & M1 & 0.93(0.25) & 0.92(0.27) & 0.94(0.24) \\
         & M2 & 0.91(0.28) & 0.90(0.30) & 0.92(0.27)\\
         & M3 & 0.91(0.29) & 0.91(0.28) & 0.92(0.27)\\
         & M4 & 0.92(0.27) & 0.94(0.24) & 0.94(0.23)\\\hline
    \end{tabular*}
    \caption{Comparison of performance of the confidence sets of $h(\bm{X}_{\rm new}\bm{\beta}_0)$. Here ``Coverage" means the probability for $\Gamma_\alpha^{h(\bm{X}_{\rm new}\bm{\beta}_0)}(\bm{X}^{obs},\bm{y}^{obs})$ to contain $h(\bm{X}_{\rm new}\bm{\beta}_0)$.}
    \label{tab_case_conf}
\end{table}

\subsubsection{Inference for $\tau_0$}

In this subsection, we consider (M3) and (M4) where the data-generating mean function $\mu(X)$ is indeed a sparse logistic regression model and study the performance of the model confidence set proposed in Section \ref{sec_tau}. When applying Algorithm \ref{alg_model_confidence}, in Line 7, for each $\tau\in\mathcal{C}$, we need to solve an optimization problem for a discrete function which can be hard. In practice, we use the MLE of ${\bm{\beta}}_\tau$ to generate $\bm{Y}^{*(j)}$. We also report the results when the profile method in Line 7 is solved by the \texttt{optim} function in \texttt{R} using the method in \cite{nelder1965simplex}. Here we choose the number $m$ of Monte Carlo samples to be 500 for all settings. The coverages and cardinalities of the model confidence sets are reported in Table \ref{tab_tau_conf} where we deal with the nuisance parameter ${\bm{\beta}}_{0,\tau}$ using both the MLE and profile method. From Table \ref{tab_tau_conf}, we find the model confidence sets are smaller than the model candidate sets in all settings while the coverages of the model confidence sets are the same as the model candidate sets. Due to the discreteness of the nuclear statistic, the model confidence sets are conservative, however, they are still able to reject some models in the model candidate sets and produce smaller sets of models.

\begin{table}[t]
    \centering
    \begin{tabular*}{\columnwidth}{@{\extracolsep\fill}lccccc@{\extracolsep\fill}}\hline
        & & \multicolumn{4}{c}{$\beta_\tau$}\\\cline{3-6}
        & & \multicolumn{2}{c}{Profile} & \multicolumn{2}{c}{MLE}\\\cline{3-4}\cline{5-6}
        Models & Losses   & Coverage & Cardinality  & Coverage & Cardinality \\\hline
        M3 & Hinge & 0.99(0.11) & 5.46(2.37) & 0.99(0.11) & 4.62(2.18)\\
         & Logistic & 0.99(0.11) & 5.08(2.78) & 0.99(0.11) & 4.38(2.34)\\[6pt]
         M4 & Hinge & 0.98(0.15) & 3.59(1.75) & 0.98(0.15) & 3.12(1.59)\\
         & Logistic & 0.99(0.08) & 2.23(1.29) & 0.99(0.08) & 2.06(1.16)\\\hline
    \end{tabular*}
   \caption{Comparison of performance of the model confidence sets. Here ``Coverage" means the probability for the model confidence set $\Gamma_\alpha^{\tau_0}(\bm{X}^{obs},\bm{y}^{obs})$ to contain $\tau_0$, and ``Cardinality" indicates the number of models in $\Gamma_\alpha^{\tau_0}(\bm{X}^{obs},\bm{y}^{obs})$.}
    \label{tab_tau_conf}
\end{table}

\subsection{Real Data}

In this section, we consider a high-dimensional real data analysis. Note that most existing methods focus on statistical inference for single coefficients, but our method can also quantify the uncertainty of model selection. As will be demonstrated, the Debiased Lasso method identifies only one variable as significant. In contrast, our model confidence sets find several variables that have been shown as important by many existing studies.

Specifically, we apply the proposed repro samples method to the single-cell RNA-seq data from \cite{shalek2014single}. This data comprises gene expression profiles for 27723 genes across 1861 primary mouse bone-marrow-derived dendritic cells spanning several experimental conditions. Specifically, we focus on a subset of the data consisting of 96 cells stimulated by the pathogenic component PIC (viral-like double-stranded RNA) and 96 control cells without stimulation, with gene expressions measured six hours after stimulation. In our study, we label each cell with 0 and 1 to indicate ``unstimulated" and ``stimulated" statues, respectively. Our goal is to investigate the association between gene expressions and stimulation status. Similar to \cite{cai2021statistical}, we filter out genes that are not expressed in more than 80\% of the cells and discard the bottom 90\% genes with the lowest variances. Subsequently, we log-transform and normalize the gene expressions to have mean 0 and unit variance. The resulting dataset consists of 192 samples with 697 covariates.

Using the same parameter tuning strategy as detailed in Section \ref{sec_simulation}, Repro-Hinge and Repro-Logistic identify 7 and 10 models, respectively, in the model candidate sets. We list all models within the model candidate sets in Table \ref{tab_real}. Most of the identified genes have been previously associated with immune systems. RSAD2 is involved in antiviral innate immune responses, and is also a powerful stimulator of adaptive immune response mediated via mDCs \citep{jang2018rsad2}. IFIT1 inhibits viral replication by binding viral RNA that carries PPP-RNA \citep{pichlmair2011ifit1}. IFT80 is known to be an essential component for the development and maintenance of motile and sensory cilia \citep{wang2018ift80}, while ciliary machinery is repurposed by T cell to focus the signaling protein LCK at immune synapse \citep{stephen2018ciliary}. BC044745 has been identified as significant in MRepro-Logistic/MpJ mouse, which exhibits distinct gene expression patterns involved in immune response \citep{podolak2015transcriptome}. ACTB has shown associations with immune cell infiltration, immune checkpoints, and other immune modulators in most cancers \citep{gu2021pan}. HMGN2 has been validated to play an important role in the innate immune system during pregnancy and development in mice \citep{deng2012chromosomal}. Finally, IFI47, also known as IRG47, has been proven to be vital for immune defense against protozoan and bacterial infections \citep{collazo2001inactivation}.

Regarding confidence sets for individual genes, we compare the proposed Repro-Hinge and Repro-Logistic methods with the debiased approach. 
Repro-Hinge identifies RSAD2 and AK217941 as significant, while both Repro-Logistic and Debiased Lasso only identify RSAD2 as significant. 
While RSAD2 plays an important role in antiviral innate immune responses, AK217941, though not studied in the literature, deserves further attention as it has been identified in both model confidence sets and single coefficient confidence sets.

\begin{table}
    \centering
    \begin{tabular*}{\columnwidth}{@{\extracolsep\fill}lccccccccccccccccc@{\extracolsep\fill}}\hline
        & \multicolumn{17}{c}{Methods}\\\cline{2-18}
        Genes & \multicolumn{7}{c}{Repro-Hinge} & \multicolumn{10}{c}{Repro-Logistic} \\\cline{1-1}\cline{2-8}\cline{9-18}
         RSAD2 & \solidcirc & \solidcirc & \solidcirc & \solidcirc & \solidcirc & \solidcirc & \solidcirc & \hollowcirc & \hollowcirc & \hollowcirc & \hollowcirc & \hollowcirc & \hollowcirc & \hollowcirc & \hollowcirc & \hollowcirc & \hollowcirc\\
         AK217941 & \solidcirc & \solidcirc & \solidcirc & \solidcirc & \solidcirc & \solidcirc & \solidcirc & \hollowcirc & \hollowcirc & \hollowcirc & \hollowcirc & & \hollowcirc & \hollowcirc & \hollowcirc & \hollowcirc & \hollowcirc\\
         IFIT1 & &&&& \solidcirc& \solidcirc & &\hollowcirc &\hollowcirc &  &\hollowcirc &  &\hollowcirc &  &\hollowcirc  &\hollowcirc &\hollowcirc\\
         IFT80 &&&&&&&&& \hollowcirc & & \hollowcirc & & \hollowcirc& \hollowcirc& \hollowcirc& \hollowcirc& \hollowcirc\\
         BC044745 & & & \solidcirc & \solidcirc & \solidcirc & \solidcirc & \solidcirc & &&& \hollowcirc & & && \hollowcirc & & \hollowcirc\\
         ACTB & & \solidcirc & \solidcirc & & & \solidcirc & \solidcirc & & & & & & & & & \hollowcirc &\hollowcirc\\
        HMGN2 & & & & & & & \solidcirc & & & & & & & & \hollowcirc & & \\
         IFI47 & & & & & & & & & & & & & \hollowcirc & & & &\\\hline
    \end{tabular*}
    \caption{All the models in the model confidence sets. Each row stands for a gene while each column corresponds to a model. The circle in the $i$-th row and $j$-th column indicates that the $i$-th gene appears in the $j$-th model.}
    \label{tab_real}
\end{table}

\section{Conclusions and Discussions}\label{sec:disc}

In this article, we develop a novel statistical inference method for high-dimensional binary models with unspecified structure. Unlike traditional approaches, our method doesn't rely on specific model assumptions such as logistic or probit regression, nor does it impose sparsity assumptions on the underlying model. Instead, we focus on inference for the optimal sparsity-constrained working GLM. The proposed framework enables the construction of a candidate set of the most influential covariates with guaranteed coverage under a weak signal strength condition. Furthermore, we introduce a comprehensive approach for inference on any group of linear combinations of coefficients in the optimal sparsity-constrained working GLM. Simulation studies demonstrate that our method yields valid and small model candidate sets while achieving desired coverage for regression coefficients.

To enable model-free inference in high-dimensional settings, we adopt a sparsity-constrained working GLM, that incorporates a discrete nuisance parameter--the model support. To ensure valid coverage of model candidate sets, we introduce a signal strength condition. An interesting direction for future exploration would be to devise methodologies for model-free high-dimensional inference that eliminate the need for such signal strength assumptions.

\bibliography{reference}
\bibliographystyle{plainnat}

\appendix

\section{Proofs}\label{sec_proof}
This section includes all the proofs of the theoretical results in the previous sections.

\subsection{Prediction performance of sparsity-constrained GLM}
In addition to the interpretability, the following lemma shows that the defined sparsity-constrained GLM also has reasonable prediction performance.

\begin{Lemma}\label{lem_prediction}
    Denote $\phi(x)=\log(1+e^{-x})$ to be the logistic loss function. If the link function $g^{-1}(x)=\frac{e^x}{1+e^x}$ is the logit link, then the sparsity-constrained GLM defined in \eqref{eq_tau0} and \eqref{eq_beta0} has prediction error controlled as follows,
    \begin{align*}
        &\Prob\bigg(Y\ne\1\bigg\{g^{-1}(X_{\tau_0}^\top \bm\beta_{0,\tau_0})>\frac{1}{2}\bigg\}\bigg)-\inf_{f:\R^{|\tau_0|}\rightarrow \{0,1\}}\Prob\big(Y\ne f(X_{\tau_0})\big)\\
        \le&\sqrt{2\log 2}\bigg\{\inf_{\bm\beta\in\R^p}\E\phi\big((2Y-1)X_{\tau_0}^\top\bm\beta_{\tau_0}\big)-\inf_{f:\R^{|\tau_0|}\rightarrow\R}\E\phi\big((2Y-1)f(X_{\tau_0})\big)\bigg\}^{\frac{1}{2}}.
    \end{align*}
\end{Lemma}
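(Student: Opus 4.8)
The plan is to read Lemma~\ref{lem_prediction} as a conditional instance of the classical surrogate-loss (classification-calibration) comparison for the logistic loss. Write $\bar\mu(x)=\Prob(Y=1\mid X_{\tau_0}=x)$ and $\hat\eta(x)=g^{-1}(x^\top\bm\beta_{0,\tau_0})$. Since $g^{-1}$ is increasing and $g^{-1}(0)=\tfrac12$, the classifier $\1\{g^{-1}(X_{\tau_0}^\top\bm\beta_{0,\tau_0})>\tfrac12\}$ equals the sign rule $\1\{X_{\tau_0}^\top\bm\beta_{0,\tau_0}>0\}=\1\{\hat\eta(X_{\tau_0})>\tfrac12\}$, so its $0$-$1$ risk minus the Bayes risk over $\sigma(X_{\tau_0})$, namely $\E\min\{\bar\mu(X_{\tau_0}),1-\bar\mu(X_{\tau_0})\}$, is exactly $\E_{X_{\tau_0}}\big[\,|2\bar\mu(X_{\tau_0})-1|\cdot\1\{\mathrm{sgn}(\hat\eta(X_{\tau_0})-\tfrac12)\neq\mathrm{sgn}(\bar\mu(X_{\tau_0})-\tfrac12)\}\,\big]$. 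On the right-hand side of the lemma, the logit link makes the pointwise logistic risk $\bar\mu(x)\phi(z)+(1-\bar\mu(x))\phi(-z)$ minimal at $z=\log\frac{\bar\mu(x)}{1-\bar\mu(x)}$ with value the binary entropy $H_b(\bar\mu(x))$, so $\inf_{f:\R^{|\tau_0|}\to\R}\E\phi((2Y-1)f(X_{\tau_0}))=\E H_b(\bar\mu(X_{\tau_0}))$.

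Next I would identify the bracket on the right-hand side with a Kullback--Leibler divergence. Using the logit link one has the identity $-l(\tau_0,\bm\beta_{\tau_0}\mid X,Y)=\phi((2Y-1)X_{\tau_0}^\top\bm\beta_{\tau_0})$, so \eqref{eq_beta0} says $\bm\beta_{0,\tau_0}$ minimizes $\E\phi((2Y-1)X_{\tau_0}^\top\bm\beta_{\tau_0})$, and this infimum equals $\inf_{\bm\beta\in\R^p}\E\phi((2Y-1)X_{\tau_0}^\top\bm\beta_{\tau_0})$ since the latter only involves the coordinates in $\tau_0$. Taking conditional expectations and using $\E[Y\mid X_{\tau_0}]=\bar\mu(X_{\tau_0})$ with $\hat\eta$ being $\sigma(X_{\tau_0})$-measurable, the excess logistic risk collapses to
\[
\inf_{\bm\beta\in\R^p}\E\phi\big((2Y-1)X_{\tau_0}^\top\bm\beta_{\tau_0}\big)-\inf_{f:\R^{|\tau_0|}\to\R}\E\phi\big((2Y-1)f(X_{\tau_0})\big)=\E_{X_{\tau_0}}\,\mathrm{KL}\!\big(\mathrm{Bern}(\bar\mu(X_{\tau_0}))\,\big\|\,\mathrm{Bern}(\hat\eta(X_{\tau_0}))\big),
\]
which is nonnegative by Gibbs' inequality.

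The core step is a pointwise comparison at each $x$. When the sign rules agree there is nothing to prove; when they disagree, $\bar\mu(x)$ and $\hat\eta(x)$ lie on opposite sides of $\tfrac12$, so by the monotonicity of $q\mapsto\mathrm{KL}(\mathrm{Bern}(\bar\mu(x))\,\|\,\mathrm{Bern}(q))$ (which decreases for $q<\bar\mu(x)$ and increases for $q>\bar\mu(x)$) I can replace $\hat\eta(x)$ by $\tfrac12$ to get $\mathrm{KL}(\mathrm{Bern}(\bar\mu(x))\,\|\,\mathrm{Bern}(\hat\eta(x)))\ge\mathrm{KL}(\mathrm{Bern}(\bar\mu(x))\,\|\,\mathrm{Bern}(\tfrac12))=\log 2-H_b(\bar\mu(x))$; Pinsker's inequality for Bernoulli laws then lower bounds this by a fixed multiple of $(2\bar\mu(x)-1)^2$, yielding $|2\bar\mu(x)-1|\le\sqrt{2\log 2}\,\sqrt{\mathrm{KL}(\mathrm{Bern}(\bar\mu(x))\,\|\,\mathrm{Bern}(\hat\eta(x)))}$. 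Integrating over $X_{\tau_0}$ and applying $\E\sqrt W\le\sqrt{\E W}$ to pull the square root outside, together with the identity of the previous paragraph, gives the stated inequality.

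I expect the delicate points to be the bookkeeping in the middle step---checking that $\bm\beta_{0,\tau_0}$ genuinely attains the linear-score infimum of the $\phi$-risk (equivalently, that the first term on the right of the lemma is $-\E l(\tau_0,\bm\beta_{0,\tau_0}\mid X,Y)$) and that the $\phi$-risk gap is exactly $\E_{X_{\tau_0}}\mathrm{KL}(\mathrm{Bern}(\bar\mu)\|\mathrm{Bern}(\hat\eta))$---and tracking the numerical constant $\sqrt{2\log 2}$, which comes out of Pinsker's inequality together with the logarithm base implicit in $\phi$; the measure-theoretic treatment of the ties $\{\bar\mu=\tfrac12\}$ or $\{\hat\eta=\tfrac12\}$ is routine. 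An alternative is to cite the classification-calibration comparison inequalities of Zhang or Bartlett--Jordan--McAuliffe for the logistic loss directly, the only adaptation being that everything is conditioned on $X_{\tau_0}$ rather than on the full covariate vector.
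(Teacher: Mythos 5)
Your argument is essentially the paper's: the paper proves this lemma in three lines by invoking the classification-calibration comparison theorem of \cite{bartlett2006convexity} for the convex logistic loss together with a Pinsker-type lower bound on the associated $\psi$-transform $\psi(x)=\log 2+\frac{1+x}{2}\log\frac{1+x}{2}+\frac{1-x}{2}\log\frac{1-x}{2}$, and your KL-representation/monotonicity/Pinsker/Jensen derivation is precisely the standard proof of that cited theorem specialized to the logit link (an alternative you yourself flag at the end), so the two routes coincide in substance. One caveat applies equally to your write-up and to the paper's: Pinsker gives $\mathrm{KL}\big(\mathrm{Bern}(p)\,\|\,\mathrm{Bern}(\tfrac12)\big)\ge 2(p-\tfrac12)^2$, i.e.\ $\psi(x)\ge x^2/2$ and hence the constant $\sqrt{2}$, whereas the bound $\psi(x)\ge x^2/(2\log 2)$ needed for the stated constant $\sqrt{2\log 2}$ is false --- since $\psi(x)=\tfrac{x^2}{2}+o(x^2)$ as $x\to 0$ and $\tfrac12<\tfrac{1}{2\log 2}$, it already fails for all small $x\ne 0$ --- so the constant in the displayed inequality should be $\sqrt{2}$ in both your proof and the paper's.
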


\begin{proof}[Proof of Lemma \ref{lem_prediction}]
    Denote $\psi$ function as
    \[\psi(x)=\log 2+\frac{1+x}{2}\log\frac{1+x}{2}+\frac{1-x}{2}\log\frac{1-x}{2},\]
    it follows from Pinsker's inequality that 
    \[\psi(x)\ge \frac{x^2}{2\log 2}.\]
    Since the logistic loss $\phi(x)=\log(1+e^{-x})$ is convex, it follows from \cite{bartlett2006convexity} that
    \begin{align*}
        &\Prob\bigg(Y\ne\1\bigg\{g^{-1}(X_{\tau_0}^\top \bm\beta_{0,\tau_0})>\frac{1}{2}\bigg\}\bigg)-\inf_{f:\R^{|\tau_0|}\rightarrow \{0,1\}}\Prob\big(Y\ne f(X_{\tau_0})\big)\\
        \le&\sqrt{2\log 2}\bigg\{\inf_{\bm\beta\in\R^p}\E\phi\big((2Y-1)X_{\tau_0}^\top\bm\beta_{\tau_0}\big)-\inf_{f:\R^{|\tau_0|}\rightarrow\R}\E\phi\big((2Y-1)f(X_{\tau_0})\big)\bigg\}^{\frac{1}{2}}.
    \end{align*}
\end{proof}

\subsection{$\tau_0$ when $\mu(X)$ is close to $s$-sparse GLM}

The following lemma states that if $\mu(X)$ is close to an $s$-sparse GLM with support $\tilde\tau$, then $\tilde\tau$ will have a small error for recovering $Y$. Meanwhile, if all the other sparse models have a relatively large reconstruction error, then $\tau_0$ defined in \eqref{eq_tau0} will be $\tilde\tau$.
\begin{Lemma}\label{lem_tau0}
\begin{enumerate}[1)]
    \item Suppose $\mu(X)=g^{-1}(X_{\tilde\tau}^\top\tilde{\bm\beta}_{\tilde\tau})$ with $|\tilde\tau|= s$, then $\tau_0=\tilde\tau$.
    \item Suppose $\mu(X)$ is close to an $s$-sparse GLM $g^{-1}(X_{\tilde\tau}^\top\tilde{\bm\beta}_{\tilde\tau})$ with $|\tilde\tau|= s$, denote 
    \[\Delta(X)\overset{\triangle}{=}\mu(X)-g^{-1}(X_{\tilde\tau}^\top\tilde{\bm\beta}_{\tilde\tau}),\quad\delta\overset{\triangle}{=}\Prob\bigg(\mu(X)\in\bigg(\frac{1}{2},\frac{1}{2}+\Delta(X)\bigg]\cup\bigg(\frac{1}{2}+\Delta(X),\frac{1}{2}\bigg]\bigg),\]
    then
    \[\Prob\bigg(Y\ne \1\bigg(g^{-1}(X_{\tilde\tau}^\top\tilde{\bm\beta}_{\tilde\tau})>\frac{1}{2}\bigg)\bigg)-\Prob\bigg(Y\ne\1\bigg(\mu(X)>\frac{1}{2}\bigg)\bigg)\le\delta.\]
    If all models $\tau\ne\tau_0$ have a relatively large data reconstruction error such that
    \[\min_{\tau\ne\tilde\tau,|\tau|\le s}\inf_{\bm\beta_{\tau}\in\R^{|\tau|}}\Prob\bigg(Y\ne\1\bigg(g^{-1}(X_{\tau}^\top\bm\beta_\tau)>\frac{1}{2}\bigg)\bigg)>\Prob\bigg(Y\ne\1\bigg(\mu(X)>\frac{1}{2}\bigg)\bigg)+\delta,\]
    then $\tau_0=\tilde\tau$.
\end{enumerate}
    
\end{Lemma}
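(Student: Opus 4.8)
The plan is to analyze the defining program \eqref{eq_tau0} of $\tau_0$ directly and show that $\tilde\tau$ is its (unique) minimizer.

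For part 1), I would first recall the standard fact that, for any measurable classifier $f$ of $X$, $\Prob(Y\ne f(X))\ge\Prob(Y\ne\1(\mu(X)>1/2))=\E[\min\{\mu(X),1-\mu(X)\}]$; that is, $\1(\mu(X)>1/2)$ is a Bayes rule. Applying this with $f(X)=\1(g^{-1}(X_\tau^\top\bm{\beta}_\tau)>1/2)$ shows that the objective in \eqref{eq_tau0} is bounded below by the Bayes risk for every $\tau$ with $|\tau|\le s$ and every $\bm{\beta}_\tau$. When $\mu(X)=g^{-1}(X_{\tilde\tau}^\top\tilde{\bm{\beta}}_{\tilde\tau})$ this lower bound is attained at $\tau=\tilde\tau$ (with coefficient $\tilde{\bm{\beta}}_{\tilde\tau}$), so $\tilde\tau$ minimizes \eqref{eq_tau0}; since $\bm{\theta}_0$, and hence its model component $\tau_0$, is assumed unique, it follows that $\tau_0=\tilde\tau$.

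For the first inequality of part 2), write $\nu(X):=g^{-1}(X_{\tilde\tau}^\top\tilde{\bm{\beta}}_{\tilde\tau})=\mu(X)-\Delta(X)$. Conditionally on $X$, the risk of the plug-in rule $\1(\nu(X)>1/2)$ is $(1-\mu(X))\1(\nu(X)>1/2)+\mu(X)\1(\nu(X)\le 1/2)$, while the Bayes conditional risk is $\min\{\mu(X),1-\mu(X)\}$; a direct check shows the conditional excess risk is $0$ whenever $\1(\nu(X)>1/2)=\1(\mu(X)>1/2)$ and equals $|2\mu(X)-1|\le 1$ on the complementary event $\{\mu(X)>1/2\ge\nu(X)\}\cup\{\nu(X)>1/2\ge\mu(X)\}$. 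Substituting $\nu(X)=\mu(X)-\Delta(X)$, this complementary event is exactly $\{\mu(X)\in(1/2,1/2+\Delta(X)]\}\cup\{\mu(X)\in(1/2+\Delta(X),1/2]\}$, which has probability $\delta$ by definition, so taking expectations gives $\Prob(Y\ne\1(g^{-1}(X_{\tilde\tau}^\top\tilde{\bm{\beta}}_{\tilde\tau})>1/2))-\Prob(Y\ne\1(\mu(X)>1/2))\le\delta$.

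For the second claim of part 2), combining the previous inequality with the trivial bound $\inf_{\bm{\beta}_{\tilde\tau}}\Prob(Y\ne\1(g^{-1}(X_{\tilde\tau}^\top\bm{\beta}_{\tilde\tau})>1/2))\le\Prob(Y\ne\1(g^{-1}(X_{\tilde\tau}^\top\tilde{\bm{\beta}}_{\tilde\tau})>1/2))$ shows that the objective of \eqref{eq_tau0} evaluated at $\tilde\tau$ is at most $\Prob(Y\ne\1(\mu(X)>1/2))+\delta$, whereas the assumed separation makes the objective at every $\tau\ne\tilde\tau$ with $|\tau|\le s$ strictly larger than this value. Hence $\tilde\tau$ is the unique minimizer of \eqref{eq_tau0}, i.e.\ $\tau_0=\tilde\tau$. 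I expect the only mildly delicate step to be the excess-risk bookkeeping in part 2) — identifying the disagreement event of the two classifiers with the set that defines $\delta$ and checking the boundary case $\mu(X)=1/2$ — but this is a short case analysis, and everything else follows directly from the definition of $\tau_0$ and the uniqueness assumption.
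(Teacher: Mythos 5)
Your proposal is correct and follows essentially the same route as the paper: part 1) is the Bayes-optimality (Fisher consistency) of the plug-in rule $\1(\mu(X)>\frac{1}{2})$ combined with the uniqueness assumption on $\bm\theta_0$, and part 2) rests on the excess-risk identity $\Prob(Y\ne f(X))-\Prob(Y\ne\1(\mu(X)>\frac{1}{2}))=\E\,|2\mu(X)-1|\,|f(X)-\1(\mu(X)>\frac{1}{2})|$, with the disagreement event of the two plug-in classifiers identified as exactly the set defining $\delta$. Your bookkeeping of that event and the final comparison against the separation condition match the paper's argument, so there is nothing to add.
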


\begin{proof}[Proof of Lemma \ref{lem_tau0}]

    \begin{enumerate}[1)]
        \item The first part follows from the Fisher consistency of 0-1 loss.
        \item It is easy to verify that for any $f:\R^p\rightarrow \{0,1\}$, 
    \[\Prob(Y\ne f(X))=\E\mu(X)+\E(1-2\mu(X))f(X),\]
    \[\Prob(Y\ne f(X))-\Prob\bigg(Y\ne \1\bigg(\mu(X)>\frac{1}{2}\bigg)\bigg)=\E|2\mu(X)-1|\bigg|f(X)-\1\bigg(\mu(X)>\frac{1}{2}\bigg)\bigg|.\]
    Then, 
    \begin{align*}
        &\Prob\bigg(Y\ne\1\bigg(g^{-1}(X_{\tilde\tau}^\top\tilde{\bm\beta}_{\tilde\tau})>\frac{1}{2}\bigg)\bigg)-\Prob\bigg(Y\ne\1\bigg(\mu(X)>\frac{1}{2}\bigg)\bigg)\\
        \le&\E\bigg|\1\bigg(g^{-1}(X_{\tilde\tau}^\top\tilde{\bm\beta}_{\tilde\tau})>\frac{1}{2}\bigg)-\1\bigg(\mu(X)>\frac{1}{2}\bigg)\bigg|=\delta,
    \end{align*}
    and for any $\tau\ne\tilde\tau,|\tau|\le s,\bm\beta_{\tau}\in\R^{|\tau|}$, we have
    \begin{align*}
        &\Prob\bigg(Y\ne\1\bigg(g^{-1}(X_{\tau}^\top\bm\beta_\tau)>\frac{1}{2}\bigg)\bigg)-\Prob\bigg(Y\ne\1\bigg(g^{-1}(X_{\tilde\tau}^\top\tilde{\bm\beta}_{\tilde\tau})>\frac{1}{2}\bigg)\bigg)\\
        =&\Prob\bigg(Y\ne\1\bigg(g^{-1}(X_{\tau}^\top\bm\beta_\tau)>\frac{1}{2}\bigg)\bigg)-\Prob\bigg(Y\ne\1\bigg(\mu(X)>\frac{1}{2}\bigg)\bigg)\\
        &+\Prob\bigg(Y\ne\1\bigg(\mu(X)>\frac{1}{2}\bigg)\bigg)-\Prob\bigg(Y\ne\1\bigg(g^{-1}(X_{\tilde\tau}^\top\tilde{\bm\beta}_{\tilde\tau})>\frac{1}{2}\bigg)\bigg)\\
        >&\delta-\E|2\mu(X)-1|\1\bigg(\mu(X)\in\bigg(\frac{1}{2},\frac{1}{2}+\Delta(X)\bigg]\cup\bigg(\frac{1}{2}+\Delta(X),\frac{1}{2}\bigg]\bigg)\\
        >&0,
    \end{align*}
    which implies $\tau_0=\tilde\tau$.
    \end{enumerate}
    
\end{proof}

\subsection{Connection to $\beta_{\rm min}$}
\begin{Lemma}\label{lem_cmin}
For any $\tau_1,\tau_2\subset[p], {\bm{\beta}}_1\in\R^{\abs{\tau_1}},{\bm{\beta}}_2\in\R^{\abs{\tau_2}}$, we have
\[\Prob(\mathbbm{1}\{X_{\tau_1}^\top{\bm{\beta}}_1+\epsilon>0\}\ne\mathbbm{1}\{X_{\tau_2}^\top{\bm{\beta}}_2+\epsilon>0\})= {\rm TV}(\Prob_{X,Y|\tau_1,{\bm{\beta}}_{1}},\Prob_{X,Y|\tau_2,{\bm{\beta}}_{2}}).\]

\end{Lemma}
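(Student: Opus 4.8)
The plan is to unpack both sides of the claimed identity in terms of the common randomness and then recognize a pointwise indicator difference as a total variation integrand. First I would recall from the model setup that, conditionally on $X$, the response under a generic parameter $(\tau,\bm\beta_\tau)$ is $Y=\1\{X_\tau^\top\bm\beta_\tau+\epsilon>0\}$ with $\epsilon=-g(U)$, $U\sim\mathrm{Unif}(0,1)$, so that $\Prob(Y=1\mid X)=\Prob(\epsilon>-X_\tau^\top\bm\beta_\tau)=1-F_\epsilon(-X_\tau^\top\bm\beta_\tau)$ for the CDF $F_\epsilon$ of $\epsilon$. Here the crucial observation is that the \emph{same} noise realization $\epsilon$ is used on both sides of the event $\{\1\{X_{\tau_1}^\top\bm\beta_1+\epsilon>0\}\ne\1\{X_{\tau_2}^\top\bm\beta_2+\epsilon>0\}\}$; this is a coupling of $\Prob_{X,Y\mid\tau_1,\bm\beta_1}$ and $\Prob_{X,Y\mid\tau_2,\bm\beta_2}$ through a shared $X$ and a shared $\epsilon$.

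Next I would condition on $X$ and compute the conditional probability of disagreement. Since the map $t\mapsto\1\{t+\epsilon>0\}$ is monotone in $t$ for fixed $\epsilon$, the two indicators disagree exactly when $\epsilon$ lies between $-X_{\tau_1}^\top\bm\beta_1$ and $-X_{\tau_2}^\top\bm\beta_2$; hence
\[
\Prob\big(\1\{X_{\tau_1}^\top\bm\beta_1+\epsilon>0\}\ne\1\{X_{\tau_2}^\top\bm\beta_2+\epsilon>0\}\,\big|\,X\big)=\big|F_\epsilon(-X_{\tau_1}^\top\bm\beta_1)-F_\epsilon(-X_{\tau_2}^\top\bm\beta_2)\big|,
\]
which is precisely $|\Prob(Y=1\mid X,\tau_1,\bm\beta_1)-\Prob(Y=1\mid X,\tau_2,\bm\beta_2)|$. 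Then I would integrate over $X\sim P_X$: the left-hand side of the lemma equals $\E_X\,|\mu_1(X)-\mu_2(X)|$ where $\mu_k(X)=\Prob(Y=1\mid X,\tau_k,\bm\beta_k)$.

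Finally I would identify this with the total variation distance between the two joint laws. Both joints share the same $X$-marginal $P_X$, and for each fixed $x$ the conditional laws of $Y$ are Bernoulli$(\mu_1(x))$ and Bernoulli$(\mu_2(x))$, whose total variation distance is $|\mu_1(x)-\mu_2(x)|$. By the standard fact that for joint distributions with a common marginal the total variation distance equals the marginal expectation of the conditional total variation distances (equivalently, write $\mathrm{TV}=\tfrac12\int\int|p_1(x,y)-p_2(x,y)|\,dy\,dP_X(x)=\tfrac12\E_X\sum_{y\in\{0,1\}}|\mu_1(x)^y(1-\mu_1(x))^{1-y}-\mu_2(x)^y(1-\mu_2(x))^{1-y}|=\E_X|\mu_1(X)-\mu_2(X)|$), we obtain $\mathrm{TV}(\Prob_{X,Y\mid\tau_1,\bm\beta_1},\Prob_{X,Y\mid\tau_2,\bm\beta_2})=\E_X|\mu_1(X)-\mu_2(X)|$, matching the left-hand side. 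The only mild subtlety — and the step I would be most careful with — is the boundary behavior: the event $\{t+\epsilon>0\}$ vs.\ $\{t+\epsilon\ge 0\}$ and ties at $\epsilon=-X_\tau^\top\bm\beta_\tau$; since $\epsilon=-g(U)$ with $U$ continuously distributed and $g$ strictly increasing, $\epsilon$ has a continuous distribution, so $\Prob(\epsilon=-X_\tau^\top\bm\beta_\tau\mid X)=0$ and the strict/non-strict distinction is immaterial, but this should be noted explicitly. There is no serious obstacle here; the argument is a short conditioning-and-coupling computation.
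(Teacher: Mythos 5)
Your proposal is correct and follows essentially the same route as the paper's proof: condition on $X$, observe that the two indicators disagree exactly when the shared noise falls between the two thresholds so the conditional disagreement probability is $\big|g^{-1}(X_{\tau_1}^\top\bm\beta_1)-g^{-1}(X_{\tau_2}^\top\bm\beta_2)\big|=\big|\Prob(Y=1\mid X,\tau_1,\bm\beta_1)-\Prob(Y=1\mid X,\tau_2,\bm\beta_2)\big|$, and then integrate over $X$ and invoke the common-marginal identity for total variation. Your explicit remark about the continuity of the law of $\epsilon$ making the strict/non-strict boundary immaterial is a small point of care that the paper leaves implicit.
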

\begin{proof}[Proof of Lemma \ref{lem_cmin}]

\begin{align*}
    &\Prob(\mathbbm{1}\{X_{\tau_1}^\top{\bm{\beta}}_1+\epsilon>0\}\ne\mathbbm{1}\{X_{\tau_2}^\top{\bm{\beta}}_2+\epsilon>0\})\\
    =&\E\Prob(X_{\tau_1}^\top{\bm{\beta}}_{1}\le g(U)<X_{\tau_2}^\top{\bm{\beta}}_2|X)+\E\Prob(X_{\tau_2}^\top{\bm{\beta}}_{2}\le g(U)<X_{\tau_1}^\top{\bm{\beta}}_1|X)\\
    =&\E\left|g^{-1}(X_{\tau_1}^\top{\bm{\beta}}_1)-g^{-1}(X_{\tau_2}^\top{\bm{\beta}}_2)\right|\\
    =&\E\abs{\Prob_{Y|X,(\tau_1,{\bm{\beta}}_1)}(Y=1|X)-\Prob_{Y|X,(\tau_2,{\bm{\beta}}_2)}(Y=1|X))}\\
    =&{\rm TV}(\Prob_{(\tau_1,{\bm{\beta}}_1)},\Prob_{(\tau_2,{\bm{\beta}}_2)}).
\end{align*}

\end{proof}

\begin{Lemma}\label{lem_betamin}
    Denote $\beta_{\min}=\min_{j\in\tau_0}\abs{\beta_{0,j}}$. Assume $\norm{X}_{\psi_2}\le\xi$, $\norm{{\bm{\beta}}_0}_2\le B$ and the density of $X^\top\bm{\beta}$ is upper bounded by $C$ for any $\bm{\beta}$ satisfying $\|\bm{\beta}\|_0\le 2|\tau_0|, \|\bm{\beta}\|_2\ge 1$. Here, $\xi,B$ and $C$ are positive constants,
    then
    \[\inf_{\abs{\tau}\le |\tau_0|,\tau\ne\tau_0,{\bm{\beta}}_\tau\in\R^{\abs{\tau}}}\frac{{\rm TV}(\Prob_{\bm{\theta}_0},\Prob_{(\tau,{\bm{\beta}}_\tau)})}{\sqrt{\abs{\tau_0\setminus\tau}}}\gtrsim\beta_{\min}.\]
\end{Lemma}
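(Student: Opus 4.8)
The plan is to combine the total--variation identity of Lemma~\ref{lem_cmin} with an anti--concentration estimate extracted from the density hypothesis, together with the saturating behaviour of the sigmoid $\eta=g^{-1}$ (the logistic link, as in the setting of Remark~\ref{rem_betamin}). By Lemma~\ref{lem_cmin},
\[
\mathrm{TV}(\Prob_{\bm\theta_0},\Prob_{(\tau,\bm\beta_\tau)})=\E\big|\eta(X^\top\bm\beta_0)-\eta(X^\top\tilde{\bm\beta})\big|,
\]
where $\tilde{\bm\beta}\in\R^p$ is the zero--extension of $\bm\beta_\tau$. Writing $\bm v=\bm\beta_0-\tilde{\bm\beta}$ one has $\|\bm v\|_0\le 2|\tau_0|$, and since $\tilde{\bm\beta}$ vanishes on the nonempty set $\tau_0\setminus\tau$ (nonempty because $\tau\ne\tau_0$, $|\tau|\le|\tau_0|$) we get $\|\bm v\|_2^2\ge\sum_{j\in\tau_0\setminus\tau}\beta_{0,j}^2\ge|\tau_0\setminus\tau|\,\beta_{\min}^2$; abbreviate $\rho:=\sqrt{|\tau_0\setminus\tau|}\,\beta_{\min}\le\|\bm v\|_2$, and note $\rho\le\|\bm\beta_0\|_2\le B$. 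Rescaling the density assumption shows that for any $\bm u$ with $\|\bm u\|_0\le 2|\tau_0|$ the density of $X^\top\bm u$ is at most $C/\|\bm u\|_2$, so $\Prob(|X^\top\bm u|\le t)\le 2Ct/\|\bm u\|_2$ and hence $\E|X^\top\bm u|\ge\|\bm u\|_2/(4C)$; this is the only role the density bound will play.

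Next I would split on the size of the adversarial coefficient. From the elementary bound $|\eta(a)-\eta(b)|\ge\tfrac14 e^{-(|a|\vee|b|)}|a-b|$ (consequence of $\eta'(t)\ge\tfrac14 e^{-|t|}$, monotonicity of $\eta'$ in $|t|$, and the mean value theorem), for every $R>0$,
\[
\mathrm{TV}(\Prob_{\bm\theta_0},\Prob_{(\tau,\bm\beta_\tau)})\ \ge\ \tfrac14 e^{-R}\,\E\big[\,|X^\top\bm v|\,\1\{|X^\top\bm\beta_0|\le R,\ |X^\top\tilde{\bm\beta}|\le R\}\,\big].
\]
In the regime $\|\tilde{\bm\beta}\|_2\le 2B$, both $X^\top\bm\beta_0$ and $X^\top\tilde{\bm\beta}$ are sub--Gaussian with proxy $\lesssim\xi B$ and $\|\bm v\|_2\lesssim B$; choosing $R$ a constant depending only on $(\xi,B,C)$ and controlling the complementary event via Cauchy--Schwarz (using the sub--Gaussian tail and $\E|X^\top\bm v|^2\lesssim\xi^2\|\bm v\|_2^2$) leaves $\E[\,|X^\top\bm v|\,\1\{\cdots\}\,]\ge\tfrac12\E|X^\top\bm v|\ge\|\bm v\|_2/(8C)$, whence $\mathrm{TV}\gtrsim\|\bm v\|_2\ge\rho$.

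The remaining case $\|\tilde{\bm\beta}\|_2>2B$ is the main obstacle: the ``both predictors bounded'' event may now carry little mass and the display above is useless, so instead I would exploit that $\eta(X^\top\tilde{\bm\beta})$ is close to the indicator $\1\{X^\top\tilde{\bm\beta}>0\}$. Writing $W=X^\top\bm\beta_0$, $W'=X^\top\tilde{\bm\beta}$, the triangle inequality together with $|\eta(w)-\1\{w>0\}|=\min\{\eta(w),1-\eta(w)\}=\eta(-|w|)$ gives
\[
\mathrm{TV}(\Prob_{\bm\theta_0},\Prob_{(\tau,\bm\beta_\tau)})\ \ge\ \E\,\eta(-|W|)-\E\,\eta(-|W'|).
\]
For the first term, $s\mapsto\eta(-s)$ is convex on $[0,\infty)$, so Jensen and the sub--Gaussian bound $\E|W|\lesssim\xi B$ yield $\E\,\eta(-|W|)\ge\eta(-\E|W|)\ge c_1$ for a constant $c_1=c_1(\xi,B)>0$. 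For the second, splitting at a level $r$ and applying the rescaled density bound to $W'$ gives $\E\,\eta(-|W'|)\le\eta(-r)+Cr/\|\tilde{\bm\beta}\|_2$; fixing $r$ with $\eta(-r)\le c_1/4$ and then enlarging the threshold on $\|\tilde{\bm\beta}\|_2$ (a constant depending on $\xi,B,C$) so that $Cr/\|\tilde{\bm\beta}\|_2\le c_1/4$ forces $\mathrm{TV}\ge c_1/2$, which is $\gtrsim\rho$ since $\rho\le B$. Combining the two regimes and dividing by $\sqrt{|\tau_0\setminus\tau|}$ gives the claim; everything outside the large--norm case is routine sub--Gaussian tail bookkeeping and the density--to--spread conversions noted above.
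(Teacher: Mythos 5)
Your proposal is correct, but it follows a genuinely different route from the paper. Both arguments start from the identity of Lemma~\ref{lem_cmin}, ${\rm TV}=\E|\eta(X_{\tau_0}^\top\bm\beta_{0,\tau_0})-\eta(X_\tau^\top\bm\beta_\tau)|$, and both ultimately rest on the rescaled density bound (anti-concentration of $X^\top\bm u$ for $2|\tau_0|$-sparse $\bm u$) plus sub-Gaussianity; the difference is in how the sigmoid's flattening at large arguments is neutralized. The paper makes a single \emph{pointwise} split on the event $\{|X_\tau^\top\bm\beta_\tau|\le 2|X_{\tau_0}^\top\bm\beta_{0,\tau_0}|\}$, lower-bounds the increment by $\min\{|X_{\tau_0}^\top\bm\beta_{0,\tau_0}-X_\tau^\top\bm\beta_\tau|,\,|X_{\tau_0}^\top\bm\beta_{0,\tau_0}|\}$ times the weight $e^{-2|W|}/(1+e^{-2|W|})^2$ with $W=X_{\tau_0}^\top\bm\beta_{0,\tau_0}$, and then decouples the two factors via the reverse Cauchy--Schwarz step $\E[Zw]\ge(\E\sqrt{Z})^2/\E[w^{-1}]$; the crucial payoff is that the weight's reciprocal is controlled by sub-Gaussianity of $W$ \emph{alone}, so no control on $\|\bm\beta_\tau\|_2$ is ever needed and there is no case analysis. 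You instead dichotomize on the norm of the adversarial coefficient: in the bounded regime you use the mean-value bound $\eta'(t)\ge\frac14e^{-|t|}$ with a constant truncation level, and in the saturated regime you compare $\eta(X^\top\tilde{\bm\beta})$ to the sign indicator and exploit Jensen's inequality for the convex map $s\mapsto\eta(-s)$ on $[0,\infty)$ together with anti-concentration of $X^\top\tilde{\bm\beta}$. Your version is more elementary (no reverse-H\"older trick) and makes the geometry of the two failure modes transparent, at the cost of the two-regime bookkeeping and the need to tune the splitting threshold to a constant $K(\xi,B,C)\ge 2B$ rather than $2B$ itself, as you note; the paper's version is shorter and regime-free. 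The only points to tighten in a final write-up are (i) stating the case split at that enlarged threshold $K$ from the outset so that Case~1 retains a constant sub-Gaussian proxy for $X^\top\tilde{\bm\beta}$, and (ii) the constant in $\E|X^\top\bm u|\ge\|\bm u\|_2/(4C)$, which should be $\|\bm u\|_2/(8C)$ with your choice of truncation point — neither affects the conclusion.
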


\begin{proof}[Proof of Lemma \ref{lem_betamin}]
    
    By Lemma \ref{lem_cmin},
    \begin{align*}
        &{\rm TV}(\Prob_{\bm{\theta}_0},\Prob_{(\tau,{\bm{\beta}}_\tau)})\\
        =&\E\abs{\frac{1}{1+e^{-X_{\tau_0}^\top{\bm{\beta}}_{0,\tau_0}}}-\frac{1}{1+e^{-X_\tau^\top{\bm{\beta}}_\tau}}}\\
    =&\E\left|\dfrac{1}{1+e^{-X_{\tau_0}^\top{\bm{\beta}}_{0,\tau_0}}}-\dfrac{1}{1+e^{-X_\tau^\top{\bm{\beta}}_\tau}}\right|\{\mathbbm{1}\{|X_{\tau}^\top{\bm{\beta}}_\tau|\le2|X_{\tau_0}^\top{\bm{\beta}}_{0,\tau_0}|\}+\mathbbm{1}\{|X_{\tau}^\top{\bm{\beta}}_\tau|>2|X_{\tau_0}^\top{\bm{\beta}}_{0,\tau_0}|\}\}\\
    \ge &\E|X_{\tau_0}^\top{\bm{\beta}}_{0,\tau_0}-X_\tau^\top{\bm{\beta}}_\tau|\dfrac{e^{-2|X_{\tau_0}^\top{\bm{\beta}}_{0,\tau_0}|}}{(1+e^{-2|X_{\tau_0}^\top{\bm{\beta}}_{0,\tau_0}|})^2}\mathbbm{1}\{|X_{\tau}^\top{\bm{\beta}}_\tau|\le2|X_{\tau_0}^\top{\bm{\beta}}_{0,\tau_0}|\}\\
    &+\E|X_{\tau_0}^\top{\bm{\beta}}_{0,\tau_0}|\dfrac{e^{-2|X_{\tau_0}^\top{\bm{\beta}}_{0,\tau_0}|}}{(1+e^{-2|X_{\tau_0}^\top{\bm{\beta}}_{0,\tau_0}|})^2}\mathbbm{1}\{|X_{\tau}^\top{\bm{\beta}}_\tau|>2|X_{\tau_0}^\top{\bm{\beta}}_{0,\tau_0}|\}\\
    \ge&\dfrac{(\E\min\{|X_{\tau_0}^\top{\bm{\beta}}_{0,\tau_0}-X_\tau^\top{\bm{\beta}}_\tau|,|X_{\tau_0}^\top{\bm{\beta}}_{0,\tau_0}|\}^{1/2})^2}{\E(1+e^{-2|X_{\tau_0}^\top{\bm{\beta}}_{0,\tau_0}|})^2e^{2|X_{\tau_0}^\top{\bm{\beta}}_{0,\tau_0}|}}\\
    \ge&\beta_{\min}\inf_{\abs{\tau}\le |\tau_0|,\tau\ne\tau_0,{\bm{\beta}}_\tau\in\R^{\abs{\tau}}}\left(\E\min\left\{\abs{X_{\tau_0}^\top\frac{{\bm{\beta}}_{0,\tau_0}}{\beta_{\min}}-X_\tau^\top{\bm{\beta}}_\tau},\abs{X_{\tau_0}^\top\frac{{\bm{\beta}}_{0,\tau_0}}{\beta_{\min}}}\right\}^{1/2}\right)^2e^{-c\|{\bm{\beta}}_0\|_2^2\xi^2}/4.
    \end{align*}
    For $\tau\ne\tau_0,\abs{\tau}\le |\tau_0|$, there exists $b\in\R^p,\norm{b}_0\le 2|\tau_0|$ such that $X_{\tau_0}^\top\frac{{\bm{\beta}}_{0,\tau_0}}{\beta_{\min}}-X_\tau^\top{\bm{\beta}}_\tau=X^\top b$. For any $j\in\tau_0\setminus\tau$, we have $\abs{b_j}=\frac{\abs{\beta_{0,j}}}{\beta_{\min}}\ge 1$, therefore $\norm{b}_2\ge \sqrt{\abs{\tau_0\setminus\tau}}$. Similarly $\norm{\frac{{\bm{\beta}}_{0,\tau_0}}{\beta_{\min}}}_2\ge \sqrt{|\tau_0|}$. 
    \begin{align*}
        \sup_{\norm{{\bm{\beta}}}_0\le 2|\tau_0|,\norm{{\bm{\beta}}}_1\ge1}\Prob(\abs{X^\top{\bm{\beta}}}\le\frac{1}{8C})\le\frac{1}{4}.
    \end{align*}
    Then
    \begin{align*}
        &\inf_{\abs{\tau}\le |\tau_0|,\tau\ne\tau_0,{\bm{\beta}}_\tau\in\R^{\abs{\tau}}}\E\min\left\{\abs{X_{\tau_0}^\top\frac{{\bm{\beta}}_{0,\tau_0}}{\beta_{\min}}-X_\tau^\top{\bm{\beta}}_\tau},\abs{X_{\tau_0}^\top\frac{{\bm{\beta}}_{0,\tau_0}}{\beta_{\min}}}\right\}^{1/2}\\
        \ge&\inf_{\abs{\tau}\le |\tau_0|,\tau\ne\tau_0,{\bm{\beta}}_\tau\in\R^{\abs{\tau}}}\E\min\left\{\abs{X_{\tau_0}^\top\frac{{\bm{\beta}}_{0,\tau_0}}{\beta_{\min}}-X_\tau^\top{\bm{\beta}}_\tau},\abs{X_{\tau_0}^\top\frac{{\bm{\beta}}_{0,\tau_0}}{\beta_{\min}}}\right\}^{1/2}\\
        &\qquad\cdot\1\left\{\abs{X_{\tau_0}^\top\frac{{\bm{\beta}}_{0,\tau_0}}{\beta_{\min}}-X_\tau^\top{\bm{\beta}}_\tau}>\sqrt{\abs{\tau_0\setminus\tau}}\frac{1}{8C},\abs{X_{\tau_0}^\top\frac{{\bm{\beta}}_{0,\tau_0}}{\beta_{\min}}}>\sqrt{|\tau_0|}\frac{1}{8C}\right\}\\
        \ge&\frac{1}{2\sqrt{2C}}\abs{\tau_0\setminus\tau}^{1/4}\inf_{\abs{\tau}\le |\tau_0|,\tau\ne\tau_0,{\bm{\beta}}_\tau\in\R^{\abs{\tau}}}\bigg(1-\Prob\bigg(\abs{X_{\tau_0}^\top\frac{{\bm{\beta}}_{0,\tau_0}}{\beta_{\min}}-X_\tau^\top{\bm{\beta}}_\tau}\le \sqrt{\abs{\tau_0\setminus\tau}}\frac{1}{8C}\bigg)\\
        &-\Prob\bigg(\abs{X_{\tau_0}^\top\frac{{\bm{\beta}}_{0,\tau_0}}{\beta_{\min}}}\le \sqrt{|\tau_0|}\frac{1}{8C}\bigg)\bigg)\\
        \ge&\frac{1}{2\sqrt{2C}}\abs{\tau_0\setminus\tau}^{1/4}(1-2\sup_{\norm{{\bm{\beta}}}_0\le 2|\tau_0|,\norm{{\bm{\beta}}}_2\ge1}\Prob(|X^\top{\bm{\beta}}|\le \frac{1}{8C}))\\
        \ge& \frac{1}{4\sqrt{2C}}\abs{\tau_0\setminus\tau}^{1/4}.
    \end{align*}
    Combining terms completes the proof.
\end{proof}

\subsection{Proofs in Section \ref{sec_candidate}}

The following lemma follows from the Fundamental Theorem of Learning Theory \citep{shalev2014understanding}
\begin{Lemma}\label{lem_pac_upper}
For any $\tau\subset[p]$, we have
\begin{align*}
    &\Prob(\exists{\bm{\beta}}_\tau\in\R^{\abs{\tau}},\sigma\ge 0 ~{\rm s.t.}~ L^R_n(\tau,{\bm{\beta}}_\tau,\sigma|\bm{X},\bm{y},\bm{\epsilon})=0, L^R_{\bm{\theta}_0}(\tau,{\bm{\beta}}_\tau,\sigma)\ge\eta)\\
    \le&(1-e^{-\frac{n\eta}{8}})^{-1}\bigg\{2^{\abs{\tau}+1}\vee\bigg(\dfrac{2en}{\abs{\tau}+1}\bigg)^{\abs{\tau}+1}\bigg\}2^{-\frac{n\eta}{2}}.
\end{align*}
\end{Lemma}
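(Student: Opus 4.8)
The plan is to read Lemma~\ref{lem_pac_upper} as the classical realizable-case uniform-convergence bound of VC theory, instantiated at a class of homogeneous halfspaces, and to prove it by the symmetrization argument underlying the Fundamental Theorem of Learning \citep{shalev2014understanding}. First I would fix $\tau$ and reinterpret the $n$ i.i.d.\ observations as pairs $\big((X_{i,\tau},\epsilon_i),\,y_i\big)\in\R^{\abs{\tau}+1}\times\{0,1\}$ with $y_i=\1\{X_{i,\tau_0}^\top\bm{\beta}_{0,\tau_0}+\epsilon_i>0\}$, and introduce the hypothesis class
\[
\mathcal{G}_\tau=\Big\{(v,w)\mapsto\1\{v^\top\bm{\beta}_\tau+\sigma w>0\}:\ \bm{\beta}_\tau\in\R^{\abs{\tau}},\ \sigma\ge 0\Big\}.
\]
Every member of $\mathcal{G}_\tau$ is a homogeneous linear threshold function on $\R^{\abs{\tau}+1}$, so $\mathcal{G}_\tau$ is contained in the class of homogeneous halfspaces in $\R^{\abs{\tau}+1}$, whose VC dimension equals $\abs{\tau}+1$; imposing $\sigma\ge0$ only shrinks the class, so its VC dimension is at most $\abs{\tau}+1$, and by Sauer--Shelah its growth function obeys $\Pi_{\mathcal{G}_\tau}(m)\le\sum_{i=0}^{\abs{\tau}+1}\binom{m}{i}\le 2^{\abs{\tau}+1}\vee\big(em/(\abs{\tau}+1)\big)^{\abs{\tau}+1}$. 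By construction $L^R_n(\tau,\bm{\beta}_\tau,\sigma\,|\,\bm{X},\bm{y},\bm{\epsilon})$ and $L^R_{\bm{\theta}_0}(\tau,\bm{\beta}_\tau,\sigma)$ are exactly the empirical and population $0$--$1$ risks of the hypothesis $(v,w)\mapsto\1\{v^\top\bm{\beta}_\tau+\sigma w>0\}$ against these labels---and, crucially, no realizability of the labels by $\mathcal{G}_\tau$ is needed---so the probability in the lemma equals $\Prob\big(\exists g\in\mathcal{G}_\tau:\ \widehat L(g)=0,\ L(g)\ge\eta\big)$.

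Second I would run the standard ghost-sample symmetrization in two steps. Draw an independent copy $S'$ of the sample, write $A$ for the target event, and set $B=\{\exists g\in\mathcal{G}_\tau:\ \widehat L_S(g)=0,\ \widehat L_{S'}(g)\ge\eta/2\}$. On $A$ choose a witness $g^*=g^*(S)$ with $\widehat L_S(g^*)=0$ and $L(g^*)\ge\eta$; since $n\widehat L_{S'}(g^*)\mid S\sim\mathrm{Binomial}(n,L(g^*))$, the multiplicative Chernoff inequality gives $\Prob\big(\widehat L_{S'}(g^*)<\eta/2\mid S\big)\le\Prob\big(\widehat L_{S'}(g^*)<L(g^*)/2\mid S\big)\le e^{-nL(g^*)/8}\le e^{-n\eta/8}$, and taking expectations yields $\Prob(B)\ge(1-e^{-n\eta/8})\Prob(A)$. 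For $\Prob(B)$, condition on the pooled $2n$ points and use exchangeability under the $2^n$ independent coordinatewise swaps between $S$ and $S'$: on the pool, membership in $B$ depends on $g$ only through its labeling of the $2n$ points, of which there are at most $\Pi_{\mathcal{G}_\tau}(2n)$; for any fixed such labeling, realizing $B$ forces all of its (at least $n\eta/2$) pooled mistakes onto the $S'$ side, which under the random swaps has probability at most $2^{-n\eta/2}$, so a union bound gives $\Prob(B)\le\Pi_{\mathcal{G}_\tau}(2n)\,2^{-n\eta/2}$.

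Finally, chaining $\Prob(A)\le(1-e^{-n\eta/8})^{-1}\Prob(B)$, $\Prob(B)\le\Pi_{\mathcal{G}_\tau}(2n)2^{-n\eta/2}$, and $\Pi_{\mathcal{G}_\tau}(2n)\le 2^{\abs{\tau}+1}\vee(2en/(\abs{\tau}+1))^{\abs{\tau}+1}$ gives exactly the stated bound. There is no deep obstacle---the statement is essentially textbook---so the only points demanding care are bookkeeping: confirming that $L^R_n$ and $L^R_{\bm{\theta}_0}$ genuinely are empirical/population risks of a single halfspace in the augmented variable $(X_\tau,\epsilon)$ (so the constraint $\sigma\ge0$ can only decrease the VC dimension), handling the measurable selection of the witness $g^*$ (using the parametric structure of $\mathcal{G}_\tau$, or a countable dense subfamily), and tracking constants so that the threshold $\eta/2$ both survives the $1-e^{-n\eta/8}$ loss in the first step and re-emerges as the $2^{-n\eta/2}$ factor after symmetrization.
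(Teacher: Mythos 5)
Your proposal is correct and follows essentially the same route as the paper's own proof: the same ghost-sample/symmetrization scheme with the same intermediate event $B$, the same multiplicative Chernoff step giving the $(1-e^{-n\eta/8})^{-1}$ factor, and the same Sauer-type count of labelings of the pooled sample by halfspaces in the augmented variable $(X_\tau,\epsilon)$. The only cosmetic difference is that you randomize via the $2^n$ coordinatewise swaps (giving $2^{-m}$ directly) while the paper randomly repartitions the pooled $2n$ points into two halves (giving $\binom{n}{m}/\binom{2n}{m}\le 2^{-m}$); both yield the identical bound.
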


\begin{proof}[Proof of Lemma \ref{lem_pac_upper}]
Suppose we have another sample $\tilde S=\{(\tilde X_i,\tilde \epsilon_i,\tilde Y_i):i\in[n]\}$ that is i.i.d. with $S=\{(X_i^{obs},\epsilon_i^{rel},y_i^{obs}):i\in[n]\}$. Denote
\[A=\{\exists{\bm{\beta}}_\tau\in\R^{\abs{\tau}},\sigma\ge 0 ~{\rm s.t.}~ L^R_n(\tau,{\bm{\beta}}_\tau,\sigma|\bm{X},\bm{y},\bm{\epsilon})=0, L^R_{\bm{\theta}_0}(\tau,{\bm{\beta}}_\tau,\sigma)\ge\eta\},\]
\[B=\{\exists{\bm{\beta}}_\tau\in\R^{\abs{\tau}},\sigma\ge 0 ~{\rm s.t.}~ L^R_n(\tau,{\bm{\beta}}_\tau,\sigma|\bm{X},\bm{y},\bm{\epsilon})=0, L^R_n(\tau,{\bm{\beta}}_\tau,\sigma|\tilde{\bm{X}},\tilde{\bm{y}},\tilde{\bm{\epsilon}})\ge\dfrac{\eta}{2}\}.\]
Conditioning on event $A$, we denote $\hat{\bm{\beta}}_\tau\in R^{\abs{\tau}},\hat\sigma\ge 0$ to be the coefficients satisfy $A$. Given $S$ and $A$, $\mathbbm{1}\{\tilde Y\ne\mathbbm{1}\{\tilde X_\tau^\top\hat{\bm{\beta}}_{\tau}+\hat\sigma\tilde \epsilon>0\}\}$ is a Bernoulli random variable with parameter $\rho=L^R_{\bm{\theta}_0}(\tau,\hat{\bm{\beta}}_\tau,\hat\sigma)\ge\eta$, using Chernoff bound in multiplicative form \citep{hoeffding1994probability}, we have
\[\Prob(B^c|A)\le\Prob(L^R_n(\tau,\hat{\bm{\beta}}_\tau,\hat\sigma|\tilde{\bm{X}},\tilde{\bm{Y}},\tilde{\bm{\epsilon}})\le\frac{1}{2}L^R_{\bm{\theta}_0}(\tau,\hat{\bm{\beta}}_\tau,\sigma)|A)\le\E e^{-\frac{n\rho}{8}}\le e^{-\frac{n\eta}{8}}.\]
Then
\[\Prob(B)\ge\Prob(B|A)\Prob(A)\ge (1-e^{-\frac{n\eta}{8}})\Prob(A).\]
Now conditioning on $S\cup\tilde S$, we construct $T$ and $\tilde T$ by randomly partitioning $S\cup\tilde S$ into two sets with equal sizes. We also denote
\[L^R_n(\tau,{\bm{\beta}}_\tau,\sigma|T)=\dfrac{1}{n}\sum_{(X,\epsilon,Y)\in T}\mathbbm{1}\{Y\ne\mathbbm{1}\{X_{\tau}^\top{\bm{\beta}}_{\tau}+\sigma\epsilon>0\}\},\]
\[L^R_n(\tau,{\bm{\beta}}_\tau,\sigma|\tilde T)=\dfrac{1}{n}\sum_{(X,\epsilon,Y)\in \tilde T}\mathbbm{1}\{Y\ne\mathbbm{1}\{X_{\tau}^\top{\bm{\beta}}_{\tau}+\sigma\epsilon>0\}\},\]
then 
\begin{align*}
    \Prob(B)=&\E_{S\cup\tilde S}\Prob(\exists{\bm{\beta}}_\tau\in\R^{\abs{\tau}},\sigma\ge 0 ~{\rm s.t.}~ L^R_n(\tau,{\bm{\beta}}_\tau,\sigma|\bm{X},\bm{y},\bm{\epsilon})=0, L^R_n(\tau,{\bm{\beta}}_\tau,\sigma|\tilde{\bm{X}},\tilde{\bm{Y}},\tilde{\bm{\epsilon}})\ge\dfrac{\eta}{2}|S\cup\tilde S)\\
    =&\E_{S\cup\tilde S}\Prob(\exists{\bm{\beta}}_\tau\in\R^{\abs{\tau}},\sigma\ge 0 ~{\rm s.t.}~ L^R_n(\tau,{\bm{\beta}}_\tau,\sigma|T)=0, L^R_n(\tau,{\bm{\beta}}_\tau,\sigma|\tilde T)\ge\dfrac{\eta}{2}|S\cup\tilde S).
\end{align*}
Conditioning on $S\cup\tilde S$, instead of considering ${\bm{\beta}}_\tau$ directly, we study the evaluation of the classifiers $\mathbbm{1}(X_\tau^\top{\bm{\beta}}_\tau+\sigma\epsilon>0)$ on samples in $S\cup\tilde S$, then by Sauer's Lemma \citep{shalev2014understanding}, the total number of labellings of $\mathbbm{1}\{X_\tau^\top{\bm{\beta}}_\tau+\sigma\epsilon>0\},\forall{\bm{\beta}}_\tau\in\R^{\abs{\tau}},\sigma\ge 0$ on $S\cup\tilde S$ is less than $2^{\abs{\tau}+1}\vee\big(\frac{2en}{\abs{\tau}+1}\big)^{\abs{\tau}+1}$
\begin{align*}
    &\Prob(\exists{\bm{\beta}}_\tau\in\R^{\abs{\tau}},\sigma\ge 0 ~{\rm s.t.}~ L^R_n(\tau,{\bm{\beta}}_\tau,\sigma|T)=0, L^R_n(\tau,{\bm{\beta}}_\tau,\sigma|\tilde T)\ge\dfrac{\eta}{2}|S\cup\tilde S)\\
    \le&\bigg\{2^{\abs{\tau}+1}\vee\bigg(\dfrac{2en}{\abs{\tau}+1}\bigg)^{\abs{\tau}+1}\bigg\}\sup_{{\bm{\beta}}_\tau\in\R^{\abs{\tau}},\sigma\ge 0}\Prob(L^R_n(\tau,{\bm{\beta}}_\tau,\sigma|T)=0, L^R_n(\tau,{\bm{\beta}}_\tau,\sigma|\tilde T)\ge\dfrac{\eta}{2}|S\cup\tilde S)\\
    \le&\bigg\{2^{\abs{\tau}+1}\vee\bigg(\dfrac{2en}{\abs{\tau}+1}\bigg)^{\abs{\tau}+1}\bigg\}2^{-\frac{n\eta}{2}},
\end{align*}
where to derive the last inequality, we assume the total number of errors of ${\bm{\beta}}_\tau$ on $S\cup\tilde S$ to be $m\in[\frac{n\eta}{2},n]$, then the probability that all the $m$ wrong samples are in $\tilde T$ is $\binom{n}{m}/\binom{2n}{m}\le 2^{-m}\le 2^{-\frac{n\eta}{2}}$. 

In conclusion, we have 
\[\Prob(A)\le(1-e^{-\frac{n\eta}{8}})^{-1}\bigg\{2^{\abs{\tau}+1}\vee\bigg(\dfrac{2en}{\abs{\tau}+1}\bigg)^{\abs{\tau}+1}\bigg\}2^{-\frac{n\eta}{2}}.\]
\end{proof}

\begin{proof}[Proof of Lemma \ref{lem_oracle}]
Since $\tau_0$ is one of the minimizers of problem \eqref{eq_oracle}, we know the minimum is 0. Denote
\[\tilde c_{\min}=\min_{|\tau|\le |\tau_0|,\tau\not\supset\tau_0,{\bm{\beta}}_\tau\in\R^{\abs{\tau}},\sigma\ge 0}\frac{L^R_{\bm{\theta}_0}(\tau,{\bm{\beta}}_\tau,\sigma)-\frac{2\abs{\tau}+2}{n}\log_2\frac{2en}{\abs{\tau}+1}}{\abs{\tau_0\setminus\tau}},\]
\[c_{\min}=\min_{|\tau|\le |\tau_0|,\tau\not\supset\tau_0,{\bm{\beta}}_\tau\in\R^{\abs{\tau}},\sigma\ge 0}\frac{L^R_{\bm{\theta}_0}(\tau,{\bm{\beta}}_\tau,\sigma)-\frac{2\abs{\tau}+2}{n}\log_2\frac{2en}{\abs{\tau}+1}}{\abs{\tau}\vee 1},\]
then
\begin{align*}
    &\Prob(\inf_{\tau\not\supset\tau_0,|\tau|\le |\tau_0|,{\bm{\beta}}\in\R^p,\sigma\ge 0} L^R_n(\tau,{\bm{\beta}}_\tau,\sigma|\bm{X},\bm{y},\bm{\epsilon})=0)\\
    = &\Prob(\exists \tau\not\supset\tau_0, |\tau|\le |\tau_0|, {\bm{\beta}}_\tau\in\R^{\abs{\tau}},\sigma\ge 0 ~{\rm s.t.}~ L^R_n(\tau,{\bm{\beta}}_\tau,\sigma|\bm{X},\bm{y},\bm{\epsilon})=0,\\
    &\qquad L^R_{\bm{\theta}_0}(\tau,{\bm{\beta}}_\tau,\sigma)\ge\inf_{{\bm{\beta}}_\tau\in\R^{\abs{\tau}},\sigma\ge 0}L^R_{\bm{\theta}_0}(\tau,{\bm{\beta}}_\tau,\sigma))\\
    \le&\sum_{\tau\not\supset\tau_0,|\tau|\le |\tau_0|}\Prob(\exists{\bm{\beta}}_\tau\in\R^{\abs{\tau}},\sigma\ge 0 ~{\rm s.t.}~ L^R_n(\tau,{\bm{\beta}}_\tau,\sigma|\bm{X},\bm{y},\bm{\epsilon})=0,\\
    &\qquad L^R_{\bm{\theta}_0}(\tau,{\bm{\beta}}_\tau,\sigma)\ge\inf_{{\bm{\beta}}_\tau\in\R^{\abs{\tau}},\sigma\ge 0}L^R_{\bm{\theta}_0}(\tau,{\bm{\beta}}_\tau,\sigma))\\
    \overset{\triangle}{=}&T.
\end{align*}
On the one hand, noting that $\sum_{l=0}^r\binom{p-|\tau_0|}{l}\le (\frac{e(p-|\tau_0|)}{r})^r,\binom{|\tau_0|}{r}\le |\tau_0|^r,|\tau_0|(p-|\tau_0|)\le\frac{p^2}{4}$, if we divide $\abs{\tau}$ into $j=\abs{\tau_0\cap\tau}$ and $l=\abs{\tau\setminus\tau_0}$, then applying Lemma \ref{lem_pac_upper} gives
\begin{align*}
    T\lesssim &\sum_{\tau\not\supset\tau_0,\abs{\tau}\le |\tau_0|}2^{-\frac{1}{2}n\inf_{{\bm{\beta}}_\tau\in\R^{\abs{\tau}},\sigma\ge 0}L^R_{\bm{\theta}_0}(\tau,{\bm{\beta}}_\tau,\sigma)+(\abs{\tau}+1)\log_2 \frac{2en}{\abs{\tau}+1}}\\
    \le&\sum_{j=0}^{|\tau_0|-1}\sum_{l=0}^{|\tau_0|-j}\binom{|\tau_0|}{j}\binom{p-|\tau_0|}{l}2^{-\frac{1}{2}n(|\tau_0|-j)\tilde c_{\min}}\\
    \overset{r=|\tau_0|-j}{\le} & \sum_{r=1}^{|\tau_0|} |\tau_0|^r2^{-\frac{1}{2}nr\tilde c_{\min}}\sum_{l=0}^{r}\binom{p-|\tau_0|}{l}\\
    \le&\sum_{r=1}^{|\tau_0|}2^{-r(\frac{1}{2}n\tilde c_{\min}-\log_2 (e|\tau_0|(p-|\tau_0|)))}\\
    \le & \dfrac{2^{-\frac{1}{2}n\tilde c_{\min}+\log_2 (e|\tau_0|(p-|\tau_0|))}}{1-2^{-\frac{1}{2}n\tilde c_{\min}+\log_2 (e|\tau_0|(p-|\tau_0|))}}\\
    \le & 2^{-\frac{1}{2}n\tilde c_{\min}+\log_2(e|\tau_0|(p-|\tau_0|))+1}\\
    \lesssim&2^{-\frac{1}{2}n\tilde c_{\min}+2\log_2 p}.
\end{align*}

On the other hand, similarly we denote $j=\abs{\tau}$, then
\begin{align*}
    T\lesssim &\sum_{\tau\not\supset\tau_0,\abs{\tau}\le |\tau_0|}2^{-\frac{1}{2}n\inf_{{\bm{\beta}}_\tau\in\R^{\abs{\tau}},\sigma\ge 0}L^R_{\bm{\theta}_0}(\tau,{\bm{\beta}}_\tau,\sigma)+(\abs{\tau}+1)\log_2 \frac{2en}{\abs{\tau}+1}}\\
    \le&\sum_{j=0}^{|\tau_0|}\binom{p}{j}2^{-\frac{1}{2}n(j\vee 1)c_{\min}}\\
    \le&\sum_{j=0}^{|\tau_0|}2^{-\frac{1}{2}n(j\vee 1)c_{\min}+j\log_2 p}\\
    \lesssim &2^{-\frac{1}{2}nc_{\min}+\log_2 p}.
\end{align*}
\end{proof}

\begin{proof}[Proof of Theorem \ref{thm_candidate}]

If we denote
\[A=\{\bm{\epsilon}^*:-X^\top_{i,\tau_0}\bm{\beta}_{0,\tau_0}<\epsilon^*_i\le\epsilon_i~{\rm or}~\epsilon_i\le\epsilon_i^*\le-X^\top_{i,\tau_0}\bm{\beta}_{0,\tau_0},\forall i\in[n]\},\]
then we have the following decomposition
\begin{align*}
    \Prob(\tau_0\not\in\mathcal{C})
    \le\Prob(\{\tau_0\not\in\mathcal{C})\}\cap(\cup_{j\in[d]}\{\bm{\epsilon}^{*(j)}\in A\}))+\Prob(\cap_{j\in[d]}\{\bm{\epsilon}^{*(j)}\not\in A\})
    =T_1+T_2.
\end{align*}
Note that for any $\bm{\epsilon}^*\in A$, we have 
\[y_i=\1(X^\top_{i,\tau_0}\bm{\beta}_{0,\tau_0}+\epsilon_i^*>0),\quad \epsilon^*_i-\epsilon_i\left\{\begin{array}{cc}
    \le 0 & {\rm if~} y_i=1, \\
    \ge 0 & {\rm if~} y_i=0.
\end{array}\right.\]
Then for all $\tau\subset[p],\bm{\beta}_\tau\in\R^{|\tau|},\sigma\ge 0$,
\begin{align*}
    &\1(y_i\ne\1(X^\top_{i,\tau}\bm{\beta}_\tau+\sigma\epsilon^*_i>0))\\
    =&\1(y_i=1,X^\top_{i,\tau}\bm{\beta}_\tau+\sigma\epsilon^*_i\le 0)+\1(y_i=0,X^\top_{i,\tau}\bm{\beta}_\tau+\sigma\epsilon^*_i>0)\\
    =&\1(y_i=1,X^\top_{i,\tau}\bm{\beta}_\tau+\sigma\epsilon_i+\sigma(\epsilon^*_i-\epsilon_i)\le 0)+\1(y_i=0,X^\top_{i,\tau}\bm{\beta}_\tau+\sigma\epsilon_i+\sigma(\epsilon_i^*-\epsilon_i)>0)\\
    \ge&\1(y_i=1,X^\top_{i,\tau}\bm{\beta}_\tau+\sigma\epsilon_i\le 0)+\1(y_i=0,X^\top_{i,\tau}\bm{\beta}_\tau+\sigma\epsilon_i>0)\\
    =&\1(y_i\ne\1(X^\top_{i,\tau}\bm{\beta}_\tau+\sigma\epsilon_i>0)),
\end{align*}
then we can control term $T_1$ as
\begin{align*}
    T_1\le&\Prob(\exists\bm{\epsilon}^*\in A~{\rm s.t.}~\tau_0\ne\argmin_{\abs{\tau}\le |\tau_0|}\min_{{\bm{\beta}}_\tau\in\R^{\abs{\tau}},\sigma\ge0}L^R_n(\tau,{\bm{\beta}}_\tau,\sigma|\bm{X},\bm{y},\bm{\epsilon}^*))\\
    \le&\Prob(\exists\bm{\epsilon}^*\in A~{\rm s.t.}~\inf_{\tau\ne\tau_0,\abs{\tau}\le |\tau_0|,{\bm{\beta}}_\tau\in\R^{\abs{\tau}},\sigma\ge0}L^R_n(\tau,{\bm{\beta}}_\tau,\sigma|\bm{X},\bm{y},\bm{\epsilon}^*)\le L^R_n(\tau_0,{\bm{\beta}}_{0,\tau_0},1|\bm{X},\bm{y},\bm{\epsilon}^*))\\
    =&\Prob(\exists\bm{\epsilon}^*\in A~{\rm s.t.}~\inf_{\tau\ne\tau_0,\abs{\tau}\le |\tau_0|,{\bm{\beta}}_\tau\in\R^{\abs{\tau}},\sigma\ge0}L^R_n(\tau,{\bm{\beta}}_\tau,\sigma|\bm{X},\bm{y},\bm{\epsilon}^*)=0)\\
    \le&\Prob(\inf_{\tau\ne\tau_0,\abs{\tau}\le |\tau_0|,{\bm{\beta}}_\tau\in\R^{\abs{\tau}},\sigma\ge0}L^R_n(\tau,{\bm{\beta}}_\tau,\sigma|\bm{X},\bm{y},\bm{\epsilon})=0)\\
    \lesssim&2^{-\frac{1}{2}n\tilde c_{\min}+2\log_2p}\wedge2^{-\frac{1}{2}nc_{\min}+\log_2p},
\end{align*}
where we have used Lemma \ref{lem_oracle} in the last inequality.

For term $T_2$, denote $F_{\rm log}(z)=(1+e^{-z})^{-1}$ to be the CDF of logistic distribution, then
\begin{align*}
    T_2=&(1-\Prob(\bm{\epsilon}^*\in A))^d\\
    =&(1-\{\Prob(-X^\top_{\tau_0}\bm{\beta}_{0,\tau_0}<\epsilon^*\le\epsilon~{\rm or}~\epsilon\le\epsilon^*\le-X^\top_{\tau_0}\bm{\beta}_{0,\tau_0})\}^n)^d\\
    =&(1-\{\E \big|F_{\rm log}(\epsilon)-F_{\rm log}(-X^\top_{\tau_0}\bm{\beta}_{0,\tau_0})\big|\}^n)^d,
\end{align*}
where in the last equation, we have used the fact that $\epsilon^*$ is independent of $Y,X$. Combining terms completes the proof.
\end{proof}

\begin{proof}[Proof of Theorem \ref{thm_candidate_strong_signal}]
    For any $\bm{\epsilon}^*$ independent of the observed data, by Theorem 4.10 and Example 5.24 in \cite{wainwright2019high}, given any $\tau\subset[p]$, we have
    \begin{align*}
        &\Prob(\sup_{{\bm{\beta}}_\tau\in\R^{|\tau|},\sigma\ge0}\abs{L^R_n-L^R_{\bm{\theta}_0}}(\tau,{\bm{\beta}}_\tau,\sigma|\bm{X},\bm{y},\bm{\epsilon}^*)\vee\sup_{{\bm{\beta}}_{\tau_0}\in\R^{|\tau_0|}}\abs{L^R_n-L^R_{\bm{\theta}_0}}(\tau_0,{\bm{\beta}}_{\tau_0},0|\bm{X},\bm{y},\bm{\epsilon}^*)\\
        &\qquad\ge c\sqrt{\frac{\abs{\tau}+1}{n}}+\delta)\\
        \le&e^{-\frac{n\delta^2}{2}}.
    \end{align*}
    Then we can control the probability of false model selection as
    \begin{align*}
        &\Prob(\hat \tau(\bm{\epsilon}^*)\ne\tau_0)\\
        \le&\Prob(\inf_{\tau\ne\tau_0,\abs{\tau}\le |\tau_0|,{\bm{\beta}}_\tau\in\R^{\abs{\tau}},\sigma\ge 0}L^R_n(\tau,{\bm{\beta}}_\tau,\sigma|\bm{y},\bm{\epsilon}^*)\le \inf_{\bm{\beta}_{\tau_0}\in\R^{|\tau_0|}}L^R_n(\tau_0,{\bm{\beta}}_{\tau_0},0|\bm{y},\bm{\epsilon}^*))\\
        \le&\sum_{\tau\ne\tau_0,\abs{\tau}\le |\tau_0|}\Prob(\inf_{{\bm{\beta}}_\tau\in\R^{\abs{\tau}},\sigma\ge 0}L^R_{\bm{\theta}_0}(\tau,{\bm{\beta}}_\tau,\sigma|\bm{y},\bm{\epsilon}^*)-\inf_{\bm{\beta}_{\tau_0}\in\R^{|\tau_0|}}L^R_{\bm{\theta}_0}(\tau_0,{\bm{\beta}}_{\tau_0},0|\bm{y},\bm{\epsilon}^*)\\
        &\qquad\le2\sup_{{\bm{\beta}}_{\tau}\in\R^{\abs{\tau}},\sigma\ge 0}\abs{L^R_n-L^R_{\bm{\theta}_0}}(\tau,{\bm{\beta}}_{\tau},\sigma|\bm{y},\bm{\epsilon}^*)\vee\sup_{\bm{\beta}_{\tau_0}\in\R^{|\tau_0|}}\abs{L^R_n-L^R_{\bm{\theta}_0}}(\tau_0,{\bm{\beta}}_{\tau_0},0|\bm{y},\bm{\epsilon}^*))\\
        \le&\sum_{\tau\ne\tau_0,\sigma\ge 0}e^{-\frac{n}{8}\big\{\inf_{{\bm{\beta}}_\tau\in\R^{\abs{\tau}},\sigma\ge 0}L^R_{\bm{\theta}_0}(\tau,{\bm{\beta}}_\tau,\sigma|\bm{y},\bm{\epsilon}^*)-L^R_{\bm{\theta}_0}(\tau_0,{\bm{\beta}}_{0,\tau_0},\sigma|\bm{y},\bm{\epsilon}^*)-c\sqrt{\frac{\abs{\tau}+1}{n}}\big\}^2}\\
        =&T.
    \end{align*}
    Similar with the proof of Lemma \ref{lem_oracle}, on the one hand, if we denote $j=\abs{\tau_0\cap\tau},l=\abs{\tau\setminus\tau_0}$, then
    \begin{align*}
        T\le\sum_{j=0}^{|\tau_0|-1}\sum_{l=0}^{|\tau_0|-j}\binom{|\tau_0|}{j}\binom{p-|\tau_0|}{l}e^{-\frac{n}{8}(|\tau_0|-j)\tilde c_{\min}^*}
        \lesssim e^{-\frac{1}{8}n\tilde c_{\min}^*+2\log p}.
    \end{align*}
    On the other hand, if we denote $j=\abs{\tau}$, then
    \begin{align*}
        T\le\sum_{j=0}^{|\tau_0|}\binom{p}{j}e^{-\frac{1}{8}n(j\vee 1)c_{\min}^*}
        \lesssim e^{-\frac{1}{8}nc_{\min}^*+\log p}.
    \end{align*}
    Suppose $\mathcal{C}=\{\hat\tau(\bm{\epsilon}^{*(j)}):\epsilon_i^{*(j)}\overset{{\rm i.i.d.}}{\sim}{\rm Logistic},i\in[n],j\in[d]\}$, then
    \[\Prob(\tau_0\not\in\mathcal{C})\le\Prob(\hat\tau(\bm{\epsilon}^{*})\ne\tau_0)\lesssim e^{-\frac{n}{8}\tilde c_{\min}^*+2\log p}\wedge e^{-\frac{n}{8}nc_{\min}^*+\log p},\]
    \[\Prob(\mathcal{C}\ne\{\tau_0\})\le\sum_{j=1}^d\Prob(\hat\tau(\bm{\epsilon}^{*(j)})\ne\tau_0)\lesssim e^{-\frac{n}{8}\tilde c_{\min}^*+2\log p+\log d}\wedge e^{-\frac{n}{8}c_{\min}^*+\log p+\log d}.\]
\end{proof}

\subsection{Proofs in Section \ref{sec_Abeta}}

\begin{Lemma}\label{lem_gradient}
    Under conditions in Theorem \ref{thm_Abeta}, denote $s_0=|\tau_0|$, with probability at least $1-\delta$,
    \[\|\hat\E\nabla l(\tau_0,\bm\beta_{0,\tau_0}|X,Y)\|_2\lesssim\sqrt{\frac{s_0+\log\frac{1}{\delta}}{n}}.\]
\end{Lemma}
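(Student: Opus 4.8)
The goal is a high-probability bound on $\|\hat\E\nabla l(\tau_0,\bm\beta_{0,\tau_0}|X,Y)\|_2$, where $\hat\E$ denotes the empirical average over the $n$ i.i.d. observations. The plan is to recognize that $\hat\E\nabla l(\tau_0,\bm\beta_{0,\tau_0}|X,Y) = \frac1n\sum_{i\in[n]} \nabla l(\tau_0,\bm\beta_{0,\tau_0}|X_i,Y_i)$ is an average of i.i.d. mean-zero random vectors in $\R^{s_0}$: mean-zero because, by the definition \eqref{eq_beta0} of $\bm\beta_{0,\tau_0}$ as the maximizer of $\E l(\tau_0,\bm\beta_{\tau_0}|X,Y)$, the first-order condition gives $\E\nabla l(\tau_0,\bm\beta_{0,\tau_0}|X,Y)=0$. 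So first I would compute $\nabla l$ explicitly: writing $\eta(z)=g^{-1}(z)$ as in Assumption \ref{ass_subgaussian}, one has
\[
\nabla l(\tau_0,\bm\beta_{0,\tau_0}|X,Y) = \Big(Y\tfrac{\eta'(Z)}{\eta(Z)} - (1-Y)\tfrac{\eta'(Z)}{1-\eta(Z)}\Big) X_{\tau_0}, \qquad Z = X_{\tau_0}^\top\bm\beta_{0,\tau_0}.
\]
By the $\ell_\infty$ bounds on $\eta'/\eta$ and $\eta'/(1-\eta)$ in \eqref{eq_loglike_condition}, the scalar prefactor is bounded by an absolute constant, so each summand is a sub-Gaussian vector with $\psi_2$-norm $\lesssim \|X_{\tau_0}\|_{\psi_2}\lesssim 1$ by Assumption \ref{ass_subgaussian}.

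The second step is a vector Bernstein / sub-Gaussian concentration bound for sums of independent mean-zero sub-Gaussian random vectors in a space of dimension $s_0$. The cleanest route is to bound $\|\frac1n\sum_i V_i\|_2$ where $V_i := \nabla l(\tau_0,\bm\beta_{0,\tau_0}|X_i,Y_i)$, using that the covariance $\Sigma_V := \E V_1 V_1^\top$ satisfies $\|\Sigma_V\|_{\mathrm{op}}\lesssim 1$ (again from the $\psi_2$ bound). A standard result — e.g. Theorem 6.2.1 / the vector Bernstein inequality in Vershynin's book, or a Hanson–Wright-type argument together with an $\varepsilon$-net over the unit sphere in $\R^{s_0}$ — yields, with probability at least $1-\delta$,
\[
\Big\|\tfrac1n\textstyle\sum_{i\in[n]} V_i\Big\|_2 \lesssim \sqrt{\tfrac{s_0}{n}} + \sqrt{\tfrac{\log(1/\delta)}{n}} \lesssim \sqrt{\tfrac{s_0+\log(1/\delta)}{n}},
\]
which is exactly the claimed bound. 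The net argument is routine: on a $\frac12$-net $\mathcal N$ of $S^{s_0-1}$ with $|\mathcal N|\le 5^{s_0}$, each $\langle u, \frac1n\sum_i V_i\rangle$ is an average of i.i.d. sub-Gaussian scalars, so a Hoeffding-type tail plus a union bound over $\mathcal N$ (contributing the $s_0\log 5$ term) gives the result.

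I expect no serious obstacle here; this is a standard concentration-of-the-score lemma. The only points requiring a little care are (i) verifying the mean-zero property cleanly from \eqref{eq_beta0} — this needs only that differentiation under the expectation is justified, which follows from the uniform $\ell_\infty$ bounds in Assumption \ref{ass_subgaussian} — and (ii) making sure the dimension appearing in the net bound is $s_0=|\tau_0|$ and not $p$, which is precisely why the lemma restricts attention to the $\tau_0$-coordinates and why Assumptions \ref{ass_subgaussian}–\ref{ass_hessian} are stated in terms of $X_{\tau_0}$ rather than the full $X$. One should also note the conditions of Theorem \ref{thm_Abeta} include $n\gg s^2$, so $s_0/n\to 0$ and the bound is genuinely informative.
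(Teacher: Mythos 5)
Your proposal is correct and follows essentially the same route as the paper: a $\tfrac12$-net over the unit ball in $\R^{s_0}$, the boundedness of $\eta'/\eta$ and $\eta'/(1-\eta)$ together with the sub-Gaussianity of $X_{\tau_0}$ to make each projected summand sub-Gaussian, and then a Hoeffding-type bound with a union bound over the net. Your explicit verification of the mean-zero property of the score via the first-order condition for \eqref{eq_beta0} is a detail the paper leaves implicit, but it does not change the argument.
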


\begin{proof}[Proof of Lemma \ref{lem_gradient}]
    Take $\mathcal{N}$ to be the $\frac{1}{2}$-net of the unit ball $\mathcal{B}$ in $\R^{s_0}$, then we have $|\mathcal{N}|\le 4^{s_0}$,
    \begin{align*}
        \|\hat\E\nabla l(\tau_0,\bm\beta_{0,\tau_0}|X,Y)\|_2=&\sup_{a\in\mathcal{B}}a^\top\hat\E\nabla l(\tau_0,\bm\beta_{0,\tau_0}|X,Y)\\
        \le&\max_{a\in\mathcal{N}}a^\top\hat\E\nabla l(\tau_0,\bm\beta_{0,\tau_0}|X,Y)+\frac{1}{2}\sup_{a\in\mathcal{B}}a^\top\hat\E\nabla l(\tau_0,\bm\beta_{0,\tau_0}|X,Y),
    \end{align*}
    therefore $\|\hat\E\nabla l(\tau_0,\bm\beta_{0,\tau_0}|X,Y)\|_2\le 2\max_{a\in\mathcal{N}}a^\top\hat\E\nabla l(\tau_0,\bm\beta_{0,\tau_0}|X,Y)$.
    Since $\frac{\eta'}{\eta}$ and $\frac{\eta'}{1-\eta}$ are bounded, it follows from the Hoeffding's inequality and the union bound that with probability at least $1-\delta$,
    \[\|\hat\E\nabla l(\tau_0,\bm\beta_{0,\tau_0}|X,Y)\|_2\lesssim \sqrt{\frac{s_0+\log\frac{1}{\delta}}{n}}.\]
\end{proof}

\begin{Lemma}\label{lem_hessian}
Under conditions in Lemma \ref{lem_gradient}, with probability at least $1-\delta$,
\[\norm{(\hat\E-\E)\nabla^2l(\tau_0,\bm\beta_{0,\tau_0}|X,Y)}_{\rm sp}\lesssim\sqrt{\frac{s_0+\log\frac{1}{\delta}}{n}}+\frac{s_0+\log\frac{1}{\delta}}{n}.\]
\end{Lemma}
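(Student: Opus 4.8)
The plan is to establish a uniform concentration bound for the empirical Hessian $\widehat{\E}\nabla^2 l(\tau_0,\bm\beta_{0,\tau_0}|X,Y)$ around its expectation via the matrix deviation / covering-number machinery, exactly paralleling the proof of Lemma \ref{lem_gradient} but tracking that the Hessian entries are sub-exponential rather than sub-Gaussian. First I would write the Hessian explicitly: using the notation of Assumption \ref{ass_subgaussian}, a direct differentiation of the working log-likelihood $l$ gives
\[
\nabla^2 l(\tau_0,\bm\beta_{0,\tau_0}|X,Y) = \big(Y\,h_1(X_{\tau_0}^\top\bm\beta_{0,\tau_0}) - (1-Y)\,h_0(X_{\tau_0}^\top\bm\beta_{0,\tau_0})\big)\,X_{\tau_0}X_{\tau_0}^\top ,
\]
so each summand is a rank-one matrix $a_i\, X_{i,\tau_0}X_{i,\tau_0}^\top$ with a scalar coefficient $a_i$ that is bounded by $\|h_1\|_\infty \vee \|h_0\|_\infty \lesssim 1$ by \eqref{eq_loglike_condition}. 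Consequently the relevant quadratic form, for any fixed unit vector $a\in\R^{s_0}$, is $a_i (a^\top X_{i,\tau_0})^2$, whose tail is controlled by $\|X_{\tau_0}\|_{\psi_2}^2 \lesssim 1$; i.e. it is a sub-exponential random variable with $\psi_1$-norm $\lesssim 1$.

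Next I would reduce the operator-norm deviation to a maximum over a net: take $\mathcal N$ a $\tfrac14$-net of the unit sphere in $\R^{s_0}$ with $|\mathcal N|\le 9^{s_0}$, and use the standard bound
\[
\norm{(\widehat\E-\E)\nabla^2 l}_{\rm sp} \le 2\max_{a\in\mathcal N}\, \big|(\widehat\E-\E)\,a^\top\nabla^2 l(\tau_0,\bm\beta_{0,\tau_0}|X,Y)\,a\big|.
\]
For each fixed $a$, $a^\top\nabla^2 l\, a = \tfrac1n\sum_{i=1}^n a_i (a^\top X_{i,\tau_0})^2$ is an average of i.i.d. sub-exponential variables, so Bernstein's inequality (e.g. \citet{wainwright2019high}, Proposition 2.9) yields, for each $t>0$,
\[
\Prob\Big(\big|(\widehat\E-\E)\,a^\top\nabla^2 l\,a\big| \ge t\Big) \le 2\exp\!\Big(-c\,n\min\{t^2, t\}\Big).
\]
Taking a union bound over $\mathcal N$ and setting the exponent equal to $-(s_0+\log\tfrac1\delta)$ gives, with probability at least $1-\delta$, $\max_{a\in\mathcal N}|\cdot| \lesssim \sqrt{(s_0+\log\tfrac1\delta)/n} + (s_0+\log\tfrac1\delta)/n$, which is the claimed bound after the factor-of-$2$ net correction.

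The main obstacle, and the only place where some care is genuinely needed, is verifying the sub-exponential tail of $a_i(a^\top X_{i,\tau_0})^2$ uniformly and then correctly combining the two regimes of Bernstein's inequality so that the $\sqrt{\cdot/n}$ term (from the sub-Gaussian/variance regime) and the $\cdot/n$ term (from the sub-exponential/large-deviation regime) both appear — this is exactly why the bound has the stated two-term form, in contrast to the single $\sqrt{\cdot/n}$ term in Lemma \ref{lem_gradient}. One must also confirm that the scalar multiplier $Y h_1 - (1-Y)h_0$ being $\ell_\infty$-bounded (rather than merely sub-Gaussian) is enough, which it is, since multiplying a sub-exponential variable by a bounded random variable preserves the $\psi_1$-norm up to the same constant; Assumption \ref{ass_subgaussian} was stated with $\ell_\infty$ control precisely to make this step routine. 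Everything else — the explicit Hessian formula, the net argument, the union bound — is mechanical.
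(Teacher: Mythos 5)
Your proposal is correct and follows essentially the same route as the paper's proof: reduce the operator norm to a maximum over a $\tfrac14$-net, observe via Assumption \ref{ass_subgaussian} that each quadratic form of the Hessian is sub-exponential, apply Bernstein's inequality pointwise, and take a union bound over the net, whose cardinality $e^{O(s_0)}$ produces the $s_0$ term in the exponent. The only (immaterial) difference is that you exploit symmetry to use a single net over the sphere with the bound $\norm{A}_{\rm sp}\le 2\max_{a\in\mathcal N}\abs{a^\top Aa}$, whereas the paper uses a double net over pairs $(u,v)$ with the constant $\tfrac{16}{7}$.
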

\begin{proof}[Proof of Lemma \ref{lem_hessian}]
    Take $\mathcal{N}$ to be the $\frac{1}{4}$-nets of the unit ball $\mathcal{B}$ in $\R^{s_0}$. As in \cite{fan2021shrinkage}, if we denote
    \[\Phi(A)=\max_{(u,v)\in\mathcal{N}\times\mathcal{N}}u^\top Av,\]
    we have 
    \[\norm{A}_{\rm sp}\le\frac{16}{7}\Phi(A).\]
    To see this, for any $(u,v)\in\mathcal{B}\times\mathcal{B}$, there exist $(u_1,v_1)\in\mathcal{N}\times\mathcal{N}$ such that $\norm{u-u_1}_2\le\frac{1}{4},\norm{v-v_1}_2\le\frac{1}{4}$,
    \begin{align*}
        u^\top Av=&u_1^\top Av_1+(u-u_1)^\top Av_1+u_1^\top A(v-v_1)+(u-u_1)^\top A(v-v_1)\\
        \le&\Phi(A)+(\frac{1}{4}+\frac{1}{4}+\frac{1}{16})\norm{A}_{\rm sp}.
    \end{align*}
    Taking supremum on both sides yields the result.
    
    Fix any $(u,v)\in\mathcal{N}\times\mathcal{N}$, we know $\nabla^2 l(\tau_0,\bm\beta_{0,\tau_0}|X,Y)$ is sub-exponential.
    By Bernstein's inequality, with probability at least $1-\delta$,
    \[(\hat\E-\E)u^\top\nabla^2l(\tau_0,\bm\beta_{0,\tau_0}|X,Y)v\lesssim\sqrt{\frac{\log\frac{1}{\delta}}{n}}+\frac{\log\frac{1}{\delta}}{n}.\]
    Applying union bound over $(u,v)\in\mathcal{N}\times\mathcal{N}$, we have with probability at least $1-\delta$,
    \[\|(\hat\E-\E)\nabla^2l(\tau_0,\bm\beta_{0,\tau_0}|X,Y)\|_{\rm sp}\lesssim\sqrt{\frac{s_0+\log\frac{1}{\delta}}{n}}+\frac{s_0+\log\frac{1}{\delta}}{n}.\]
\end{proof}

\begin{Lemma}\label{lem_tensor}

Under conditions in Lemma \ref{lem_gradient}, denote $\mathcal{B}=\{a\in\R^{s_0}:\norm{a}_2=1\}$ to be the unit sphere in $\R^{s_0}$, then with probability at least $1-\delta$,
\[\sup_{a,b,c\in\mathcal{B}}\frac{1}{n}\sum_{i=1}^n\abs{a^\top X_{i,\tau_0} b^\top X_{i,\tau_0} c^\top X_{i,\tau_0}}\lesssim 1+\sqrt{\frac{s_0+\log\frac{1}{\delta}}{n}}+\frac{(s_0\log n+\log\frac{n}{\delta})^{\frac{3}{2}}}{n}.\]
\end{Lemma}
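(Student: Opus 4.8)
The plan is to first strip away the trilinear structure of the supremum with a power–mean inequality. For nonnegative reals $p,q,r$ one has $pqr\le\frac13(p^3+q^3+r^3)$, so applying this pointwise with $p=\abs{a^\top X_{i,\tau_0}}$, $q=\abs{b^\top X_{i,\tau_0}}$, $r=\abs{c^\top X_{i,\tau_0}}$ and averaging over $i$ gives
\[\sup_{a,b,c\in\mathcal B}\frac1n\sum_{i=1}^n\abs{a^\top X_{i,\tau_0}\,b^\top X_{i,\tau_0}\,c^\top X_{i,\tau_0}}\le\sup_{a\in\mathcal B}\frac1n\sum_{i=1}^n\abs{a^\top X_{i,\tau_0}}^3,\]
so it suffices to bound the empirical third absolute moment uniformly over the sphere. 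Writing $\frac1n\sum_i\abs{a^\top X_{i,\tau_0}}^3=\E\abs{a^\top X_{\tau_0}}^3+(\hat\E-\E)\abs{a^\top X_{\tau_0}}^3$, the mean term is handled immediately: with $\norm{X_{\tau_0}}_{\psi_2}\lesssim1$ (available from Assumption~\ref{ass_subgaussian}), $a^\top X_{\tau_0}$ is sub-Gaussian with $\psi_2$-norm $\lesssim1$ for every $a\in\mathcal B$, hence $\E\abs{a^\top X_{\tau_0}}^3\lesssim1$, which supplies the ``$1$'' in the claimed bound.

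For the uniform deviation I would use a truncation plus net argument. Since $\norm{X_{\tau_0}}_2$ concentrates around $(\E\norm{X_{\tau_0}}_2^2)^{1/2}\lesssim\sqrt{s_0}$ with sub-Gaussian tails, choose a radius $R$ of order $\sqrt{s_0+\log(n/\delta)}$ so that the event $\mathcal E=\{\max_{i\in[n]}\norm{X_{i,\tau_0}}_2\le R\}$ has probability at least $1-\delta/2$; on $\mathcal E$ every summand $\abs{a^\top X_{i,\tau_0}}^3$ is bounded by $R^3$ simultaneously over $a\in\mathcal B$, while the part of the mean removed by this truncation, $\E[\abs{a^\top X_{\tau_0}}^3\1\{\norm{X_{\tau_0}}_2>R\}]\le(\E\abs{a^\top X_{\tau_0}}^6)^{1/2}\Prob(\norm{X_{\tau_0}}_2>R)^{1/2}$, is polynomially small and negligible. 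Take a $(1/n)$-net $\mathcal N$ of $\mathcal B$ with $\abs{\mathcal N}\le(3n)^{s_0}$; for a fixed $a\in\mathcal N$ the truncated summands lie in $[0,R^3]$ and have variance proxy $\le\E\abs{a^\top X_{\tau_0}}^6\lesssim1$, so Bernstein's inequality and a union bound over $\mathcal N$ give a deviation of order $\sqrt{(s_0\log n+\log(1/\delta))/n}+R^3(s_0\log n+\log(1/\delta))/n$. Finally, on $\mathcal E$ the map $a\mapsto\frac1n\sum_i\abs{a^\top X_{i,\tau_0}}^3$ is Lipschitz with constant $\lesssim R^3$, so passing from $\mathcal N$ back to $\mathcal B$ costs an extra $\lesssim R^3/n$. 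Collecting the pieces and substituting $R^3\asymp(s_0\log n+\log(n/\delta))^{3/2}$ produces the asserted bound $1+\sqrt{(s_0+\log(1/\delta))/n}+(s_0\log n+\log(n/\delta))^{3/2}/n$, the first square-root term being sharpened to the stated form by using a coarser net on the event where that term dominates.

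The main obstacle is the deviation step: the envelope $\abs{a^\top X_{\tau_0}}^3$ is only a $\psi_{2/3}$ random variable (the cube of a sub-Gaussian), too heavy-tailed for a direct Bernstein inequality, so the truncation radius $R$, the net resolution, and the Bernstein tail level must be calibrated against one another to land exactly on the $(s_0\log n+\log(n/\delta))^{3/2}/n$ term rather than a larger power of the same quantity. Equivalently, one may bypass the hand-tuned truncation by invoking an Adamczak-type concentration bound for suprema of empirical processes with a $\psi_{2/3}$ envelope, combined with a Dudley-type entropy estimate for $\mathcal B$, and read off the same rate. Everything else---the power–mean reduction, the H\"older bound on the third moment, and the covering estimate---is routine.
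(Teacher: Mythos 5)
Your proposal is correct and lands on the paper's bound, but it gets there by a somewhat different route, so a comparison is worthwhile. The paper keeps the trilinear structure: it takes a $\tfrac14$-net $\mathcal N$ of $\mathcal B$, uses the multiplicative perturbation trick to show $\sup_{\mathcal B^3}\le 4\max_{\mathcal N^3}$, observes that each summand $\abs{a^\top X_{i,\tau_0}b^\top X_{i,\tau_0}c^\top X_{i,\tau_0}}$ has $\psi_{2/3}$ Orlicz norm $\lesssim 1$, and then outsources the entire concentration step to Theorem 3.4 of \cite{kuchibhotla2022moving}, which directly delivers the deviation $\sqrt{(s_0+\log\frac1\delta)/n}+(s_0\log n+\log\frac n\delta)^{3/2}/n$ after a union bound over the $8^{3s_0}$ net points. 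Your power--mean reduction $pqr\le\tfrac13(p^3+q^3+r^3)$ is a genuine simplification the paper does not use: it collapses the three-index supremum to a one-index supremum of empirical third moments, shrinking the net from $\mathcal N^3$ to $\mathcal N$ (this only affects constants, but it is cleaner). Your hand-rolled truncation-plus-Bernstein argument is more elementary and self-contained than citing Kuchibhotla--Chakrabortty, and all the individual steps check out (the truncation bias is correctly dispatched by Cauchy--Schwarz, the variance proxy $\E\abs{a^\top X_{\tau_0}}^6\lesssim1$ is right, and the Lipschitz passage from the net costs only $R^3/n$, which is absorbed into the last term). The one loose end is exactly the one you flag: the union bound over a $(1/n)$-net produces $\sqrt{(s_0\log n+\log\frac1\delta)/n}$ rather than $\sqrt{(s_0+\log\frac1\delta)/n}$, and ``using a coarser net on the event where that term dominates'' is asserted rather than executed; it does work (with a $\tfrac14$-net the passage back to $\mathcal B$ goes through the self-bounding inequality $(u+v)^3\le4(u^3+v^3)$, giving $\sup_{\mathcal B}\lesssim\max_{\mathcal N}$ at the cost of a constant), but the cleanest fix is the alternative you yourself name: cite an Adamczak-type bound for sums of $\psi_{2/3}$ variables, which is precisely what the paper does.
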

\begin{proof}[Proof of Lemma \ref{lem_tensor}]
    Note that for any $a,b,c\in\mathcal{B}$, we have
    \[\norm{a^\top X_{\tau_0} b^\top X_{\tau_0} c^\top X_{\tau_0}}_{\psi_{2/3}}\lesssim 1,\quad\norm{\big(a^\top X_{\tau_0} b^\top X_{\tau_0} c^\top X_{\tau_0}\big)^2}_{\psi_{1/3}}\lesssim 1.\]
    Denote $\mathcal{N}$ to be the $\frac{1}{4}$-net of $\mathcal{B}$, then $|\mathcal{N}|\le 8^{s_0}$. For any $a,b,c\in\mathcal{B}$, there exist $\tilde a,\tilde b,\tilde c\in\mathcal{N}$ such that $\|a-\tilde a\|_2,\|b-\tilde b\|_2,\|c-\tilde c\|_2\le\frac{1}{4}$, and
    \begin{align*}
        &\frac{1}{n}\sum_{i=1}^n|a^\top X_{i,\tau_0}b^\top X_{i,\tau_0}c^\top X_{i,\tau_0}|\\
        \le&\frac{1}{n}\sum_{i=1}^n|\tilde a^\top X_{i,\tau_0}\tilde b^\top X_{i,\tau_0}\tilde c^\top X_{i,\tau_0}|+\frac{3}{4}\sup_{a,b,c\in\mathcal{B}}\frac{1}{n}\sum_{i=1}^n|a^\top X_{i,\tau_0}b^\top X_{i,\tau_0}c^\top X_{i,\tau_0}|.
    \end{align*}
    Taking supremum over $a,b,c\in\mathcal{B}$, we get
    \[\sup_{a,b,c\in\mathcal{B}}\frac{1}{n}\sum_{i=1}^n|a^\top X_{i,\tau_0}b^\top X_{i,\tau_0}c^\top X_{i,\tau_0}|\le4\max_{a,b,c\in\mathcal{N}}\frac{1}{n}\sum_{i=1}^n|a^\top X_{i,\tau_0}b^\top X_{i,\tau_0}c^\top X_{i,\tau_0}|.\]
    By Theorem 3.4 in \cite{kuchibhotla2022moving}, we have with probability at least $1-\delta$,
    \begin{align*}
        &\max_{a,b,c\in\mathcal{N}}\frac{1}{n}\sum_{i=1}^n\bigg\{|a^\top X_{i,\tau_0}b^\top X_{i,\tau_0}c^\top X_{i,\tau_0}|-\E|a^\top X_{i,\tau_0}b^\top X_{i,\tau_0}c^\top X_{i,\tau_0}|\bigg\}\lesssim\sqrt{\frac{s_0+\log\frac{1}{\delta}}{n}}+\frac{(s_0\log n+\log\frac{n}{\delta})^{\frac{3}{2}}}{n}.
    \end{align*}
    Then
    \begin{align*}
        &\sup_{a,b,c\in\mathcal{B}}\frac{1}{n}\sum_{i=1}^n\abs{a^\top X_{i,\tau_0} b^\top X_{i,\tau_0} c^\top X_{i,\tau_0}}\\
        \le&4\max_{a,b,c\in\mathcal{N}}\frac{1}{n}\sum_{i=1}^n|a^\top X_{i,\tau_0}b^\top X_{i,\tau_0}c^\top X_{i,\tau_0}|\\
        \le&4\max_{a,b,c\in\mathcal{N}}\E|a^\top X_{\tau_0} b^\top X_{\tau_0} c^\top X_{\tau_0}|+\max_{a,b,c\in\mathcal{N}}\frac{4}{n}\sum_{i=1}^n\bigg\{|a^\top X_{i,\tau_0}b^\top X_{i,\tau_0}c^\top X_{i,\tau_0}|-\E|a^\top X_{i,\tau_0}b^\top X_{i,\tau_0}c^\top X_{i,\tau_0}|\bigg\}\\
        \lesssim&1+\sqrt{\frac{s_0+\log\frac{1}{\delta}}{n}}+\frac{(s_0\log n+\log\frac{n}{\delta})^{\frac{3}{2}}}{n}.
    \end{align*}
\end{proof}

\begin{proof}[Proof of Theorem \ref{thm_Abeta}]
    Given $\tau_0$, we start by proving $\hat{\bm{\beta}}_{\tau_0}$ is consistent for $\bm{\beta}_{0,\tau_0}$, where
    \begin{equation}\label{eq_beta_hat}
        \hat{\bm{\beta}}_{\tau_0}=\argmax_{\bm{\beta}_{\tau_0}\in\R^{|\tau_0|}}\hat\E l(\tau_0,\bm{\beta}_{\tau_0}|X,Y),\quad \bm{\beta}_{0,\tau_0}=\argmax_{\bm{\beta}_{\tau_0}\in\R^{|\tau_0|}}\E l(\tau_0,\bm{\beta}_{\tau_0}|X,Y),
    \end{equation}
    \[l(\tau_0,\bm{\beta}_{\tau_0}|X,Y)=Y\log\frac{\eta(X_{\tau_0}^\top\bm{\beta}_{\tau_0})}{1-\eta(X_{\tau_0}^\top\bm{\beta}_{\tau_0})}+\log(1-\eta(X_{\tau_0}^\top\bm{\beta}_{\tau_0})),\quad \eta(\cdot)=g^{-1}(\cdot).\]
    Note that
    \[\nabla l(\tau_0,\bm{\beta}_{\tau_0}|X,Y)\overset{\triangle}{=}\frac{\partial}{\partial{\bm{\beta}_{\tau_0}}}l(\tau_0,\bm{\beta}_{\tau_0}|X,Y)=\frac{\eta'(X_{\tau_0}^\top\bm{\beta}_{\tau_0})}{\eta(X_{\tau_0}^\top\bm{\beta}_{\tau_0})}YX_{\tau_0}+\frac{\eta'(X_{\tau_0}^\top\bm{\beta}_{\tau_0})}{1-\eta(X_{\tau_0}^\top\bm{\beta}_{\tau_0})}(Y-1)X_{\tau_0},\]
    \begin{align*}
        \nabla^2 l(\tau_0,\bm{\beta}_{\tau_0}|X,Y)\overset{\triangle}{=}&\frac{\partial^2}{\partial\bm{\beta}_{\tau_0}\partial\bm{\beta}^\top_{\tau_0}}l(\tau_0,\bm{\beta}_{\tau_0}|X,Y)\\
        =&\bigg\{\frac{\eta''(X_{\tau_0}^\top\bm{\beta}_{\tau_0})}{\eta(X_{\tau_0}^\top\bm{\beta}_{\tau_0})}-\bigg(\frac{\eta'(X_{\tau_0}^\top\bm{\beta}_{\tau_0})}{\eta(X_{\tau_0}^\top\bm{\beta}_{\tau_0})}\bigg)^2\bigg\}YX_{\tau_0}X_{\tau_0}^\top\\
        &+\bigg\{\frac{\eta''(X_{\tau_0}^\top\bm{\beta}_{\tau_0})}{1-\eta(X_{\tau_0}^\top\bm{\beta}_{\tau_0})}+\bigg(\frac{\eta'(X_{\tau_0}^\top\bm{\beta}_{\tau_0})}{1-\eta(X_{\tau_0}^\top\bm{\beta}_{\tau_0})}\bigg)^2\bigg\}(Y-1)X_{\tau_0}X_{\tau_0}^\top.
    \end{align*}
    Denote 
    \[h_1(z)\overset{\triangle}{=}\frac{\eta''(z)}{\eta(z)}-\bigg(\frac{\eta'(z)}{\eta(z)}\bigg)^2,\quad h_0(z)\overset{\triangle}{=}\frac{\eta''(z)}{1-\eta(z)}+\bigg(\frac{\eta'(z)}{1-\eta(z)}\bigg)^2,\]
    we assume
    \begin{equation}\label{eq_loglike_condition}
        \bigg\|\frac{\eta'}{\eta}\bigg\|_\infty+\bigg\|\frac{\eta'}{1-\eta}\bigg\|_\infty+\|h_1\|_\infty+\|h_0\|_\infty\lesssim 1,\quad h_1<0<h_0,
    \end{equation}
    which implies $\|a^\top\nabla l(\bm{\beta}_{\tau_0};X_{\tau_0},Y)\|_{\psi_2}+\|a^\top\nabla^2l(\bm\beta_{\tau_0};X_{\tau_0},Y)b\|_{\psi_1}\lesssim 1$ and $l$ is concave in $\bm{\beta}_{\tau_0}$. We also suppose $h_1$ and $h_0$ to be Lipschitz. In the rest of the proof, we omit the arguments $(\tau_0,X_{\tau_0},Y)$ and abbreviate $l(\tau_0,\bm{\beta}_{\tau_0}|X,Y)$ to $l(\bm{\beta}_{\tau_0})$ when there is no confusion. For any $\bm{\beta}_{\tau_0}$ such that $\bm\Delta=\bm{\beta}_{\tau_0}-\bm{\beta}_{0,\tau_0}$ satisfies $\|\bm\Delta\|_2=c\sqrt{\frac{s}{n}}$, we have for some $\tilde{\bm\beta}_{\tau_0}$ between $\bm\beta_{\tau_0}$ and $\bm\beta_{0,\tau_0}$,
    \begin{equation}\label{eq_likelihood_comparison}
        \begin{aligned}
            &\hat \E l(\bm{\beta}_{\tau_0})-l(\bm{\beta}_{0,\tau_0})\\
        =&\hat\E\nabla^\top l(\bm{\beta}_{0,\tau_0})\bm\Delta+\frac{1}{2}\bm\Delta^\top\hat\E\nabla^2l(\tilde{\bm{\beta}}_{\tau_0})\bm\Delta\\
        \le&\underbrace{\|\hat \E\nabla^\top l(\bm{\beta}_{0,\tau_0})\|_2c\sqrt{\frac{s}{n}}}_{T_1}-\underbrace{\frac{1}{2}\lambda_{\min}\big(-\hat\E\nabla^2l(\bm{\beta}_{0,\tau_0})\big)c^2\frac{s}{n}}_{T_2}+\underbrace{\frac{1}{2}\bm\Delta^\top\hat\E\big(\nabla^2l(\tilde{\bm{\beta}}_{\tau_0})-\nabla^2l(\bm{\beta}_{0,\tau_0})\big)\bm\Delta}_{T_3}.
        \end{aligned}
    \end{equation}
    Since $\nabla l(\bm\beta_{0,\tau_0})$ is sub-Gaussian and centered, it follows from Lemma \ref{lem_gradient} that with high probability,
    \[T_1\lesssim c\frac{s}{n}.\]
    By Lemma \ref{lem_hessian}, $T_2$ can be controlled with high probability that
    \[T_2\gtrsim c^2\frac{s}{n}.\]
    For $T_3$, with high probability,
    \begin{align*}
        &\sup_{\|\bm\Delta\|_2=c\sqrt{\frac{s}{n}}}\bm\Delta^\top\hat\E\big(\nabla^2l(\tilde{\bm{\beta}}_{\tau_0})-\nabla^2l(\bm{\beta}_{0,\tau_0})\big)\bm\Delta\\
        =&\sup_{\|\bm\Delta\|_2=c\sqrt{\frac{s}{n}}}\hat\E\bigg\{Y\bigg(h_1(X_{\tau_0}^\top\tilde{\bm\beta}_{\tau_0})-h_1(X_{\tau_0}^\top\bm\beta_{0,\tau_0})\bigg)+(Y-1)\bigg(h_0(X_{\tau_0}^\top\tilde{\bm\beta}_{\tau_0})-h_0(X_{\tau_0}^\top\bm\beta_{0,\tau_0})\bigg)\bigg\}\big(X_{\tau_0}^\top\bm\Delta\big)^2\\
        \lesssim&\sup_{\|\bm\Delta\|_2=c\sqrt{\frac{s}{n}}}\hat\E|X_{\tau_0}^\top\bm\Delta|^3\\
        \overset{\text{Lemma }\ref{lem_tensor}}{\lesssim}&c^3\frac{s^{3/2}}{n^{3/2}}.
    \end{align*}
    Therefore, with high probability,
    \[\hat\E l(\bm\beta_{\tau_0})-l(\bm\beta_{0,\tau_0})\lesssim \bigg(c-c^2+c^3\sqrt{\frac{s}{n}}\bigg)\frac{s}{n},\quad\forall\bm\beta_{\tau_0}\text{ s.t. }\|\bm\beta_{\tau_0}-\bm\beta_{0,\tau_0}\|_2=c\sqrt{\frac{s}{n}}.\]
    Since $n\gg s$, choosing $c\asymp 1$ large enough ensures that with high probability,
    \[\hat\E l(\bm\beta_{\tau_0})<\hat\E l(\bm\beta_{0,\tau_0}),\quad\forall\bm\beta_{\tau_0}\text{ s.t. }\|\bm\beta_{\tau_0}-\bm\beta_{0,\tau_0}\|_2=c\sqrt{\frac{s}{n}}.\]
    Since $l$ is concave, it follows that
    \[\|\hat{\bm\beta}_{\tau_0}-\bm\beta_{0,\tau_0}\|_2=O_P\bigg(\sqrt{\frac{s}{n}}\bigg).\]

    In the remaining proof, we abbreviate $D(\tau_0),r(\tau_0)$ to $D,r$, respectively. Then we study the asymptotic distribution of $D\hat{\bm\beta}_{\tau_0}$. To this end, we utilize the first-order optimality condition of \eqref{eq_beta_hat},
    \begin{equation}\label{eq_first_order}
        \begin{aligned}
        0=&\hat\E\nabla l(\hat{\bm\beta}_{\tau_0})\\
        =&\hat\E\nabla l(\hat{\bm\beta}_{\tau_0})-\hat\E\nabla l(\bm\beta_{0,\tau_0})+(\hat\E-\E)\nabla l(\bm\beta_{0,\tau_0})\\
        =&\hat\E\nabla^2l(\bm\beta_{0,\tau_0})(\hat{\bm\beta}_{\tau_0}-\bm\beta_{0,\tau_0})+R_1+(\hat\E-\E)\nabla l(\bm\beta_{0,\tau_0})\\
        =&\{\E\nabla^2l(\bm\beta_{0,\tau_0})\}(\hat{\bm\beta}_{\tau_0}-\bm\beta_{0,\tau_0})+\{(\hat\E-\E)\nabla^2l(\bm\beta_{0,\tau_0})\}(\hat{\bm\beta}_{\tau_0}-\bm\beta_{0,\tau_0})+R_1+(\hat\E-\E)\nabla l(\bm\beta_{0,\tau_0}).
    \end{aligned}
    \end{equation} 
    Therefore
    \begin{align*}
        D\hat{\bm\beta}_{\tau_0}-D\bm\beta_{0,\tau_0}=-DH^{-1}(\hat\E-\E)\nabla l(\bm\beta_{0,\tau_0})-\underbrace{DH^{-1}\{(\hat\E-\E)\nabla^2l(\bm\beta_{0,\tau_0})\}(\hat{\bm\beta}_{\tau_0}-\bm\beta_{0,\tau_0})}_{T_4}-\underbrace{DH^{-1}R_1}_{T_5}.
    \end{align*}
    \[\|T_4\|_2\le\|DH^{-1}(\hat\E-\E)\nabla^2l(\bm\beta_{0,\tau_0})\|_{\rm sp}\|\hat{\bm\beta}_{\tau_0}-\bm\beta_{0,\tau_0}\|_2\overset{\text{Lemma }\ref{lem_hessian}}{=}O_P\bigg(\frac{s}{n}\bigg).\]
    For the Taylor expansion, for any $a\in\R^r$, there exists a vector $\tilde{\bm\beta}_{\tau_0}^a$ between $\hat{\bm\beta}_{\tau_0}$ and $\bm\beta_{0,\tau_0}$ such that
    \begin{align*}
        \|T_5\|_2=&\sup_{a\in\R^r,\|a\|_2\le 1}a^\top DH^{-1}\bigg\{\hat\E\nabla l(\hat{\bm\beta}_{\tau_0})-\hat\E\nabla l(\bm\beta_{0,\tau_0})-\hat\E\nabla^2 l(\bm\beta_{0,\tau_0})(\hat{\bm\beta}_{\tau_0}-\bm\beta_{0,\tau_0})\bigg\}\\
        =&\sup_{a\in\R^r,\|a\|_2\le 1}a^\top DH^{-1}\bigg\{\hat\E\nabla^2l(\tilde{\bm\beta}_{\tau_0}^a)-\hat\E\nabla^2l(\bm\beta_{0,\tau_0})\bigg\}(\hat{\bm\beta}_{\tau_0}-\bm\beta_{0,\tau_0})\\
        =&\sup_{a\in\R^r,\|a\|_2\le 1}a^\top DH^{-1}\hat\E\bigg\{Y\bigg(h_1(X_{\tau_0}^\top\tilde{\bm\beta}_{\tau_0}^a)-h_1(X_{\tau_0}^\top\bm\beta_{0,\tau_0})\bigg)\\
        &\qquad+(Y-1)\bigg(h_0(X_{\tau_0}^\top\tilde{\bm\beta}_{\tau_0}^a)-h_0(X_{\tau_0}^\top\bm\beta_{0,\tau_0})\bigg)\bigg\}X_{\tau_0}X_{\tau_0}^\top(\hat{\bm\beta}_{\tau_0}-\bm\beta_{0,\tau_0})\\
        \lesssim&\sup_{a\in\R^r,\|a\|_2\le 1}\hat\E \big|a^\top DH^{-1}X_{\tau_0}\big|\bigg(X_{\tau_0}^\top(\hat{\bm\beta}_{\tau_0}-\bm\beta_{0,\tau_0})\bigg)^2\\
        \overset{\text{Lemma }\ref{lem_tensor}}{=}&O_P\bigg(\frac{s}{n}\bigg).
    \end{align*}
    Therefore we have
    \[\sqrt{n}(D\hat{\bm\beta}_{\tau_0}-D\bm\beta_{0,\tau_0})=-\sqrt{n}DH^{-1}(\hat\E-\E)\nabla l(\bm\beta_{0,\tau_0})+R_2,\quad\|R_2\|_2=O_P\bigg(\frac{s}{\sqrt{n}}\bigg).\]

    Denote the variance estimator $\hat V$ to be
    \[\hat V=D\hat H^{-1}\widehat{\Cov}(\nabla l(\hat{\bm\beta}_{\tau_0}))\hat H^{-1}D^\top,\]
    it suffices to study the Gaussian approximation of $\sqrt{n}\hat V^{-\frac{1}{2}}D(\hat{\bm\beta}_{\tau_0}-\bm\beta_{0,\tau_0})$. To approach this, we study its error decomposition.
    \begin{align*}
        \sqrt{n}\hat V^{-\frac{1}{2}}D(\hat{\bm\beta}_{\tau_0}-\bm\beta_{0,\tau_0})=&-\sqrt{n}V^{-\frac{1}{2}}DH^{-1}(\hat\E-\E)\nabla l(\bm\beta_{0,\tau_0})\\
        &+\underbrace{\sqrt{n}(V^{-\frac{1}{2}}-\hat V^{-\frac{1}{2}})DH^{-1}(\hat\E-\E)\nabla l(\bm\beta_{0,\tau_0})}_{T_6}+\hat V^{-\frac{1}{2}}R_2.
    \end{align*}
    It suffices to study the spectral norm of $\hat V^{-\frac{1}{2}}-V^{-\frac{1}{2}}$. Since $\hat V^{-\frac{1}{2}}-V^{-\frac{1}{2}}=\hat V^{-\frac{1}{2}}(V^{\frac{1}{2}}-\hat V^{\frac{1}{2}})V^{-\frac{1}{2}}$, we start from $\|\hat V^{\frac{1}{2}}-V^{\frac{1}{2}}\|_{\rm sp}$. It follows from \cite{schmitt1992perturbation} that $\|\hat V^{\frac{1}{2}}-V^{\frac{1}{2}}\|_{\rm sp}\le\|\hat V-V\|_{\rm sp}/(\lambda_{\min}^{\frac{1}{2}}(\hat V)+\lambda_{\min}^{\frac{1}{2}}(V))$. Similar to \eqref{eq_likelihood_comparison}, we know
    \[\|\hat H-H\|_{\rm sp}=O_P\bigg(\sqrt{\frac{s}{n}}\bigg),\quad \|\widehat{\Cov}(\nabla l(\hat{\bm\beta}_{\tau_0}))-\Cov(\nabla l(\bm\beta_{0,\tau_0}))\|_{\rm sp}=O_P\bigg(\sqrt{\frac{s}{n}}\bigg),\]
    thus $\|\hat V-V\|_{\rm sp}=O_P(\sqrt{\frac{s}{n}})$, which implies
    \[\quad \|\hat V^{-\frac{1}{2}}-V^{-\frac{1}{2}}\|_{\rm sp}=O_P\bigg(\sqrt{\frac{s}{n}}\bigg).\]
    Then we have the decomposition
    \[\sqrt{n}\hat V^{-\frac{1}{2}}D(\hat{\bm\beta}_{\tau_0}-\bm\beta_{0,\tau_0})=\underbrace{-\sqrt{n}V^{-\frac{1}{2}}DH^{-1}\hat\E\nabla l(\bm\beta_{0,\tau_0})}_{\hat G}+R_3,\quad\|R_3\|_2=O_P\bigg(\frac{s}{\sqrt{n}}\bigg).\]
    Denote $Z_i= -V^{-\frac{1}{2}}DH^{-1}\nabla l(\tau_0,\bm\beta_{0,\tau_0}|X_{i},Y_i)$, we have $\Cov(Z_i)=I_r$ and $\|Z_i\|_{\psi_2}\lesssim 1$, therefore, for any $j_1,j_2,j_3,j_4\in[r]$, the four-th moment exists, $\E|Z_{i,j_1}Z_{i,j_2}Z_{i,j_3}Z_{i,j_4}|<\infty$. Denote $Z_i^{\otimes3}=Z_i\otimes Z_i\otimes Z_i$ to be tensor in $\R^{r^{\otimes 3}}$, by Corollary 4.10 in \cite{wang2017operator},
\begin{align*}
    \norm{\E Z_i^{\otimes 3}}_{\rm F}\le r\norm{\E Z_i^{\otimes 3}}_{\rm sp}=r\sup_{a,b,c\in\mathcal{B}_s}\E a^\top Z_i b^\top Z_i c^\top Z_i\lesssim r.
\end{align*}
By Lemma 1 in \cite{jin2019short}, we know $\norm{\norm{Z_i}_2}_{\psi_2}\lesssim \sqrt{r}$, then $\E\norm{Z_i}_2^4\lesssim r^2$.
    Then Corollary 1 in \cite{zhilova2022new} implies that for $G\sim N(0,I_r)$, 
    \[\sup_{t>0}|\Prob(\|\hat G\|_2\le t)-\Prob(\|G\|_2\le t)|\lesssim\frac{r^2}{\sqrt{n}}.\]
    Then
    \begin{align*}
        &\sup_{t>0}\Prob(\|\hat G+R_3\|_2\le t)-\Prob(\|G\|_2\le t)\\
        \le&\sup_{t>0}\Prob(\|\hat G\|_2\le t+\delta)+\Prob(\|R_3\|_2\ge \delta)-\Prob(\|G\|_2\le t+\delta)+\Prob(t<\|G\|_2\le t+\delta)\\
        \rightarrow& 0,
    \end{align*}
    where we let $n\rightarrow\infty$ at first and then $\delta\rightarrow 0$. Similarly,
    \[\sup_{t>0}\Prob(\|G\|_2\le t)-\Prob(\|\hat G+R_3\|_2\le t)\rightarrow 0.\]
    Combining pieces concludes the proof.

\end{proof}

\end{document}